\title{Computing the Chromatic Number Using Graph Decompositions via Matrix Rank}
\titlerunning{Chromatic Number via Matrix Rank}
\authorrunning{Bart M.\,P. Jansen and Jesper Nederlof}
\author{Bart M.\,P. Jansen}{Eindhoven University of Technology, Eindhoven, The Netherlands}{b.m.p.jansen@tue.nl}{}{NWO Veni grant ``Frontiers in Parameterized Preprocessing'' and NWO Gravitation grant ``Networks''}
\author{Jesper Nederlof}{Eindhoven University of Technology, Eindhoven, The Netherlands}{j.nederlof@tue.nl}{}{ NWO Veni grant  ``Reducing small instances of complex tasks to large instances of simple ones'' and NWO Gravitation grant ``Networks''}
\subjclass{%
\ccsdesc[500]{Mathematics of computing~Graph algorithms}; 
\ccsdesc[500]{Theory of computation~Parameterized complexity and exact algorithms}}	
\keywords{Parameterized Complexity, Chromatic Number, Graph Decompositions}
\theoremstyle{plain}
\newtheorem{claim}[theorem]{Claim}
\newtheorem{proposition}[theorem]{Proposition}
\newtheorem{observation}[theorem]{Observation}
\newtheorem*{seth}{Strong Exponential Time Hypothesis}
\newcommand{\qSAT}{\textsc{$q$-SAT}\xspace}
\newcommand{\Oh}{\ensuremath{\mathcal{O}}\xspace}
\newcommand{\qColoring}{\textsc{$q$-Coloring}\xspace}
\newcommand{\tw}{\ensuremath{\mathrm{tw}}\xspace}
\newcommand{\cutw}{\ensuremath{\mathrm{ctw}}\xspace}
\newcommand{\pw}{\ensuremath{\mathrm{pw}}\xspace}
\newcommand{\bigmid}{\,\big\vert\,}
\newcommand{\true}{\textsc{true}\xspace}
\newcommand{\false}{\textsc{false}\xspace}
\newenvironment{claimproof}[1][\proofname]{\begin{proof}[#1]\renewcommand{\qedsymbol}{\claimqed}}{\end{proof}\renewcommand{\qedsymbol}{\plainqed}}
\let\plainqed\qedsymbol
\begin{document}

\maketitle
\begin{abstract}
Computing the smallest number~$q$ such that the vertices of a given graph can be properly~$q$-colored is one of the oldest and most fundamental problems in combinatorial optimization. The \textsc{$q$-Coloring} problem has been studied intensively using the framework of parameterized algorithmics, resulting in a very good understanding of the best-possible algorithms for several parameterizations based on the structure of the graph. For example, algorithms are known to solve the problem on graphs of treewidth~$\tw$ in time~$\mathcal{O}^*(q^\tw)$, while a running time of~$\mathcal{O}^*((q-\varepsilon)^\tw)$ is impossible assuming the Strong Exponential Time Hypothesis (SETH). While there is an abundance of work for parameterizations based on decompositions of the graph by \emph{vertex separators}, almost nothing is known about parameterizations based on \emph{edge separators}. We fill this gap by studying \textsc{$q$-Coloring} parameterized by cutwidth, and parameterized by pathwidth in bounded-degree graphs. Our research uncovers interesting new ways to exploit small edge separators.

We present two algorithms for \textsc{$q$-Coloring} parameterized by cutwidth~$\cutw$: a deterministic one that runs in time~$\mathcal{O}^*(2^{\omega \cdot \cutw})$, where~$\omega$ is the matrix multiplication constant, and a randomized one with runtime~$\mathcal{O}^*(2^{\cutw})$. In sharp contrast to earlier work, the running time is \emph{independent} of~$q$. The dependence on cutwidth is optimal: we prove that even \textsc{3-Coloring} cannot be solved in~$\mathcal{O}^*((2-\varepsilon)^{\cutw})$ time assuming SETH. Our algorithms rely on a new rank bound for a matrix that describes compatible colorings. Combined with a simple communication protocol for evaluating a product of two polynomials, this also yields an~$\mathcal{O}^*((\lfloor d/2\rfloor+1)^{\pw})$ time randomized algorithm for \textsc{$q$-Coloring} on graphs of pathwidth~$\pw$ and maximum degree~$d$. Such a runtime was first obtained by Bj\"orklund, but only for graphs with few proper colorings. We also prove that this result is optimal in the sense that no~$\mathcal{O}^*((\lfloor d/2\rfloor+1-\varepsilon)^{\pw})$-time algorithm exists assuming SETH.
\end{abstract}

\section{Introduction}
Graph coloring is one of the most fundamental combinatorial problems, studied already in the 1850s. Countless papers (cf.~\cite{Pardalos1998}) and several monographs~\cite{JensenT1995,Johnson2008,Lewis15} have been devoted to its combinatorial and algorithmic investigation. Since the graph coloring problem is NP-complete even in restricted settings such as planar graphs~\cite{GareyJS76}, considerable effort has been invested in finding polynomial-time approximation algorithms and exact algorithms that beat brute-force search~\cite{BjorklundH08,BjorklundHK09}. 

A systematic study of which characteristics of inputs govern the complexity of the graph coloring problem has been undertaken using the framework of parameterized algorithmics. The aim in this framework is to obtain algorithms whose running time is of the form~$f(k) \cdot n^{\Oh(1)}$, where~$k$ is a parameter that measures the complexity of the instance and is independent of the number of vertices~$n$ in the input graph. Over the past decade, numerous parameters have been employed that quantify the structure of the underlying graph. In several settings, algorithms have been obtained that are \emph{optimal} under the Strong Exponential Time Hypothesis (SETH)~\cite{ImpagliazzoP01,ImpagliazzoPZ01}. For example, it has long been known (cf.~\cite[Theorem 7.9]{CyganFKLMPPS15},\cite{TelleP97}) that testing $q$-colorability on a graph that is provided together with a tree decomposition of width~$k$ can be done in time~$\Oh(q^k \cdot k^{\Oh(1)} \cdot n)$.  Lokshtanov, Marx, and Saurabh~\cite{LokshtanovMS11} proved a matching lower bound: an algorithm running in time~$(q-\varepsilon)^k \cdot n^{\Oh(1)}$ for any~$\varepsilon > 0$ and integer~$q \geq 3$ would contradict SETH. Results are also known for graph coloring parameterized by the vertex cover number~\cite{JaffkeJ17}, pathwidth and the feedback vertex number~\cite{LokshtanovMS11}, cliquewidth~\cite{FominGLS10,GolovachL0Z18,KoblerR03}, twin-cover~\cite{Ganian11}, modular-width~\cite{GajarskyLO13}, and split-matching width~\cite{SaetherT16}. (See~\cite[Fig.~1]{FellowsJR13} for relations between these parameters.)

A survey of these algorithmic results for graph coloring results in the following picture of the complexity landscape: For graph parameters that are defined in terms of the width of decompositions by vertex separators (pathwidth, treewidth, vertex cover number, etc.), one can typically obtain a running time\footnote{We use~$\Oh^*(f(k))$ as a shorthand for~$f(k) \cdot n^{\Oh(1)}$.} of~$\Oh^*(q^k)$ to test whether a graph that is given together with a decomposition of width~$k$ is $q$-colorable, but assuming (S)ETH there is no algorithm with running time~$\Oh^*(c^k)$ for any constant~$c$ independent of~$q$~\cite[Theorem 11]{JaffkeJ17}.

The complexity of graph coloring parameterized by width measures based on vertex separators is therefore well-understood by now. However, only little attention has been paid to graph decompositions whose width is measured in terms of the number of \emph{edges} in a separator. There is intriguing evidence that separators consisting of few edges (or, equivalently, consisting of a bounded number of bounded-degree vertices) can be algorithmically exploited in nontrivial ways when solving \textsc{$q$-Coloring}. In 2016, Bj\"orklund~\cite{Bjorklund16} presented a fascinating algebraic algorithm that decides $q$-colorability using an algorithmic variation on the Alon-Tarsi theorem~\cite{AlonT92}. Given a graph~$G$ of maximum degree~$d$, a path decomposition of width~$k$, and integers~$q$ and~$s$, his algorithm runs in time~$(\lfloor d / 2 \rfloor + 1)^k n^{\Oh(1)} \cdot s$. If the graph is not $q$-colorable it always outputs \textsc{no}. If the graph has at most~$s$ proper $q$-colorings, then it outputs \textsc{yes} with constant probability. Hence when~$q \geq (\lfloor d / 2 \rfloor + 1)$ and~$s$ is small, it improves over the standard~$\Oh^*(q^k)$-time dynamic program by exploiting the bounded-degree vertex separators encoded in the path decomposition. However, the dependence of the running time on the number of proper $q$-colorings in the graph is very undesirable, as that number may be exponentially large in~$n$.

Bj\"orklund's algorithm hints at the fact that graph decompositions whose width is governed by the number of \emph{edges} in a separator may yield an algorithmic advantage over existing approaches. In this work, we therefore perform a deeper investigation of how decompositions by small edge separators can be exploited when solving \textsc{$q$-Coloring}. By leveraging a new rank upper bound for a matrix that describes the compatibility of colorings of subgraphs on two sides of a small edge separator, we obtain a number of novel algorithmic results. In particular, we show how to eliminate dependence on the number~$s$ of proper colorings.

\subparagraph{Our results} 
We present efficient algorithms for \textsc{$q$-Coloring} parameterized by the width of various types of graph decompositions by small edge separators. Our first results are phrased in terms of the graph parameter \emph{cutwidth}. A decomposition in this case corresponds to a linear ordering of the vertices; the cutwidth of this ordering is given by the maximum number of edges that connect a vertex in a prefix of the ordering to a vertex in the complement (see Section~\ref{sec:definitions} for formal definitions). Cutwidth is one of the classic graph layout parameters (cf.~\cite{DiazPS02}). It takes larger values than treewidth~\cite{KorachS93}, and has been the subject of frequent study~\cite{GiannopoulouPRT16,ThilikosSB05,ThilikosSB05a}.

Informally speaking, we prove that interactions of partial solutions on low-cutwidth graphs are much simpler than interactions of partial solutions on low-pathwidth graphs. The rank-based approach developed in earlier work~\cite{BodlaenderCKN15,CyganKN13,FominLPS16} can be used by setting up matrices whose rank determines the complexity of these interactions in low-cutwidth graphs. These are different from the matrices associated to partial solutions in low-pathwidth graphs, and admit better rank bounds. This is exploited by two different algorithms: a deterministic algorithm that employs fast matrix multiplication and therefore has the matrix-multiplication constant~$\omega$ in its running time, and a faster randomized Monte Carlo algorithm.

\begin{theorem}\label{thm:cw}
There is a deterministic algorithm that, for any~$q$, solves \textsc{$q$-Coloring} on a graph~$G$ with a given linear layout of cutwidth~$\cutw$ in~$\Oh^*(2^{\omega\cdot \cutw})$ time, where $\omega\leq2.373$ is the matrix multiplication constant.
\end{theorem}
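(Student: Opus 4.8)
The plan is to set up a dynamic-programming algorithm over the prefixes of the given linear layout, where the table indexed at a cut stores, for each way of coloring the at most~$\cutw$ endpoints of the edges crossing that cut, whether this boundary coloring extends to a proper coloring of the prefix. The naive such table has~$q^{\cutw}$ entries, which is what we want to avoid; instead, following the rank-based approach of~\cite{BodlaenderCKN15,CyganKN13,FominLPS16}, I would work with a \emph{reduced} representation of this table whose size is bounded by the rank of a suitable ``compatibility matrix''~$M$. The rows and columns of~$M$ are indexed by colorings of the~$\cutw$-vertex boundary (really by the coloring restricted to the crossing edges), and the entry~$M[\phi,\psi]$ is~$1$ precisely when~$\phi$ and~$\psi$ assign different colors to the two endpoints of every crossing edge — i.e., when they are ``compatible'' across the cut. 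The crucial ingredient, which I expect to be the main obstacle and which should be proven as a separate lemma before this theorem, is a rank bound showing~$\operatorname{rank}(M) \le 2^{\cutw}$ (over a suitable field, e.g.\ $\mathbb{Q}$ or~$\mathbb{F}_2$); morally this is because compatibility on a single edge $\{u,v\}$ is the complement of an equality predicate, whose associated matrix has low rank, and the matrix over~$\cutw$ crossing edges is built as a tensor/Hadamard-type product of these, so the rank multiplies to at most~$2^{\cutw}$.

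Given such a rank bound, I would carry out the argument in the following steps. First, compute a ``representative basis'': a set~$R$ of at most~$2^{\cutw}$ boundary colorings such that the rows of~$M$ indexed by~$R$ span the row space of~$M$; a transformation matrix expressing every row of~$M$ in terms of the rows indexed by~$R$ can be extracted by Gaussian elimination. Second, process the layout from left to right. At each step, moving the cut past one vertex~$v$ corresponds to: (i)~introducing~$v$ into the boundary together with its yet-unprocessed incident edges; (ii)~enforcing properness of all edges of~$v$ going to already-processed vertices (a restriction/filtering of the table); and (iii)~forgetting from the boundary any vertex all of whose incident edges have now been processed. The key point is that each such update is a linear operation on the reduced table, and crucially the ``join'' step needed to combine the contribution of the left side with compatibility across the cut is exactly a multiplication by (a submatrix of)~$M$, which—because we maintain the table only on the representative set~$R$ and the transformation matrix has the dimensions dictated by~$\operatorname{rank}(M) \le 2^{\cutw}$—costs~$\Oh^*(2^{\omega \cdot \cutw})$ time using fast matrix multiplication. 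Accumulating over the~$\Oh(n)$ cuts gives the claimed running time.

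For correctness, I would argue inductively that after processing the first~$i$ vertices, the reduced table faithfully represents (in the sense of the rank-based method: agreeing with the true table on all queries that matter, possibly after a final projection) the set of boundary colorings extendable to a proper coloring of~$G[v_1,\dots,v_i]$; the base case is the empty prefix, and the inductive step uses that introduce/forget/properness-restriction commute appropriately with the basis change, together with the observation that two boundary colorings glue into a proper coloring of the whole graph exactly when they are $M$-compatible at the final cut. At the end, $G$ is $q$-colorable iff the (reduced) table at the last cut is nonzero, which is read off in polynomial time. Two bookkeeping points deserve care but are routine: the boundary at a cut may have fewer than~$\cutw$ vertices and the edge set changes from cut to cut, so the matrices~$M$ and the representative sets must be recomputed per cut (this only adds polynomial overhead per cut); and one must check that ``restricting to proper colorings of the newly closed edges'' is indeed a linear map on the table so that it can be pushed through the reduced representation. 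The genuinely nontrivial content is the rank bound~$\operatorname{rank}(M)\le 2^{\cutw}$ on the compatibility matrix — everything else is an adaptation of the now-standard rank-based dynamic programming machinery.
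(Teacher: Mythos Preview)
Your high-level plan (rank-based dynamic programming over the prefixes of the layout, maintaining a small representative set whose size is governed by the rank of a compatibility matrix at each cut) is exactly the right framework, and matches the paper's approach. The genuine gap is in the rank bound itself. You assert that the $0/1$ compatibility matrix $M$ has rank at most $2^{\cutw}$, arguing that ``compatibility on a single edge is the complement of an equality predicate, whose associated matrix has low rank''. But the complement-of-equality matrix is $J-I$ on~$[q]\times[q]$, which has rank~$q$ over~$\mathbb{Q}$ (and at least~$q-1$ over any field). Hence the Hadamard/tensor argument you sketch yields only the trivial bound~$q^{\cutw}$, not~$2^{\cutw}$, and the whole speedup collapses.

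The paper's key idea, which your proposal is missing, is to replace~$M_H$ by a \emph{different} matrix with the \emph{same support}: set $M'_H[x,y] = \prod_{(u,v)\in E(H)} (x_u - y_v)$. For a single edge this is the rank-$2$ matrix $(x-y)_{x,y\in[q]}$, and an explicit factorization (expand the product and group by outdegree sequences on the left side) shows $\operatorname{rank}(M'_H)\le \prod_{v\in X}(d_{E(H)}(v)+1)\le 2^{|E(H)|}\le 2^{\cutw}$ by AM--GM. Crucially, the representative-set / row-basis reduction only needs that~$M'_H$ has the same \emph{support} as~$M_H$ (nonzero entry iff the two partial colorings are compatible), not that it equals~$M_H$: if a row of~$M'_H$ is a linear combination of basis rows and has a nonzero entry in column~$y$, then some basis row also has a nonzero entry there. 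With this corrected rank bound in hand, the rest of your outline (reduce to a basis of size~$\le 2^{\cutw}$, extend by one vertex, reduce again; read off the answer at the end) goes through and yields the~$\Oh^*(2^{\omega\cdot\cutw})$ running time.
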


\begin{theorem} \label{thm:coloring:randomized}
There is a randomized Monte Carlo algorithm that, for any~$q$, solves \textsc{$q$-Coloring} on a graph~$G$ with a given linear layout of cutwidth~$\cutw$ in~$\Oh^*(2^\cutw)$ time.
\end{theorem}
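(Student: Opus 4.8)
The plan is to express $q$-colourability as the nonvanishing of a polynomial that can be evaluated by a dynamic program whose table size is bounded by $2^{\cutw}$, and to test nonvanishing by a single random evaluation. Fix a field~$\mathbb{F}$ with $|\mathbb{F}|\ge\max(q,2n)$ (returning \true outright if $q\ge n$, so that $q=\Oh(n)$ below), an injective map $g\colon\{1,\dots,q\}\to\mathbb{F}$, and a fresh variable $y_{v,a}$ for each vertex~$v$ and colour~$a$. Orienting every edge from its earlier to its later endpoint in the given layout, write $e=(\ell_e,r_e)$ and set
\[
  Z \;=\; \sum_{c\colon V\to\{1,\dots,q\}}\Bigl(\textstyle\prod_{e\in E}\bigl(g(c(\ell_e))-g(c(r_e))\bigr)\Bigr)\cdot\Bigl(\textstyle\prod_{v\in V}y_{v,c(v)}\Bigr)\,.
\]
Distinct colourings give distinct (hence linearly independent) monomials $\prod_v y_{v,c(v)}$, and the coefficient of such a monomial, $\prod_e(g(c(\ell_e))-g(c(r_e)))$, is nonzero exactly when $c$ is proper (using injectivity of $g$). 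Thus $Z\not\equiv 0$ iff $G$ is $q$-colourable; since $\deg Z=n$, evaluating $Z$ at a uniformly random point of $\mathbb{F}^{V\times\{1,\dots,q\}}$ detects this with probability $\ge 1/2$ and never reports a false positive, and amplification is routine. It remains to evaluate $Z$ at a given point in $\Oh^*(2^{\cutw})$ time.

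I would do this by a left-to-right dynamic program over $v_1,\dots,v_n$. Let $E_i$ be the set of cut edges between $\{v_1,\dots,v_i\}$ and the rest, so $|E_i|\le\cutw$. The crucial observation is that each cut-edge factor $g(c(\ell_e))-g(c(r_e))$ is a sum of a term depending only on $\ell_e$ and a term depending only on $r_e$; selecting a term for every $e\in E_i$ yields a vector $\mathbf b\in\{\mathrm L,\mathrm R\}^{E_i}$, and it suffices to store a table $T_i$ indexed by these $2^{|E_i|}\le 2^{\cutw}$ vectors, with $T_i[\mathbf b]$ equal to the sum over all colourings of $\{v_1,\dots,v_i\}$ of the product of the full factors of the internal edges, the selected left-endpoint terms of the cut edges, and the $y$-values of the vertices. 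Equivalently: the matrix with $(\mathbf a,\mathbf b)$-entry $\prod_{e\in E_i}(g(a_e)-g(b_e))$, indexed by colourings $\mathbf a,\mathbf b$ of the two endpoint-sets of $E_i$, is a Kronecker product of $|E_i|$ rank-two matrices $g\mathbf 1^{\!\top}-\mathbf 1g^{\!\top}$, so its column space has dimension at most $2^{\cutw}$, and the $2^{|E_i|}$ vectors $\mathbf b$ index a spanning set of it. The transition at $v_{i+1}$ is governed by: its back-edges $D\subseteq E_i$ become internal --- a selected $\mathrm R$ on such an edge now contributes the factor $-g(c(v_{i+1}))$, while a selected $\mathrm L$ was already accounted for --- and its forward-edges $U$ become new cut edges with fresh selections. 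Concretely, $T_{i+1}[\mathbf b']$ equals the sum over colours $\alpha$ of $v_{i+1}$ of $y_{v_{i+1},\alpha}\cdot g(\alpha)^{|\{e\in U\,:\,b'_e=\mathrm L\}|}$ times the sum, over selections on $D$, of the corresponding $T_i$-entry weighted by $(-g(\alpha))^{(\#\,\mathrm R\text{ on }D)}$. Done naively this costs $\Oh^*(2^{2\cutw})$; to keep it at $\Oh^*(2^{\cutw})$ I would, for each fixed selection on $E_i\setminus D$, bucket the $T_i$-entries by their number of selected $\mathrm R$'s on $D$ (total $\Oh(2^{|E_i|})$), turning the inner sum into a polynomial of degree $|D|$ in $g(\alpha)$ that is evaluated at all $q=\Oh(n)$ colours, and similarly absorb the $\mathrm L$-count on $U$ through $|U|+1$ precomputed values. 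With $T_0$ the single entry~$1$, after $n$ steps $T_n$ is the single entry equal to the sought evaluation of $Z$.

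The main obstacle is that the $q^n$ colourings might cancel and make $Z$ vanish on a colourable graph; the per-vertex-colour variables $y_{v,a}$ (equivalently, an isolation-lemma weighting) are precisely what rules this out, at the price of a randomised test. The only further point requiring care is realising the transition in $\Oh^*(2^{\cutw})$ rather than $\Oh^*(2^{2\cutw})$ time, which the bucketing above achieves; choosing a suitable prime field, bounding $\deg Z$, and boosting the success probability are all standard.
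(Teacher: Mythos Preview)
Your proposal is correct and rests on the same core observation as the paper: expand the graph polynomial $\prod_e(x_{\ell_e}-x_{r_e})$ via the distributive law so that the interaction across a cut of $k$ edges is captured by a $2^k$-dimensional object (your $\{\mathrm L,\mathrm R\}^{E_i}$ selections, the paper's edge-subset/orientation expansion). The two proofs differ in two technical choices.

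First, you avoid cancellation via Schwartz--Zippel with formal variables $y_{v,a}$ and a single random field evaluation, whereas the paper uses the Isolation Lemma: it assigns integer weights $\omega(v,c)$, computes $P_G(z)=\sum_{\omega(x)=z}\prod_e(x_{\ell_e}-x_{r_e})$ over the integers for every weight $z$, and tests whether some $P_G(z)$ is nonzero. Both are standard; your route is arguably cleaner here (one evaluation, small field arithmetic), while the paper's keeps all arithmetic over~$\mathbb{Z}$.

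Second, and more substantively, you index the DP by the raw bit vector $\mathbf b\in\{\mathrm L,\mathrm R\}^{E_i}$, giving exactly $2^{|E_i|}\le 2^{\cutw}$ states. The paper instead proves the more general Theorem~\ref{thm:qcolpwtech}: it groups the orientations by out-degree vectors and obtains a table of size $\prod_{v\in B_i}(\min\{d_{E_i}(v),d(v)-d_{E_i}(v)\}+1)$ for an arbitrary nice path decomposition, and only then specialises to cutwidth via AM--GM. Your indexing is more direct for Theorem~\ref{thm:coloring:randomized} itself, but loses information that the paper exploits: the degree-vector indexing is what also yields Theorem~\ref{thm:alg:degree} (the $(\lfloor d/2\rfloor+1)^{\pw}$ bound for bounded-degree graphs), which your bit-vector table does not give. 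So your argument is a slightly more elementary proof of the cutwidth result in isolation, while the paper's detour through Theorem~\ref{thm:qcolpwtech} buys the bounded-degree pathwidth result for free.
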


These results show a striking difference between cutwidth and parameterizations based on vertex separators such as treewidth and vertex cover number: we obtain single-exponential running times where the base of the exponent is \emph{independent} of the number of colors~$q$, which (assuming ETH) is impossible even parameterized  by vertex cover~\cite{JaffkeJ17}. The assumption that a decomposition is given in the input is standard in this line of research~\cite{BodlaenderCKN15,CyganNPPRW11,CyganKN13,FominLPS16} and decouples the complexity of \emph{finding} a decomposition from that of \emph{exploiting} a decomposition.

The ideas underlying Theorems~\ref{thm:cw} and~\ref{thm:coloring:randomized} can also be used to eliminate the dependence on the number of proper colorings from Bj\"orklund's algorithm. We prove the following theorem:

\begin{theorem} \label{thm:alg:degree}
There is a randomized Monte Carlo algorithm that, for any~$q$, solves \textsc{$q$-Coloring} on a graph~$G$ with maximum degree $d$ and given path decomposition of width $\pw$ in $\Oh^*((\lfloor d/2\rfloor+1)^\pw)$ time.
\end{theorem}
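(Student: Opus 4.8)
The plan is to combine the algebraic encoding of $q$-colorability used by Bj\"orklund with the rank-based machinery behind Theorem~\ref{thm:coloring:randomized}, replacing the edge separator used there by a careful treatment of a single bag of the path decomposition. Following Bj\"orklund, we encode $q$-colorability by the (non)vanishing of the ``bounded-degree part'' of the graph polynomial $\prod_{uv \in E(G)}(x_u - x_v)$ --- the sum of its monomials in which every variable has degree less than $q$ --- using his identities so that this detects $q$-colorability itself and not merely $q$-choosability, and we add a random substitution so that, by the Schwartz--Zippel lemma, it suffices to evaluate a single random functional $\bigotimes_{v \in V(G)} L_v$ on this polynomial, where $L_v$ kills the powers $x_v^{j}$ with $j \ge q$ and attaches random weights to the surviving ones. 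The point is that this quantity decomposes multiplicatively along the path decomposition: assigning every edge to a bag that contains both its endpoints and sweeping the bags from left to right, the contribution of the vertices of the current bag $B$ lives in $\bigotimes_{v \in B}\mathbb{F}[x_v]/(x_v^q)$, and forgetting a vertex $v$ --- which happens only once all of its edges have been processed --- applies $L_v$ and collapses its tensor factor to a scalar.

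This alone reproduces Bj\"orklund's running time up to the dependence on the number of colorings, since the naive state space has size $\prod_{v \in B}\min(\deg_L(v)+1, q) \le q^{\pw}$, where $\deg_L(v)$ counts the edges at $v$ processed so far. To bring the base down to $\lfloor d/2\rfloor+1$, I would establish a simple \emph{communication protocol for the product of two polynomials}: if $f$ has degree at most $a$ and $g$ has degree at most $b$, then to recover $L(fg)$ for a fixed linear functional $L$ one may transmit either the $a+1$ coefficients of $f$, or the $b+1$ evaluations of the partially applied functional $g'\mapsto L(fg')$ on a basis of the degree-$\le b$ polynomials, whichever is cheaper --- using at most $\min(a,b)+1 \le \lfloor (a+b)/2\rfloor+1$ field elements. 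Applied to each bag vertex $v$ with $a=\deg_L(v)$ and $b=\deg(v)-\deg_L(v)$, this keeps the $v$-factor in a representation of dimension at most $\lfloor \deg(v)/2\rfloor+1 \le \lfloor d/2\rfloor+1$ throughout the sweep, because switching between the two representations, absorbing an edge $(x_v-x_w)$ (which couples the $v$- and $w$-factors), and introducing or forgetting a vertex are all linear maps that change each tensor factor's dimension by at most one. Hence the table has size $(\lfloor d/2\rfloor+1)^{|B|} \le (\lfloor d/2\rfloor+1)^{\pw}$, a single sweep computes the random evaluation of the encoding polynomial in time $\Oh^*((\lfloor d/2\rfloor+1)^{\pw})$, and --- exactly as in Theorems~\ref{thm:cw} and~\ref{thm:coloring:randomized} --- extracting the answer through this low-rank representation of compatible colorings is what eliminates Bj\"orklund's dependence on the number $s$ of proper colorings.

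The step I expect to be the main obstacle is making the algebraic encoding and the protocol mesh. One has to choose the encoding polynomial so that it decomposes multiplicatively over the path decomposition, is faithfully detected by the ``keep degrees $<q$ with random weights'' functional, \emph{and} captures $q$-colorability rather than $q$-choosability; this is where Bj\"orklund's identities enter and where the randomization must be set up carefully enough that one sweep suffices. One must then verify that absorbing an edge $(x_v-x_w)$ remains a cheap linear operation no matter which of the two representations $v$ and $w$ currently use --- the concrete incarnation, for a path decomposition, of the rank bound on the compatibility matrix that drives Theorems~\ref{thm:cw} and~\ref{thm:coloring:randomized}. Finally, obtaining the exact constant $\lfloor d/2\rfloor+1$ (rather than, say, $\lceil d/2\rceil+1$) relies on always charging a vertex the \emph{smaller} of its two representation sizes, which is precisely what the $\min$ in the protocol provides.
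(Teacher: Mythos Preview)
Your plan matches the paper's approach almost exactly: the graph polynomial $\prod_{(u,v)}(x_u-x_v)$, the dynamic program over a nice path decomposition, and the communication protocol ``send either the $a{+}1$ coefficients or the $b{+}1$ partial evaluations, whichever is cheaper'' are precisely what the paper uses to get the per-vertex factor $\min\{d_{E_i}(v),d(v)-d_{E_i}(v)\}+1\le \lfloor d(v)/2\rfloor+1$. So on the structural side you have the right proof.

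The one place where you diverge from the paper is exactly the obstacle you flag: how to make the randomization detect $q$-\emph{colorability}. Your functional ``$L_v$ kills $x_v^{j}$ for $j\ge q$ and puts random weights on the rest'' is, by Schwartz--Zippel, a test for whether \emph{some} monomial of $f_G$ with all degrees below $q$ has nonzero coefficient, which is $q$-choosability (Alon--Tarsi), not $q$-colorability; the ``Bj\"orklund identities'' you appeal to do not fix this without reintroducing a dependence on the number of colorings. The paper sidesteps the whole issue: it never truncates to degrees $<q$ and never invokes Schwartz--Zippel. Instead it draws random integer weights $\omega(v,c)$ on vertex/color pairs, defines $P_G(z)=\sum_{x\in[q]^n,\ \omega(x)=z} f_G(x)$, and uses the Isolation Lemma so that with probability $\ge 1/2$ there is a weight $z$ for which a \emph{unique} proper coloring contributes, making $P_G(z)$ a single nonzero product. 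Then the ``forget $v$'' step is literally summing over $x_v\in[q]$ (with the weight offset), which meshes perfectly with the protocol: Alice's two strategies become the $L_i$-side (store the out-degree profile $d$) and the $R_i$-side (store the exponent $e$ for the yet-unintroduced edges), and the paper's table $T_i^z[d,e]$ is exactly the tensor you describe, indexed by these two choices. If you want a Schwartz--Zippel variant that also works, the right functional is $L_v(x_v^{j})=\sum_{c\in[q]} c^{j}\, r_{v,c}$ for fresh random $r_{v,c}$ over a large field, which computes $\sum_{x\in[q]^n} f_G(x)\prod_v r_{v,x_v}$; this is a nonzero polynomial in the $r$'s precisely when a proper $q$-coloring exists, and it does \emph{not} kill high powers of $x_v$.
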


Our approach uses the first step of the proof of the Alon-Tarsi theorem (i.e.~rewrite the problem into evaluating the graph polynomial) and also relates colorability to certain orientations, but deviates from the previous algorithm otherwise: to evaluate the appropriate graph polynomial we extend a fairly simple communication-efficient protocol to evaluate a product of two polynomials.

We also prove that the randomized algorithms of Theorem~\ref{thm:coloring:randomized} and Theorem~\ref{thm:alg:degree} are conditionally \emph{optimal}, even when restricted to special cases:
\begin{theorem}\label{thm:3collow}
	Assuming SETH, there is no~$\varepsilon > 0$ such that \textsc{$3$-Coloring} on a planar graph~$G$ given along with a linear layout of cutwidth~$\cutw$ can be solved in time~$\Oh^*((2-\varepsilon)^{\cutw})$.
\end{theorem}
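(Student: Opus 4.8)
The plan is to prove the lower bound via a polynomial-time reduction from \qSAT to planar \textsc{$3$-Coloring} equipped with a linear layout, in which the cutwidth of the produced layout is $n + c_q$, where $n$ is the number of variables of the input formula and $c_q$ is a constant depending only on $q$. Such an \emph{additive} overhead is exactly what is needed: by Theorem~\ref{thm:coloring:randomized} the problem admits an $\Oh^*(2^{\cutw})$-time algorithm, so only an additive $o(n)$ slack in the parameter can yield a tight base-$2$ lower bound, whereas a multiplicative-constant overhead would be too lossy. Concretely, suppose towards a contradiction that for some $\varepsilon>0$ there were an algorithm solving planar \textsc{$3$-Coloring} with a given layout of cutwidth $\cutw$ in time $\Oh^*((2-\varepsilon)^{\cutw})$. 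Using SETH, fix $q\ge 3$ such that \qSAT on $n$ variables cannot be solved in time $\Oh^*((2-\varepsilon/2)^n)$. Applying the reduction to an $n$-variable \qSAT instance and then the assumed colouring algorithm to the resulting graph with its layout would decide satisfiability in time $\Oh^*((2-\varepsilon)^{n+c_q})=\Oh^*((2-\varepsilon/2)^n)$, contradicting the choice of $q$; hence no such $\varepsilon$ exists.

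The reduction follows the familiar skeleton of SETH lower bounds --- one \emph{wire} per variable, one \emph{clause gadget} per clause, and \emph{crossover gadgets} to restore planarity --- but every component is engineered to fit into a layout of cutwidth $n+\Oh_q(1)$. The wire $W_i$ of variable $x_i$ is a path-like subgraph that contributes essentially a single edge to every cut while transporting one value from $\{0,1,2\}$ unchanged along its whole length; one of the three colours is read as ``$x_i$ is true'' and the other two as ``$x_i$ is false'', consistently across all clause gadgets that inspect $W_i$. Each clause $C_j$ is realised by a constant-size planar $3$-colouring gadget whose three contacts are joined, through auxiliary paths, to the wires of the variables of $C_j$; it is designed to be $3$-colourable precisely when some literal of $C_j$ is satisfied, and to remain $3$-colourable under arbitrary colours on the contacts whose literal is not satisfied. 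Since a clause must reach wires that are far apart in the layout, each place where an auxiliary path would cross a wire is replaced by the classical constant-size planar crossover gadget for $3$-colouring~\cite{GareyJS76}. Correctness is then routine: from a satisfying assignment one colours each wire with the corresponding value and extends over all gadgets, and from a proper $3$-colouring of the whole graph one reads off an assignment by the colour rule and observes that every clause gadget, being properly coloured, certifies that its clause is satisfied.

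The technical heart is the layout and the verification that its cutwidth is $n+\Oh_q(1)$. The wires are laid out ``in parallel'', their vertices interleaved column by column, so that a cut between two consecutive columns is crossed by exactly one current edge of each of the $n$ wires, contributing $n$. The clauses are handled in order, one per block of columns, so that across any single cut at most one clause gadget together with its auxiliary paths and crossover gadgets is ``open''; within such a block the at most three auxiliary paths are routed to their target wires one after another, and the crossover gadgets traversed by one auxiliary path are spread over distinct columns, so that the block contributes only $\Oh_q(1)$ additional crossing edges. The small subgraphs that keep the wires' values constant are staggered in the same way, so that no two of them are open at the same cut. Altogether every cut is crossed by at most $n+\Oh_q(1)$ edges, i.e.\ $\cutw(G)\le n+c_q$, and planarity is preserved because each gadget is planar and is attached within a bounded window of the planar drawing.

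The main obstacle I expect is precisely this cutwidth budget. Transporting each variable's value consistently along the entire layout forces some gadgetry on every wire, and near such gadgetry a wire necessarily contributes more than a single crossing edge; keeping the cumulative per-cut overhead additively $\Oh_q(1)$, rather than letting it accumulate to $\Theta(n)$, requires all of these contributions --- from the wire gadgets, the clause gadgets, the auxiliary paths, and the many crossover gadgets that planarity forces --- to be introduced strictly sequentially and spread out across the layout. Designing a wire gadget, a clause gadget, and a routing scheme for which each individual piece contributes only $\Oh_q(1)$ to the local cut, while the whole graph stays planar, is the delicate part; the remaining ingredients --- the global reduction and the appeal to SETH --- are routine bookkeeping.
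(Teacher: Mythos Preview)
Your plan is essentially the paper's approach: reduce CNF-satisfiability on $n$ variables to planar \textsc{$3$-Coloring} with a layout of cutwidth $n+\Oh(1)$, using parallel variable wires, sequentially processed clause gadgets, and the Garey--Johnson--Stockmeyer crossover for planarity, then derive the contradiction with SETH.

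Two differences are worth flagging. First, the paper reduces from \emph{general} CNF-SAT rather than \qSAT; its clause gadget is a \emph{path} of length $2|C_j|$ (not constant-size, but still contributing only $\Oh(1)$ to any cut), so the additive constant in the cutwidth bound is absolute and does not depend on the clause width. Your \qSAT route is fine for the SETH conclusion, but your description slips into ``three contacts'' and ``at most three auxiliary paths'' where you mean ``up to $q$''. Second, and more substantively, the paper does not fix colour semantics by fiat. It first builds a \textsc{List-$3$-Coloring} instance in which every variable-path vertex carries the list $\{2,3\}$, and then eliminates the lists by threading three \emph{chains of triangles} alongside the wires that act as a distributed palette: terminal vertices of chain $c$ repeat colour $c$ and are attached locally to the vertices whose lists forbid $c$. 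This is exactly the device that lets one say ``colour $2$ means \textsc{true}'' without creating a high-degree palette vertex (which would by itself blow up the cutwidth), and making these three extra chains coexist with the variable paths and clause paths while keeping every cut at $n+\Oh(1)$ is where much of the technical work lies. Your sketch presupposes such a mechanism (``one of the three colours is read as true \ldots\ consistently across all clause gadgets'') but does not supply it; this is the non-obvious ingredient you would still have to design.
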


\begin{theorem}\label{thm:degreecollow}
Let~$d \geq 5$ be an odd integer and let~$q_d := \lfloor d /2 \rfloor + 1$. Assuming SETH, there is no~$\varepsilon > 0$ such that \textsc{$q_d$-Coloring} on a graph of maximum degree~$d$ given along with a path decomposition of pathwidth~$\pw$ can be solved in time~$\Oh^*((\lfloor d/2 \rfloor + 1 - \varepsilon)^{\pw})$.
\end{theorem}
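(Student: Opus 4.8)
The plan is to obtain the lower bound through a polynomial-time reduction from $q'$-\textsc{SAT}, following the standard template for SETH-based pathwidth lower bounds~\cite{LokshtanovMS11}, where $q'$ is a constant that is only fixed --- as a function of $d$ and $\varepsilon$ --- at the very end. Given a $q'$-CNF formula $\phi$ on $n$ variables with $m = \mathrm{poly}(n)$ clauses, I would build in polynomial time a graph $G_\phi$ of maximum degree $d$, together with an explicit path decomposition of $G_\phi$ of width $\pw$, such that $G_\phi$ is $q_d$-colorable if and only if $\phi$ is satisfiable, and such that
\[
  \pw \;\le\; (1+\gamma)\cdot \frac{n}{\log_2 q_d}\; +\; \Oh_{d,q'}(1),
\]
where $\gamma = \gamma(d,\varepsilon) > 0$ can be made arbitrarily small. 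Running a hypothetical $\Oh^*((q_d-\varepsilon)^{\pw})$-time algorithm on $G_\phi$ and its decomposition then decides satisfiability of $\phi$ in time $\Oh^*(2^{\pw\log_2(q_d-\varepsilon)})$; since $\log_2(q_d-\varepsilon) < \log_2 q_d$, picking $\gamma$ small enough makes the exponent at most $(1-\delta)n$ for a fixed $\delta = \delta(d,\varepsilon) > 0$, so that $q'$-\textsc{SAT} is solved in $\Oh^*(2^{(1-\delta)n})$ time, contradicting SETH once $q'$ is chosen large enough for this $\delta$. This is precisely the regime where the algorithm of Theorem~\ref{thm:alg:degree} runs in time $\Oh^*(q_d^{\pw})$, so the two bounds meet.

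As in the classical construction, $G_\phi$ has a grid-like shape with one horizontal \emph{row} per group of variables and one vertical \emph{column} per clause: each row carries, in the colors of its vertices, an encoding of a partial assignment to its group, this state is propagated unchanged from one column to the next, and the clause gadget in each column inspects the states of the rows it depends on. The crucial and most delicate part --- and the step I expect to be the main obstacle --- is realizing a row within the degree bound $d$ while keeping an information rate of essentially $\log_2 q_d$ bits per unit of pathwidth; weaker encodings would only forbid a base of $q_d-1$ (a bare path forces consecutive vertices to differ, giving rate $\log_2(q_d-1)$) or fail altogether (encoding one $q_d$-ary symbol per row-vertex gives only rate $\lfloor\log_2 q_d\rfloor$, which is strictly less than $\log_2 q_d$ unless $q_d$ is a power of two, and already $(q_d-\varepsilon)^{n/\lfloor\log_2 q_d\rfloor}$ can exceed $2^n$). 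Reaching rate $\log_2 q_d$ forces three ingredients: (i) a row-vertex must genuinely range over all $q_d$ colors, which is possible because $q_d \ge 3$ (so two vertices sharing a common ``connector'' neighbor can take any of the $q_d^2$ ordered color pairs, leaving the $q_d$ colors free while keeping incidences available for reading); (ii) the rounding loss from $q_d$ not being a power of two is amortized by grouping $\lfloor k\log_2 q_d\rfloor$ variables into one row whose state at a column is carried by a block of $k$ consecutive vertices, with $q_d^k \ge 2^{\lfloor k\log_2 q_d\rfloor}$ joint colorings, where the block length $k = k(d,\varepsilon)$ is chosen large to push $\gamma$ below the available slack; (iii) the state must be propagated between consecutive columns by bounded-degree equality gadgets realizable with $q_d$ colors (for instance, inserting a row-vertex into a clique $K_{q_d}$ and attaching a fresh vertex to all but one clique member forces the two to receive equal colors). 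Verifying that every vertex incident to path edges, equality gadgets and clause-reading edges stays within degree $d$ --- which additionally needs rows to be long enough that each block is read by only $\Oh(1)$ clause gadgets, and transitions between columns to be performed one row at a time so that column boundaries do not double the bag size --- is exactly the bookkeeping that pins down the hypotheses that $d$ be odd and at least $5$.

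On the clause side, each clause $C_j$ (of width at most the constant $q'$) is checked by a bounded-degree sub-gadget in column $j$: an $\mathrm{OR}$-chain over its literals, whose $\ell$-th link combines the running partial-$\mathrm{OR}$ bit with the truth value of the $\ell$-th literal --- extracted by a fixed finite reading gadget from the block of the relevant row at column $j$ --- and whose final bit is forced to \true. With $q'$ constant, each such sub-gadget has $\Oh_{d,q'}(1)$ vertices and bounded degree, and correctness of the reduction is then the routine two-directional check that satisfying assignments of $\phi$ correspond to proper $q_d$-colorings of $G_\phi$. Finally, scanning $G_\phi$ column by column and, within a column, row by row yields a path decomposition whose bags contain the current block of each of the $g = \lceil n / \lfloor k\log_2 q_d\rfloor \rceil$ rows plus the $\Oh_{d,q'}(1)$ gadget vertices of the current column and the current transition; its width is $g(k+\Oh(1)) + \Oh_{d,q'}(1)$, which for $k$ large is at most $(1+\gamma)\,n/\log_2 q_d + \Oh_{d,q'}(1)$, completing the chain of inequalities above.
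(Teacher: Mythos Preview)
Your overall strategy matches the paper's closely: both follow the Lokshtanov--Marx--Saurabh template, group variables into blocks of size roughly $p\log_2 q_d$ (your $k$ is the paper's $p$), propagate each block's state along $p$ parallel chains of $q_d$-cliques (your ``insert into $K_{q_d}$ and attach a fresh vertex to all but one member'' equality gadget \emph{is} the paper's chain of cliques), and arrive at the same pathwidth $pt+\Oh_{d,q'}(1)$ and the same closing calculation. The substantive divergence is the clause gadget. The paper does not read individual literals or build OR-gates; instead, for each clause it enumerates the at most $(q_d)^{ps}$ tuples of block-colorings that fail to satisfy it and, for each bad tuple, inserts a path-shaped list-coloring gadget (from~\cite{JaffkeJ17}, Lemma~\ref{lem:coloring:pathgadget}) whose distinguished vertices are wired to fresh chain terminals. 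List behavior is simulated inside plain $q_d$-coloring by propagating a $q_d$-clique palette along $q_d$ additional chains and connecting each path-gadget vertex to the palette terminals for its forbidden colors. This makes the degree count immediate: a chain terminal has $2(q_d-1)$ neighbors on its chain plus at most one on a path gadget, giving exactly $d$; a path-gadget vertex has at most $2+(q_d-1)+1\le d$.

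Your OR-chain sketch is plausible but skips a point the paper handles explicitly and that your write-up should address: in plain $q_d$-coloring the color set carries the full $S_{q_d}$ symmetry, so a ``reading gadget'' that decides whether a specific variable is true under the encoding $g_i$ cannot exist without a reference coloring---the set of block-colorings mapping to ``$x_\ell=\true$'' is not invariant under color permutations. You will need a palette mechanism anyway, and once you have one, the paper's ``enumerate bad tuples and forbid each by a path gadget'' is both simpler and makes the degree bookkeeping (and hence the role of the hypotheses $d$ odd, $d\ge 5$) transparent. If you keep the OR-chain route, you should spell out the palette, the reading gadget, and the OR link concretely enough to verify that every vertex stays within degree $d$.
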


These results are obtained by building on the techniques of Lokshtanov et al.~\cite{LokshtanovMS11} that propagate `partial assignments' throughout graphs of small cutwidth or pathwidth.

\subsection*{Organization}
In Section~\ref{sec:definitions} we provide preliminaries.
In Section~\ref{sect:upp} we present algorithms for graph coloring, proving Theorems~\ref{thm:cw},~\ref{thm:coloring:randomized}, and~\ref{thm:alg:degree}.
In Section~\ref{sect:lowmain} we present reductions that show that our randomized algorithms cannot be improved significantly, assuming SETH, proving Theorems~\ref{thm:3collow} and~\ref{thm:degreecollow}.
Finally, we provide some conclusions in Section~\ref{sec:conc}. Due to space restrictions, several proofs had to be moved to the appendix.

\section{Preliminaries}\label{sec:definitions} \label{sec:preliminaries}
We use~$\mathbb{N}$ to denote the natural numbers, including~$0$. For a positive integer~$n$ and a set~$X$ we use~$\binom{X}{n}$ to denote the collection of all subsets of~$X$ of size~$n$. The \emph{power set} of~$X$ is denoted~$2^{X}$. The set~$\{1, \ldots, n\}$ is abbreviated as~$[n]$. The~$\Oh^*$ notation suppresses polynomial factors in the input size~$n$, such that~$\Oh^*(f(k))$ is shorthand for~$\Oh(f(k) n^{\Oh(1)})$. All our logarithms have base two. For sets $S,T$ we denote by $S^T$ the set of vectors indexed by elements of $T$ whose entries are from $S$. If $T=[n]$, we use $S^{n}$ instead of $S^{[n]}$.

We consider finite, simple, and undirected graphs~$G$, consisting of a vertex set~$V(G)$ and edge set~$E(G) \subseteq \binom{V(G)}{2}$. The neighbors of a vertex~$v$ in~$G$ are denoted~$N_G(v)$. The closed neighborhood of~$v$ is~$N_G[v] := N_G(v) \cup \{v\}$. The degree $d(v)$ equals $|N_G(v)|$ and if $X \subseteq E(G)$, then $d_X(v)$ denotes the number of edges of~$X$ incident to $v$. This notation is extended to $d^-(v),d^+(v),d^-_X(v),d^+_X(v)$ for directed graphs in the natural way (e.g.~$d^+_X(v)$ denotes the number of $w$ such that $(v,w) \in X$). For a vertex set~$S \subseteq V(G)$ the open neighborhood is~$N_G(S) := \bigcup_{v \in S} N_G(v) \setminus S$ and the closed neighborhood is~$N_G[S] := N_G(S) \cup S$, while~$G[S]$ denotes the graph induced by $S$.

A $q$-coloring of a graph~$G$ is a function~$f \colon V(G) \to [q]$. A coloring is \emph{proper} if~$f(u) \neq f(v)$ for all edges~$\{u,v\} \in E(G)$. For a fixed integer~$q$, the \textsc{$q$-Coloring} problem asks whether a given graph~$G$ has a proper $q$-coloring. The \qSAT problem asks whether a given Boolean formula, in conjunctive normal form with clauses of size at most~$q$, has a satisfying assignment.

\begin{seth}[\cite{ImpagliazzoP01,ImpagliazzoPZ01}]
For every $\varepsilon > 0$, there is a constant~$q$ such that \qSAT on $n$ variables cannot be solved in time $\Oh^*((2-\varepsilon)^n)$.
\end{seth}

\subparagraph{Cutwidth} 
For an $n$-vertex graph~$G$, a \emph{linear layout} of~$G$ is a linear ordering of its vertex set, given by a bijection~$\pi \colon V(G) \to [n]$. The \emph{cutwidth} of~$G$ with respect to the layout~$\pi$ is:
$$\cutw_\pi(G) = \max_{1 \leq i < n} \bigl | \bigl \{ \{u,v\} \in E(G) \bigmid \pi(u) \leq i \wedge \pi(v) > i \bigr\} \bigr |,$$
and the cutwidth~$\cutw(G)$ of a graph~$G$ is the minimum cutwidth attained by any linear layout. It is well-known (cf.~\cite{Bodlaender98}) that~$\cutw(G) \geq \pw(G) \geq \tw(G)$, where the latter denote the pathwidth and treewidth of~$G$, respectively. An intuitive way to think about cutwidth is to consider the vertices as being placed on a horizontal line in the order dictated by the layout~$\pi$, with edges drawn as $x$-monotone curves. For any position~$i$ we consider the gap between vertex~$\pi^{-1}(i)$ and~$\pi^{-1}(i+1)$, and count the edges that \emph{cross} the gap by having one endpoint at position at most~$i$ and the other at position after~$i$. The cutwidth of a layout is the maximum number of edges crossing any single gap.

\newcommand{\pathdecomp}{\ensuremath{P}}
\subparagraph*{Pathwidth and path decompositions} A \emph{path decomposition} of a graph~$G$ is a path~$\pathdecomp$ in which each node~$x$ has an associated set of vertices~$B_x\subseteq V(G)$ (called a \emph{bag}) such that~$\bigcup_{x \in V(P)} B_x=V(G)$ and the following properties hold:
\begin{enumerate}
	\item For each edge~$\{u,v\}\in E(G)$ there is a node~$x$ in~$\pathdecomp$ such that~$u,v\in B_x$.
	\item If~$v\in B_x\cap B_y$ then~$v\in B_z$ for all nodes~$z$ on the (unique) path from~$x$ to~$y$ in~$\pathdecomp$.
\end{enumerate}
The \emph{width} of~$\pathdecomp$ is the size of the largest bag minus one, and the pathwidth of a graph~$G$ is the minimum width over all possible path decompositions of~$G$. Since our focus here is on dynamic programming over a path decomposition we only mention in passing that the related notion of treewidth can be defined in the same way, except for letting the nodes of the decomposition form a tree instead of a path.

It is common for the presentation of dynamic-programming algorithms to use path- and tree decompositions that are normalized in order to make the description easier to follow. For an overview of tree decompositions and dynamic programming on tree decompositions see e.g.~\cite{BodlaenderK08}. Following~\cite{CyganNPPRW11} we use the following path decompositions:

\begin{definition}[Nice Path Decomposition] \label{def:nicepathdecomp}
	A \emph{nice path decomposition} is a path decomposition where the underlying path of nodes is ordered from left to right (the predecessor of any node is its left neighbor) and in which each bag is of one of the following types:
	\begin{itemize}
		\item \textbf{First (leftmost) bag}: the bag associated with the leftmost node~$x$ is empty,~$B_x=\emptyset$.
		\item \textbf{Introduce vertex bag}: an internal node~$x$ of $\pathdecomp$ with predecessor~$y$ such that~$B_x = B_y \cup \{v\}$ for some $v \notin B_y$. 
		This bag is said to \emph{introduce} $v$.
		\item \textbf{Introduce edge bag}: an internal node~$x$ of $\pathdecomp$ labeled 
		with an edge $\{u,v\} \in E(G)$ with one predecessor~$y$ for which $u,v \in B_x = B_y$. 
		This bag is said to \emph{introduce} $uv$.
		\item \textbf{Forget bag}: an internal node~$x$ of $\pathdecomp$ with one predecessor~$y$ for which $B_x = B_y \setminus \{v\}$ for some $v \in B_y$. This bag is said to \emph{forget} $v$.
		\item \textbf{Last (rightmost) bag}: the bag associated with the rightmost node~$x$ is empty,~$B_x=\emptyset$.
	\end{itemize}
\end{definition}

It is easy to verify that any given path decomposition of pathwidth~$\pw$ can be transformed in time~$|V(G)| \cdot \pw^{\Oh(1)}$ into a nice path decomposition without increasing the width.
Let $B_1,\ldots,B_\ell$ be a nice path decomposition of $G$. We say $B_i$ is \emph{before} $B_j$ if $i\leq j$. We denote $V_i=\bigcup_{j=1}^i B_i$ and let $E_i$ denote the set of edges introduced in bags before $i$.

\section{Upper bounds for Graph Coloring}\label{sect:upp}
\newcommand{\cS}{\ensuremath{\mathcal{S}}}
\newcommand{\poly}{\ensuremath{\mathrm{poly}}}
In this section we outline algorithms for \qColoring that run efficiently when given a graph and either a small-cutwidth layout or a good path decomposition on graphs with small maximum degree. We assume the input graph has no isolated vertices, as they are clearly irrelevant. We start by using the `rank-based approach' as proposed in~\cite{BodlaenderCKN15} to obtain deterministic algorithms, and afterward give a randomized algorithm with substantial speedup. In both approaches the idea is to employ dynamic programming to accumulate needed information about the existence of partial solutions, but use linear-algebraic methods to compress this information. Let us remark in passing that our approaches are robust in the sense that they directly extend to generalizations such as \textsc{$q$-List Coloring} in which for every vertex a set of allowed colors is given.\footnote{In the deterministic approach we simply avoid partial solutions not satisfying these constraints, and in the randomized approach we assign sufficiently large weight to disallowed (vertex,color) combinations.}

A key quantity that determines the amount of information needed after compression in general is the rank of a \emph{partial solutions matrix}. This matrix has its rows and columns indexed by partial solutions (which could be defined in various ways) and an entry is $1$ (or more generally, non-zero) if the two partial solutions combine to a solution. Previously, this method proved to be highly useful for connectivity problems parameterized by treewidth~\cite{BodlaenderCKN15}. For \qColoring parameterized by treewidth, partial solutions can naturally be defined as partial proper colorings of a subgraph whose boundary is formed by some vertex separator. Two partial colorings combine to a proper complete coloring if and only if the two partial colorings agree on the coloring of the separator. Unfortunately, the rank-based approach is not useful here as the  partial solution matrices arising have large rank, as witnessed by induced identity submatrices of dimensions $q^{\tw}$. Indeed, the lower bound under SETH by Lokshtanov, Marx, and Saurabh~\cite{LokshtanovMS11} shows that no algorithm can solve the problem much faster than $\Oh^*(q^{\pw})$, where $\pw$ denotes the pathwidth of the input graph.

Still, this does not exclude much faster running times parameterized by \emph{cutwidth}. In our application of the rank-based approach for \qColoring of a graph with a given linear layout of cutwidth~$\cutw$, the partial solutions are $q$-colorings of the first $i$ and last $n-i$ vertices in the linear order, and clearly only the colors assigned to vertices incident to the edges going over the cut are relevant. If we let $X=X_i,Y=Y_i$ denote the endpoints of these edges occurring respectively not after and after $i$, and let $H=H_i$ denote the bipartite graph induced by the cut and these edges, we are set to study the rank of the following partial solutions matrix indexed by $x\in [q]^{X}$ and $y \in [q]^{Y}$:
\[
	M_H[x,y] = \begin{cases}
		1, & \text{if $x \cup y$ is a proper $q$-coloring of $H$},\\
		0, & \text{if otherwise}.
	\end{cases}
\]
Here and below, we slightly abuse notation by viewing elements of $V^I$ (i.e.~vectors with values in $V$ that are indexed by $I$) as sets of pairs in $I \times V$; that is, if $x \in V^I$ we also use $x$ to denote the set $\{(i,x_i)\}_{i \in I}$. With this notation in mind, note that $x \cup y$ above can be interpreted as an element of $[q]^{X \cup Y}$ in the natural way as $X$ and $Y$ are disjoint.
As the rank of~$M_H$ is generally high\footnote{For example, if $H$ is a single edge $M_H$ is the complement of an identity matrix of dimensions $q \times q$.} and depends on~$q$, we instead focus on the matrix $M'_H$ defined by
\begin{equation}\label{eq:M'def}
	M'_H[x,y] = \prod_{(v,w) \in E(H)} (x_v - y_w),
\end{equation}
where all edges are directed from $X$ to $Y$ in $E(H)$. The crux is that the support (e.g.~the set of non-zero entries) of $M'_H$ equals the support of $M_H$:
\begin{lemma} \label{lem:zero}
	We have~$M'_H[x,y] \neq 0$ if and only if $x \cup y$ is a proper $q$-coloring of $H$.
\end{lemma}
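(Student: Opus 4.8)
The plan is short, since the statement follows almost directly from unpacking the definitions. First I would observe that every factor $x_v - y_w$ occurring in the product in the definition~\eqref{eq:M'def} of $M'_H$ is an integer, as $x_v, y_w \in [q]$. Because the integers (equivalently $\mathbb{Q}$ or $\mathbb{R}$, whichever ground field is used later) form an integral domain, the product $M'_H[x,y] = \prod_{(v,w) \in E(H)} (x_v - y_w)$ is zero if and only if at least one factor is zero, i.e.\ if and only if there is an edge $(v,w) \in E(H)$ with $x_v = y_w$.

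Second, I would invoke the structure of $H$. By construction, $H$ is the bipartite graph formed by the edges crossing the cut at position $i$, with $X$ the set of endpoints at positions at most $i$ and $Y$ the set of endpoints at positions after $i$; hence every edge of $H$ joins a vertex of $X$ to a vertex of $Y$, and $H$ contains no edge inside $X$ and no edge inside $Y$. Consequently, the combined assignment $x \cup y \in [q]^{V(H)}$ is a proper $q$-coloring of $H$ if and only if $x_v \neq y_w$ for every edge $(v,w) \in E(H)$: there are no monochromatic-edge constraints to verify among the vertices of $X$ alone or among the vertices of $Y$ alone. Combining the two observations shows that $M'_H[x,y] \neq 0$ iff no crossing edge is monochromatic under $x \cup y$ iff $x \cup y$ is a proper $q$-coloring of $H$, which also matches the support of $M_H$.

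There is no real obstacle here; the only two points deserving a moment's care are (i) using that the ground domain has no zero divisors, so that a vanishing product forces a vanishing factor, and (ii) explicitly using the bipartiteness of $H$ to reduce properness on $H$ exactly to the per-edge inequalities $x_v \neq y_w$. Both are immediate from the setup, so the proof is essentially a one-line observation with these two ingredients made explicit.
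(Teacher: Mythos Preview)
Your proposal is correct and follows essentially the same argument as the paper's proof: both directions come down to the observation that the product $\prod_{(v,w)\in E(H)}(x_v-y_w)$ vanishes iff some factor vanishes, which is exactly the condition that some edge of the bipartite graph $H$ is monochromatic under $x\cup y$. Your write-up is slightly more explicit about the integral-domain and bipartiteness points, but the content is the same.
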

\begin{proof}
	If~$x_v = y_w$ for some $(v,w) \in E(H)$ then the term~$(x_v - y_w)$ is zero, implying the entire product on the right hand-side of~\eqref{eq:M'def} is zero. If~$x$ and~$y$ differ at every coordinate, then~$M'_H[x,y]$ is a product of nonzero terms, and therefore non-zero itself.
\end{proof}


In Sections~\ref{subsec:uppcoldet}--\ref{subsec:uppcolrand} this property will allow us to work with~$M'_H$ instead of~$M_H$, when combined with the Isolation Lemma or Gaussian-elimination approach; similarly as in previous work~\cite{BodlaenderCKN15,CyganKN13,CyganNPPRW11}.\footnote{In the deterministic setting, the observation that one can work with a matrix different from a partial solution matrix but with the same support as the partial solution matrix was already used by Fomin et al.~\cite{FominLPS16} in combination with a matrix factorization by Lov\'asz~\cite{lovasz1977flats}.}

\subsection{A deterministic algorithm}\label{subsec:uppcoldet}
We first show that $M'_H$ has rank at most $\prod_{v \in X}(d_{E(H)}(v)+1)$ by exhibiting an explicit factorization. Here we use the shorthand $d_W(v)$ for the number of edges in $W$ containing vertex $v$.
For a bipartite graph~$H$ with parts~$X,Y$ and edges oriented from~$X$ to~$Y$, we have:
\begin{align}
		M'_H[x,y] &= \prod_{(v,w) \in E(H)} (x_v - y_w) \nonumber \\
				  &= \sum_{W \subseteq E(H)} \left (\prod_{v \in X}x^{d_{W}(v)}_v \right ) \left(\prod_{v \in Y} (-y_v)^{d_{E(H)\setminus W}(v)} \right)  \nonumber \\
				  &= \sum_{(d_v \in \{0,\ldots,d_{E(H)}(v)\})_{v\in X}}  \left (\prod_{v \in X}x^{d_v}_v \right )  \left( \sum_{\substack{ W \subseteq E(H) \\ \forall v\in X: d_{W}(v)=d_v }} \prod_{v \in Y} (-y_v)^{d_{E(H)\setminus W}(v)} \right),\label{eq:fac}
\end{align}
where the second equality follows by expanding the product and the third equality follows by grouping the summands on the number of edges incident to vertices in $W$ included in $X$.

Expression~\eqref{eq:fac} provides us with a matrix factorization $M'_H=L_H\cdot R_H$ where $L_H$ is indexed by $x\in [q]^{X}$ and a sequence $s=(d_v \in \{0,\ldots,d_{E(H)}(v)\})_{v\in X}$ and $R_H$ has columns indexed by $y \in [q]^{Y}$ (one such factorization sets $L_H[x,s]=\prod_{v \in X}x^{s_v}_v$). As the number of relevant sequences~$s$ is bounded by~$\prod_{v \in X}(d_{E(H)}(v)+1)$, the factorization implies the claimed rank bound for~$M'_H$.\footnote{This construction (first developed in this paper) has subsequently been used by the second author with Bansal et al.~\cite{DBLP:conf/soda/Bansal0KN18} in the completely different setting of online algorithms; see~\cite[Footnote 3]{DBLP:conf/soda/Bansal0KN18}.} The rank bound allows some partial solutions to be pruned from the dynamic-programming table without changing the answer. The following definition captures correct reduction steps.
\begin{definition}
Fix a bipartite graph $H$ with parts $X$ and $Y$ and let $\cS \subseteq [q]^X$ be a set of $q$-colorings of $X$. We say $\cS' \subseteq [q]^{X}$ \emph{$H$-represents} $\cS$ if $\cS' \subseteq \cS$, and for each $y \in [q]^Y$ we have:\begin{equation}\label{eq:repequiv}
	(\exists x \in \cS\colon x \cup y \text { is a proper coloring of $H$}) \Leftrightarrow (\exists x' \in \cS' \colon x' \cup y \text{ is a proper coloring of $H$}).
\end{equation}
\end{definition}
Note that the backward direction of~\eqref{eq:repequiv} is implied by the property that $\cS' \subseteq \cS$, but we state both for clarity. If $H$ is clear from context it will be omitted. For future reference we record the observation that the transitivity of this relation follows directly from its definition:
\begin{observation}\label{obs:reptrans}
	Let $H$ be a bipartite graph with parts~$X$ and~$Y$, and let~$\mathcal{A},\mathcal{B},\mathcal{C} \subseteq [q]^X$. If $\mathcal{A}$ represents $\mathcal{B}$ and $\mathcal{B}$ represents $\mathcal{C}$, then $\mathcal{A}$ represents $\mathcal{C}$. 
\end{observation}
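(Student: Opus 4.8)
The plan is to unwind the definition of the $H$-representation relation and chain the two hypotheses; no clever idea is needed. Recall that $\cS'$ $H$-represents $\cS$ asserts two things: (i) $\cS' \subseteq \cS$, and (ii) for every $y \in [q]^Y$ there exists $x \in \cS$ with $x \cup y$ a proper coloring of $H$ if and only if there exists $x' \in \cS'$ with $x' \cup y$ a proper coloring of $H$. To conclude that $\mathcal{A}$ represents $\mathcal{C}$, I would verify both (i) and (ii) for the pair $(\mathcal{A}, \mathcal{C})$.

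The inclusion (i) is handled first: since $\mathcal{A}$ represents $\mathcal{B}$ we get $\mathcal{A} \subseteq \mathcal{B}$, and since $\mathcal{B}$ represents $\mathcal{C}$ we get $\mathcal{B} \subseteq \mathcal{C}$, so $\mathcal{A} \subseteq \mathcal{C}$ by transitivity of set inclusion.

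For the equivalence (ii), fix an arbitrary $y \in [q]^Y$. For the forward direction, assume some $x \in \mathcal{C}$ has $x \cup y$ proper in $H$; applying property (ii) of "$\mathcal{B}$ represents $\mathcal{C}$" produces some $x' \in \mathcal{B}$ with $x' \cup y$ proper, and then applying property (ii) of "$\mathcal{A}$ represents $\mathcal{B}$" produces some $x'' \in \mathcal{A}$ with $x'' \cup y$ proper, as desired. The backward direction is immediate from $\mathcal{A} \subseteq \mathcal{C}$: any $x'' \in \mathcal{A}$ with $x'' \cup y$ proper already lies in $\mathcal{C}$. Since $y$ was arbitrary, property (ii) holds for $(\mathcal{A}, \mathcal{C})$, which finishes the proof. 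The only thing requiring any attention is to apply the two equivalences in the correct order, so that a witnessing coloring is pushed all the way down into the smallest set $\mathcal{A}$; there is no genuine obstacle.
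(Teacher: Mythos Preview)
Your proof is correct and matches the paper's approach: the paper simply records this as an observation that ``follows directly from its definition'' without spelling out any argument, and your write-up is exactly the routine unwinding of that definition.
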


Given the above matrix factorization, we can directly follow the proof of \cite[Theorem 3.7]{BodlaenderCKN15} to get the following result (note that $\omega$ denotes the matrix multiplication constant):
\begin{lemma}\label{lem:red}
There is an algorithm $\mathtt{reduce}$ that, given a bipartite graph $H$ with parts~$X,Y$ and a set $\cS \subseteq [q]^{X}$, outputs in time $\left(\prod_{v \in X}(d_{E(H)}(v)+1)\right)^{\omega-1}\cdot |\cS| \cdot \poly(|X|+|Y|)$ a set $\cS'$ that represents $\cS$ and satisfies $|\cS'| \leq \prod_{v \in X}(d_{E(H)}(v)+1)$.
\end{lemma}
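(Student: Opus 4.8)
The plan is to follow the proof of \cite[Theorem 3.7]{BodlaenderCKN15}, instantiated with the explicit factorisation $M'_H = L_H \cdot R_H$ obtained from~\eqref{eq:fac}. Write $r := \prod_{v\in X}(d_{E(H)}(v)+1)$ for the number of relevant sequences $s$, so that $L_H$ (with $L_H[x,s]=\prod_{v\in X}x_v^{s_v}$) has exactly $r$ columns and hence $\mathrm{rank}(M'_H)\le\mathrm{rank}(L_H)\le r$. If $|\cS|\le r$ then $\cS':=\cS$ already satisfies all requirements, so assume $|\cS|>r$. All linear algebra will be carried out over the field $\mathbb{F}_p$ for a prime $p$ with $q\le p<2q$. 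The reason for taking $p\ge q$ is that any factor $x_v-y_w$ appearing in~\eqref{eq:M'def} with $x_v,y_w\in[q]$ and $x_v\ne y_w$ has absolute value at most $q-1<p$ and therefore remains nonzero modulo $p$; consequently $M'_H[x,y]\ne 0$ in $\mathbb{F}_p$ if and only if $M'_H[x,y]\ne 0$ over $\mathbb{Z}$, which by Lemma~\ref{lem:zero} holds if and only if $x\cup y$ is a proper colouring of $H$. Note that the identity~\eqref{eq:fac} remains valid over $\mathbb{F}_p$.

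The algorithm first computes the $|\cS|\times r$ matrix $L_H|_{\cS}$ over $\mathbb{F}_p$ (each entry is a product of $|X|$ modular powers and is computed in $\poly(|X|+|Y|)$ time), and then extracts from it a subset $\cS'\subseteq\cS$ of at most $r$ rows whose $\mathbb{F}_p$-span equals that of all rows of $L_H|_{\cS}$. This is done with the standard batching technique: maintain a set $\mathcal{B}$ of at most $r$ rows whose span equals the span of all rows processed so far; repeatedly take the next (at most) $r$ rows of $L_H|_{\cS}$, stack them on top of $\mathcal{B}$ to form an $(\le 2r)\times r$ matrix, and replace $\mathcal{B}$ by a maximal linearly independent subset of its rows, computed by fast Gaussian elimination in $\Oh(r^\omega)$ field operations (exactly as in \cite{BodlaenderCKN15}). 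Since a maximal linearly independent subset of a set of vectors spans that set, an easy induction over the $\lceil|\cS|/r\rceil$ rounds shows that at the end $\mathcal{B}$ spans all rows of $L_H|_{\cS}$; since every row ever placed in $\mathcal{B}$ is an actual row of $L_H|_{\cS}$, the corresponding index set $\cS'$ is a subset of $\cS$; and $|\cS'|\le\mathrm{rank}(L_H|_{\cS})\le r$.

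It remains to check that $\cS'$ $H$-represents $\cS$. The inclusion $\cS'\subseteq\cS$ gives the $\Leftarrow$ direction of~\eqref{eq:repequiv} immediately. For the $\Rightarrow$ direction, fix $y\in[q]^Y$ and suppose $x\cup y$ is a proper colouring of $H$ for some $x\in\cS$; by the above, $M'_H[x,y]\ne 0$ in $\mathbb{F}_p$. Letting $\ell_x$ denote the row of $L_H$ indexed by $x$, the row of $M'_H$ indexed by $x$ equals $\ell_x R_H$ by~\eqref{eq:fac}. Writing $\ell_x=\sum_{x'\in\cS'}\lambda_{x'}\ell_{x'}$ over $\mathbb{F}_p$ (possible as $\cS'$ spans the row space of $L_H|_{\cS}$) and multiplying on the right by $R_H$ yields $M'_H[x,\cdot]=\sum_{x'\in\cS'}\lambda_{x'}M'_H[x',\cdot]$, so $M'_H[x,y]=\sum_{x'\in\cS'}\lambda_{x'}M'_H[x',y]\ne 0$ in $\mathbb{F}_p$, whence $M'_H[x',y]\ne 0$ in $\mathbb{F}_p$ for some $x'\in\cS'$; by the choice of $p$ and Lemma~\ref{lem:zero}, $x'\cup y$ is then a proper colouring of $H$, as needed. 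For the running time, the $\lceil|\cS|/r\rceil$ rounds cost $\Oh(r^\omega)$ field operations each, and writing down $L_H|_{\cS}$ costs $\Oh(|\cS|\cdot r)$ entries, so (using $|\cS|>r$) the total is $\Oh(|\cS|\cdot r^{\omega-1})\cdot\poly(|X|+|Y|)=\bigl(\prod_{v\in X}(d_{E(H)}(v)+1)\bigr)^{\omega-1}\cdot|\cS|\cdot\poly(|X|+|Y|)$, where the cost of a single $\mathbb{F}_p$-operation is absorbed into the polynomial factor.

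The only step that is not routine bookkeeping is the realisation that $M'_H$ itself, which has $q^{|Y|}$ columns, must never be materialised: all computation is performed on the slim factor $L_H$ with $r$ columns, while $R_H$ enters only symbolically in the representation argument. This is precisely the abstraction of \cite[Theorem 3.7]{BodlaenderCKN15}, and the work here is to supply the new factorisation~\eqref{eq:fac} and to verify that passing to $\mathbb{F}_p$ with $p\ge q$ preserves the equivalence between nonzero entries and proper colourings. The batching step and the $\Oh(r^\omega)$ rank computation are standard.
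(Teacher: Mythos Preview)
Your proof is correct and follows essentially the same approach as the paper: compute the slim factor $L_H$ restricted to $\cS$, extract a row basis, and use the factorisation $M'_H=L_HR_H$ together with Lemma~\ref{lem:zero} to argue that the basis $H$-represents $\cS$. You are more explicit than the paper about the choice of ground field (working modulo a prime $p\ge q$ and verifying that nonzero entries of $M'_H$ survive the reduction) and about the batching procedure behind the cited running-time lemma, but these are elaborations of the same argument rather than a different route.
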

\begin{proof}
	The algorithm is as follows: compute explicitly the matrix $L_H[\cS,\cdot]$ (i.e.~the submatrix of $L_H$ induced by all rows in $\cS$). As every entry of $L_H$ can be computed in polynomial time, clearly this can be done within the claimed time bound. 
	Subsequently, the algorithm finds a row basis of this matrix and returns that set as $\cS'$. As the rank of a matrix is at most its number of columns, $|\cS'| \leq \prod_{v \in X}(d_{E(H)}(v)+1)$. Using \cite[Lemma 3.15]{BodlaenderCKN15}, this step also runs in the promised running time.
	
	To see that $\cS'$ represents $\cS$, note that clearly $\cS' \subseteq \cS$ and thus it remains to prove the forward implication of~\eqref{eq:repequiv}. To this end, suppose that $x \cup y$ is a proper $q$-coloring of $H$ and $x \in \cS$. As $\cS'$ is a row basis of $L_H$, there exist $x^{(1)},\ldots,x^{(\ell)} \in \cS'$ and $\lambda_1,\ldots,\lambda_\ell$ such that 
	\[
		M'_H[x,y]=L_H[x,\cdot]R_H[\cdot,y]= \left(\sum_{i=1}^\ell \lambda_i L_H[x^{(i)},\cdot]\right) R_H[\cdot,y] = \sum_{i=1}^{\ell} \lambda_i M'_H[x^{(i)},y],
	\]
	where $L_H[x,\cdot]$ and $R_H[\cdot,y]$ denote a row of $L_H$ and column of $R_H$ respectively. As $x \cup y$ is a proper coloring of $H$, Lemma~\ref{lem:zero} implies $M'_H[x,y]$ is non-zero. Therefore there must also exist $x^{(i)} \in \cS'$ such that $M'_H[x^{(i)},y]$ is non-zero and hence $x^{(i)} \cup y$ is a proper coloring of~$H$.
\end{proof}

Equipped with the algorithm $\mathtt{reduce}$ from Lemma~\ref{lem:red} we are ready to present the algorithm for \qColoring. On a high level, the algorithm uses a na\"ive dynamic-programming scheme, but by extensive use of the $\mathtt{reduce}$ procedure we efficiently represent sets of partial solutions and speed up the computation significantly.

First we need to introduce some notation. A vector $x \in V^I$ is \emph{an extension} of a vector $x' \in V^{I'}$ if $I' \subseteq I$ and $x'_i=x_i$ for every $i \in I'$. If $x \in V^I$ and $P \subseteq I$ then the projection $x_{|P}$ is defined as the unique vector in $V^P$ of which $x$ is an extension.
Let $G$ be the graph for which we need to decide whether a proper $q$-coloring exists and fix an ordering $v_1,\ldots,v_n$ of $V(G)$. We denote all edges as directed pairs $(v_i,v_j)$ with $i < j$. For $i=1,\ldots,n$, define $V_i$ as the $i$'th prefix of this ordering, $C_i$ as the $i$'th cut in this ordering, and $X_i$ and $Y_i$ as the left and respectively right endpoints of the edges in this cut, i.e.
\begin{align*}
	V_i &=\{v_1,\ldots,v_i\}, &\qquad C_i &=\{(v_l,v_r) \in E(G): l \leq  i < r\}, \\
	X_i &= \{v_l \in V(G): \exists (v_l,v_r) \in C_i \wedge l < r \}, &\qquad Y_i &= \{v_r \in V(G): \exists (v_l,v_r) \in C_i \wedge l < r \}.
\end{align*}
Note that $X_i \subseteq X_{i-1} \cup \{v_i\}$ and $Y_{i-1} \subseteq Y_i \cup \{v_i\}$. We let $H_i$ denote the bipartite graph with parts~$X_i,Y_i$ and edge set~$C_i$. For $i=1,\ldots,n$, let $T[i] \subseteq [q]^{X_i}$ be the set of all $q$-colorings of the vertices in $X_i$ that can be extended to a proper $q$-coloring of $G[V_i]$. The following lemma shows that we can continuously work with a table $T'$ that represents a table $T$:
\begin{lemma}\label{lem:trans}
If $T'[i-1]$ $H_{i-1}$-represents $T[i-1]$, then $T'[i]$ $H_i$-represents $T[i]$, where
\begin{equation}\label{eq:ti}
	 T'[i] = \left\{ \left(x \cup (v_i,c)\right)_{|X_i} \colon x \in T'[i-1], c \in [q] , \big( \forall v \in N(v_i) \cap X_{i-1} \colon x_{v} \neq c \big) \right\}.
\end{equation}
\end{lemma}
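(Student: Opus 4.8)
The plan is to prove the two directions of the representation property separately, using Observation~\ref{obs:reptrans} to reduce to a cleaner setting whenever possible. First I would verify that $T'[i] \subseteq T[i]$, which is needed both for the definition of $H_i$-representation and to get the backward implication of~\eqref{eq:repequiv} for free. Take any $x' \in T'[i]$; by~\eqref{eq:ti} it is the projection to $X_i$ of $x \cup (v_i,c)$ for some $x \in T'[i-1]$ and some color $c$ avoiding the colors of $x$ on $N(v_i) \cap X_{i-1}$. Since $T'[i-1] \subseteq T[i-1]$, the coloring $x$ extends to a proper $q$-coloring $\hat x$ of $G[V_{i-1}]$; then $\hat x \cup (v_i,c)$ is a proper $q$-coloring of $G[V_i]$, because the only new edges are those from $v_i$ to $V_{i-1}$, every such neighbor lies in $X_{i-1}$ (it has an edge crossing cut $C_{i-1}$, namely the edge to $v_i$), and $c$ was chosen to differ from $\hat x$ on all of $N(v_i) \cap X_{i-1}$. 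Hence $(x' )$ is the restriction to $X_i$ of a coloring in $T[i]$, so $x' \in T[i]$.

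For the forward implication of~\eqref{eq:repequiv}, fix $y \in [q]^{Y_i}$ and suppose some $z \in T[i]$ satisfies $z \cup y$ is a proper coloring of $H_i$; I must produce $z' \in T'[i]$ with $z' \cup y$ proper in $H_i$. Because $z \in T[i]$, it extends to a proper coloring $\hat z$ of $G[V_i]$; write $c = \hat z(v_i)$ and let $w = \hat z_{|X_{i-1}}$. Then $w \in T[i-1]$. The key observation is to choose a suitable target vector on $Y_{i-1}$ and apply the induction hypothesis: define $\tilde y \in [q]^{Y_{i-1}}$ by setting $\tilde y(v) = y(v)$ for $v \in Y_{i-1} \cap Y_i$ and $\tilde y(v_i) = c$ in the one case $v_i \in Y_{i-1}$ (this exploits $Y_{i-1} \subseteq Y_i \cup \{v_i\}$). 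One checks that $w \cup \tilde y$ is a proper coloring of $H_{i-1}$: each edge $(v_l, v_r) \in C_{i-1}$ with $r \le i$ must have $v_r = v_i$, handled by $\tilde y(v_i) = c = \hat z(v_i) \neq \hat z(v_l) = w(v_l)$ since $\hat z$ is proper; each edge with $r > i$ also lies in $C_i$, where properness of $z \cup y$ (equivalently $\hat z \cup y$) on that edge transfers verbatim. Now apply the hypothesis that $T'[i-1]$ $H_{i-1}$-represents $T[i-1]$: there is $w' \in T'[i-1]$ with $w' \cup \tilde y$ proper in $H_{i-1}$.

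Finally I would assemble $z'$ from $w'$ and $c$. Set $z' = (w' \cup (v_i,c))_{|X_i}$; this is a candidate member of $T'[i]$ by~\eqref{eq:ti}, but I must check the side condition that $c$ differs from $w'$ on $N(v_i) \cap X_{i-1}$. Every $v \in N(v_i) \cap X_{i-1}$ is the left endpoint of the edge $(v, v_i) \in C_{i-1}$, so either $v_i \in Y_{i-1}$ and properness of $w' \cup \tilde y$ on $(v,v_i)$ gives $w'(v) \neq \tilde y(v_i) = c$, or — and here is the one subtlety — the edge $(v,v_i)$ is already "used up" differently; but in fact whenever $v \in X_{i-1}$ has edge $(v,v_i)$ this edge crosses cut $C_{i-1}$, so $v_i \in Y_{i-1}$ indeed, and the first case always applies. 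Thus $z' \in T'[i]$. It remains to see $z' \cup y$ is proper in $H_i$: the edges of $C_i$ incident to $v_i$ are exactly $(v, v_i)$ for $v \in N(v_i) \cap X_{i-1}$ — already checked — while edges of $C_i$ not incident to $v_i$ also lie in $C_{i-1}$ and have both endpoints colored by $w'$ (on $X_{i-1}$) and $\tilde y = y$ (on $Y_{i-1} \cap Y_i$), so properness transfers from $w' \cup \tilde y$. This completes the forward implication, and with the containment $T'[i] \subseteq T[i]$ from the first paragraph, the lemma follows. I expect the main obstacle to be the careful bookkeeping of which vertices land in $X_i, Y_i$ versus $X_{i-1}, Y_{i-1}$ when $v_i$ is introduced — in particular verifying that $N(v_i) \cap X_{i-1} = N(v_i) \cap V_{i-1}$ and that $v_i \in Y_{i-1}$ exactly when $v_i$ has a back-neighbor — rather than any deep argument; modulo that, everything is a direct transfer of properness along the containments $X_i \subseteq X_{i-1} \cup \{v_i\}$ and $Y_{i-1} \subseteq Y_i \cup \{v_i\}$ together with one invocation of the induction hypothesis.
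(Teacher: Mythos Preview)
Your approach mirrors the paper's proof closely: first establish $T'[i] \subseteq T[i]$, then for the forward direction extend a witness $z \in T[i]$ to a full coloring $\hat z$ of $G[V_i]$, restrict to $X_{i-1}$, augment $y$ on $Y_{i-1}$ with the color $c = \hat z(v_i)$, invoke the hypothesis to obtain $w' \in T'[i-1]$, and reassemble $z'$ from $w'$ and $c$. All of this is correct and is exactly what the paper does.

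There is one bookkeeping slip in your final verification that $z' \cup y$ is proper on $H_i$. You assert that ``the edges of $C_i$ incident to $v_i$ are exactly $(v,v_i)$ for $v \in N(v_i) \cap X_{i-1}$,'' but this is false: those edges have $v_i$ as their \emph{right} endpoint and hence lie in $C_{i-1}$, not in $C_i$. In $C_i = \{(v_l,v_r): l \le i < r\}$ the vertex $v_i$ can only appear on the left, so the edges of $C_i$ incident to $v_i$ are the forward edges $(v_i,u)$ with $u \in N(v_i) \cap Y_i$. Your two-case split (``edges incident to $v_i$'' versus ``edges not incident to $v_i$'') therefore leaves precisely these forward edges unhandled. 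The fix is immediate and is what the paper uses: since $z \cup y$ is proper on $H_i$ and (whenever $v_i \in X_i$) $z(v_i) = \hat z(v_i) = c$, we already know $c \neq y(u)$ for every such $u$; and $z'(v_i) = c$ by construction, so $z'(v_i) \neq y(u)$ as required. With this correction in place your argument is complete and matches the paper's.
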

\begin{proof}
Assuming the hypothesis, we first show that $T'[i] \subseteq T[i]$. Let $x \in T'[i-1]$ and $c \in [q]$ such that $\forall v \in N(v_i) \cap X_{i-1} \colon x_{v} \neq c$. As $T'[i-1]$ represents $T[i-1]$, we have that $x \in T[i-1]$. By definition of $T[i-1]$, there exists a proper coloring $w$ of $G[V_{i-1}]$ that extends $x$. Since all $v \in N(v_i) \cap X_{i-1} = N(v_i) \cap V_{i-1}$ satisfy~$x_{v} \neq c$, it follows that $w \cup (v_i,c)$ is a proper coloring of $G[V_i]$, and thus $\left(x \cup (v_i,c)\right)_{|X_i} \in T[i]$.
	
Thus, to prove the lemma it remains to show the forward implication of~\eqref{eq:repequiv}. To this end, let $x \in T[i]$ and let $w \in [q]^{V_i}$ be a proper coloring of $G[V_i]$ that extends $x$. Let $y \in [q]^{Y_i}$ be such that $x \cup y$ is a proper coloring of $H_i$.
As $w_{v_i} \neq w_{v_j}$ for neighbors $v_j \in N(v_i) \cap V_{i-1}$ and $w_{v_i} \neq y_{v_j}$ for $v_j \in N(v_i) \setminus V_i$, it follows that $w \cup y$ extends a proper coloring of $H_{i-1}$.

Therefore $w_{|X_{i-1}} \cup (y \cup (v_i,w_{v_i}))_{|Y_{i-1}}$ must be a proper coloring of $H_{i-1}$, and $w_{|X_{i-1}} \in T[i-1]$ as it can be extended to a proper coloring of $V_i$, and thus also to a proper coloring of $V_{i-1}$. As $T'[i-1]$ $H_{i-1}$-represents $T[i-1]$, there exists $x' \in T'[i-1]$ such that $x' \cup (y \cup (v_i,w_{v_i}))_{|Y_{i-1}}$ is a proper coloring of $H_{i-1}$.

As no neighbor of $v_i$ was assigned color $w_{v_i}$ by $y$, it follows that $(x' \cup (v_i,w_{v_i})) \cup y$ is an extension of a proper coloring of $H_i$. As $x' \cup (y \cup (v,w_{v_i}))_{|Y_{i-1}}$ is a proper coloring of $H_{i-1}$, no neighbors of $v_i$ are assigned color $w_{v_i}$ by $x'$, and by~\eqref{eq:ti} we have that $(x' \cup (v_i,w_{v_i})) \in T'[i]$, as required.
\end{proof}

Now we combine Lemma~\ref{lem:red} with Lemma~\ref{lem:trans} to obtain an algorithm to solve \qColoring.

\begin{lemma}\label{lem:precisetime}
\qColoring can be solved in time $\Oh^*\left(\left(\max_{i}\prod_{v \in X_i}(d_{E(H_i)}(v)+1)\right)^\omega\right)$.
\end{lemma}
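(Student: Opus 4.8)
The plan is to run the straightforward left‑to‑right dynamic program that would compute, for each prefix $V_i$, the table $T[i]$ of colorings of $X_i$ extendable to a proper coloring of $G[V_i]$, but to never store $T[i]$ explicitly: instead we carry along a \emph{small} table $T'[i]$ that $H_i$‑represents $T[i]$, shrinking it after every step with the procedure $\mathtt{reduce}$ of Lemma~\ref{lem:red}. Concretely, we may assume $q < n$ (for $q \geq n$ the graph $G$ on $n$ vertices is trivially $q$‑colorable) and, as standing assumption, that $G$ has no isolated vertices; fix the ordering $v_1,\dots,v_n$. Initialise $T'[0] := \{\emptyset\}$, which $H_0$‑represents $T[0] = \{\emptyset\}$ for the edgeless graph $H_0$ on empty parts. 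Then for $i = 1,\dots,n$: first form the candidate set $\tilde T[i]$ equal to the right‑hand side of~\eqref{eq:ti} applied to $T'[i-1]$, and then set $T'[i] := \mathtt{reduce}(H_i, \tilde T[i])$. Finally output \textsc{yes} iff $T'[n] \neq \emptyset$.

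For correctness I would show by induction on $i$ that $T'[i]$ $H_i$‑represents $T[i]$. The base case $i=0$ is immediate. For the step, Lemma~\ref{lem:trans} gives that $\tilde T[i]$ $H_i$‑represents $T[i]$ (note $\tilde T[i]$ is precisely the set denoted $T'[i]$ in the statement of that lemma when its input is our representing table $T'[i-1]$), Lemma~\ref{lem:red} gives that $\mathtt{reduce}(H_i,\tilde T[i])$ $H_i$‑represents $\tilde T[i]$, and Observation~\ref{obs:reptrans} chains the two. For $i=n$ we have $V_n = V(G)$, hence $C_n = \emptyset$ and $H_n$ is the edgeless graph on empty parts; instantiating~\eqref{eq:repequiv} with the empty coloring $y$ then reads $T[n] \neq \emptyset \Leftrightarrow T'[n] \neq \emptyset$, and $T[n] \neq \emptyset$ holds exactly when $G$ admits a proper $q$‑coloring, so the output is correct.

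For the running time set $r := \max_i \prod_{v \in X_i}(d_{E(H_i)}(v)+1)$. By Lemma~\ref{lem:red} we have $|T'[i-1]| \leq r$ for all $i \geq 1$ (and $|T'[0]| = 1$). Each element of $\tilde T[i]$ is determined by a pair $(x,c) \in T'[i-1] \times [q]$, so $|\tilde T[i]| \leq q\cdot r$ and $\tilde T[i]$ is computable in $q\cdot r \cdot \poly(n)$ time. The call to $\mathtt{reduce}$ then costs, by Lemma~\ref{lem:red}, at most $\left(\prod_{v \in X_i}(d_{E(H_i)}(v)+1)\right)^{\omega-1}\cdot |\tilde T[i]| \cdot \poly(|X_i|+|Y_i|) \leq r^{\omega-1}\cdot q\cdot r\cdot \poly(n) = r^{\omega}\cdot q\cdot\poly(n)$. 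Summing over the $n$ iterations and using $q < n$ yields total time $\Oh^*(r^{\omega})$, as claimed.

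None of the steps is genuinely hard once Lemmas~\ref{lem:red} and~\ref{lem:trans} are in hand; the substance of the lemma lives there. The only points that require care are the boundary cases — seeding the recursion with the trivial table $T'[0]$ and reading off the final answer from the emptiness of $T'[n]$ via the edgeless graph $H_n$ — together with the observation that the candidate $\tilde T[i]$ never blows up by more than a factor $q$, which is exactly why the cost of each $\mathtt{reduce}$ call stays at $r^{\omega}\cdot\poly(n)$ rather than growing with $i$.
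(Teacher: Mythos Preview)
Your proposal is correct and follows essentially the same approach as the paper: iterate the recurrence~\eqref{eq:ti} from $T'[0]=\{\emptyset\}$, apply $\mathtt{reduce}$ after each step, chain representativeness via Lemma~\ref{lem:trans}, Lemma~\ref{lem:red}, and Observation~\ref{obs:reptrans}, and read off the answer from $T'[n]$. Your treatment is in fact a bit more careful than the paper's in making explicit the $q<n$ assumption (to absorb the factor~$q$ into $\Oh^*$) and in spelling out why the emptiness of $T'[n]$ decides the problem via the edgeless graph $H_n$.
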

\begin{proof}
Note $T'[0]=T[0]=\{\emptyset\}$ (where $\emptyset$ is the $0$-dimensional vector). Using Lemma~\ref{lem:trans}, we can use~\eqref{eq:ti} for $i=1,\ldots,n$ to iteratively compute a set $T'[i]$ representing $T[i]$ from a set $T'[i-1]$ representing $T[i-1]$, and replace $T'[i]$ after each step with $\mathtt{reduce}(H_i,T'[i])$. 
By combining Lemma~\ref{lem:trans} and Observation~\ref{obs:reptrans}, we may conclude that $G$ has a $q$-coloring if and only if $T'[n]$ is not empty (that is, it contains a single element which is the empty vector).

The time required for the computation dictated by~\eqref{eq:ti} is clearly $|T'[i]|\cdot \poly(n)$. Since $|T'[i-1]| \leq \max_{i}\prod_{v \in X_i}(d_{E(H_i)}(v)+1)$, as it is the result of $\mathtt{reduce}$, we have that $|T'[i]|$ is bounded by $q\cdot \max_{i}\prod_{v \in X_i}(d_{E(H_i)}(v)+1)$. Using this upper bound for $T'[i]$, the time of $\mathtt{reduce}$ will be $\Oh^*\left(\left(\max_{i}\prod_{v \in X_i}(d_{E(H_i)}(v)+1)\right)^\omega\right)$, which clearly is the bottleneck in the running time.
\end{proof}

Theorem~\ref{thm:cw} now follows directly from this more general statement.
\begin{proof}[Proof of Theorem~\ref{thm:cw}]
If $v_1,\ldots,v_n$ is a layout of cutwidth $k$, then~$|E(H_i)| \leq k$ for every~$i$, and the term $\prod_{v \in X_i}(d_{E(H_i)}(v)+1)$ is upper bounded by $2^{k}$ by the AM-GM inequality. Thus the theorem follows from Lemma~\ref{lem:precisetime}.
\end{proof}

\subsection{A randomized algorithm}\label{subsec:uppcolrand}
In this section we use an idea similar to the idea from the matrix factorization of the previous section to obtain faster randomized algorithms. Specifically, our main technical result is as follows (recall that $E_i$ denotes the set of edges introduced in bags before $B_i$).
\begin{theorem}\label{thm:qcolpwtech}
	There is a Monte Carlo algorithm for \qColoring that, given a graph $G$ and a nice path decomposition $B_1,\ldots,B_\ell$, runs in time $\Oh^*(\max_{i}\prod_{v \in B_i}(\min\{d_{E_i}(v),d(v)-d_{E_i}(v) \}+1))$. The algorithm does not give false-positives and returns the correct answer with high probability.
\end{theorem}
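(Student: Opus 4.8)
\emph{Proof proposal.}
The plan is to reduce $q$-colorability to deciding whether a univariate polynomial over a finite field is nonzero, and then to evaluate that polynomial by a left-to-right dynamic program along the nice path decomposition whose table size at node $B_i$ is exactly $\prod_{v\in B_i}(\min\{d_{E_i}(v),d(v)-d_{E_i}(v)\}+1)$, by storing for each live vertex a ``folded'' representation that uses whichever of its past or future edges is cheaper. We may assume $q\le n$ (otherwise $G$ is trivially $q$-colorable) and, as stated, that $G$ has no isolated vertices. Fix an arbitrary orientation of $E(G)$, writing each edge as $(u,v)$, and work over $\mathbb{F}_p$ for a prime $p>(q-1)^{|E(G)|}$, which has $\poly(n)$ bits. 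First I would invoke the Isolation Lemma over the family of proper $q$-colorings: draw independent uniform weights $w(v,c)\in\{1,\dots,2nq\}$ for all $(v,c)\in V(G)\times[q]$, set $W(\phi):=\sum_v w(v,\phi(v))$, and recall that with probability at least $1/2$, if $G$ is $q$-colorable then there is a \emph{unique} proper $q$-coloring of minimum weight. Consider
\[
 Q(y)\ :=\ \sum_{\phi\colon V(G)\to[q]}\ \Bigl(\prod_{(u,v)\in E(G)}(\phi(u)-\phi(v))\Bigr)\,y^{W(\phi)}\ \in\ \mathbb{F}_p[y].
\]
Every improper $\phi$ contributes the zero polynomial (some factor equals $0$ in $\mathbb{F}_p$), while every proper $\phi$ contributes $c_\phi\cdot y^{W(\phi)}$ with $c_\phi$ a product of nonzero field elements, hence $c_\phi\neq 0$. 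Thus $Q\not\equiv 0$ \emph{implies} $G$ is $q$-colorable (no false positives), and if $G$ is $q$-colorable and the isolation step succeeded, then the coefficient of $y^{\min_\phi W(\phi)}$ equals $c_{\phi^*}\neq 0$ for the unique minimizer $\phi^*$ (the bound on $p$ makes the integer $c_{\phi^*}$ nonzero mod $p$), so $Q\not\equiv 0$. Since $\deg_y Q\le\poly(n)$, the algorithm computes $Q$ as a coefficient vector in $\mathbb{F}_p$ and answers \textsc{yes} iff $Q\not\equiv 0$; repeating $\Oh(\log)$ times with fresh weights boosts the success probability.

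It remains to evaluate $Q(y)$ in time $\Oh^*(\max_i\prod_{v\in B_i}(\min\{d_{E_i}(v),d(v)-d_{E_i}(v)\}+1))$. As in Section~\ref{subsec:uppcoldet} I would expand each factor $(\phi(u)-\phi(v))$ into the two rank-one alternatives $+\phi(u)$ and $-\phi(v)$; selecting, for each edge $e$, which alternative to take corresponds to a set $W\subseteq E(G)$, and grouping the $2^{|E(G)|}$ terms by the resulting exponent vectors gives
\[
 Q(y)\ =\ \sum_{W\subseteq E(G)}(-1)^{|E(G)\setminus W|}\prod_{v\in V(G)}g_v\bigl(a_W(v)\bigr),\qquad g_v(a):=\sum_{c\in[q]}c^{\,a}\,y^{w(v,c)},
\]
where $a_W(v):=d^{+}_{W}(v)+d^{-}_{E(G)\setminus W}(v)\in\{0,\dots,d(v)\}$ and the $\poly(n)$-degree polynomials $g_v(a)$ are precomputed. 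We evaluate this sum by processing the nodes $B_1,\dots,B_\ell$ left to right and, at each \emph{introduce edge} bag, branching on the $W$-membership of that edge. Between consecutive nodes the algorithm maintains a table indexed by states of the current bag, where a state records, for each live vertex $v$ (with $E$ the set of edges introduced so far), one of two things: if $d_{E}(v)\le d(v)-d_{E}(v)$, a polynomial in a formal variable $Z_v$ of degree at most $d_{E}(v)$, namely the $Z_v$-degree accumulated by the $W$-choices already committed for $v$'s introduced edges, contributing $d_{E}(v)+1$ states; and if $d_{E}(v)>d(v)-d_{E}(v)$, the \emph{folded} vector $(s_j)_{j=0}^{m_v}$ with $m_v:=d(v)-d_{E}(v)$ and $s_j:=\sum_a[\text{coeff.\ of }Z_v^a]\cdot g_v(a+j)$, contributing $m_v+1$ states. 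Either way the contribution of $v$ is $\min\{d_{E}(v),d(v)-d_{E}(v)\}+1$; since the (live set, introduced-edge-set) pairs occurring right before each node are exactly the pairs $(B_i,E_i)$, and the transient blow-up within the handling of a single bag changes at most two vertices' counts by one and hence the table size by a constant factor, the stated time bound follows.

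The remaining routine work is to verify the transitions are linear maps computable in $\poly(n)$ time per table entry and that they realize $Q(y)$. \emph{Introduce vertex} adds a new unfolded coordinate with the constant polynomial $1$. \emph{Forget} a vertex $v$ (all its edges introduced, so $m_v=0$) reads off the single scalar $s_0=\sum_a[\text{coeff.}]\,g_v(a)$, which is exactly $v$'s factor in the sum above, and drops $v$. \emph{Introduce edge} $(u,v)$ produces the sum of the branch that multiplies the state by $+Z_u$ and the branch that multiplies it by $-Z_v$; ``multiply by $Z_u$'' shifts $u$'s polynomial by one degree, or, if $u$ is folded, is the shift $s_j\mapsto s_{j+1}$ together with discarding the now-superfluous top entry (as $m_u$ drops by one), and multiplying by a scalar affects only the sign; the opposite block is unchanged except that its degree budget $d_{E}$ increases, which for a folded vertex means truncating its top entry. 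Finally, when $d_{E}(v)$ first exceeds $d(v)-d_{E}(v)$ we apply the \emph{fold} step $P\mapsto\bigl(\sum_a[\text{coeff.\ of }Z_v^a](P)\cdot g_v(a+j)\bigr)_{j=0}^{m_v}$. I expect the one nontrivial point of the proof to be showing that folding is lossless: every later operation multiplies $v$'s block by $Z_v^{k}$ for some $k\in\{0,\dots,m_v\}$ (the remaining $m_v$ edge-choices involving $v$ contribute in aggregate $Z_v$-degree $k$, the rest going to the neighbours and signs) before $g_v$ eventually sums out the total $Z_v$-degree, so the remainder of the computation is a linear functional of $v$'s block that factors through the vector $(s_j)_{j=0}^{m_v}$; hence replacing the polynomial by this vector does not change $Q(y)$, and all subsequent operations on that block are exactly the shift/truncate rules described above. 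Putting the pieces together, the DP outputs $Q(y)$, and the correctness and running-time claims follow.
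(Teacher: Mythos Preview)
Your proposal is correct and follows essentially the same approach as the paper: isolate a proper $q$-coloring via random weights, encode the weighted graph polynomial as a sum over edge-subsets $W$ (equivalently, orientations), and run a DP along the nice path decomposition whose state for each live vertex is either the accumulated exponent (when few incident edges have been introduced) or the ``partially evaluated'' vector indexed by the at most $d(v)-d_{E_i}(v)$ remaining edges (when many have been introduced). Your fold operation is exactly the paper's switch from the $L_i$-side index $d_v$ to the $R_i$-side index $e_v$ in the table $T^z_i[d,e]$, and your ``lossless'' argument is the content of the paper's communication-protocol discussion and the $R_i\cap L_{i-1}$ case of the recurrence in Lemma~\ref{lem:rec}.

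The only noteworthy differences are cosmetic: you work in $\mathbb{F}_p[y]$ with a formal weight variable and a single large prime, whereas the paper works over the integers and parametrizes by the weight $z$ directly; and you describe the DP in terms of per-vertex polynomial/vector blocks, whereas the paper writes out a single scalar-valued table indexed by the tuple $(d,e)$. Both formulations yield the same table size and the same transitions.
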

Let $V(G) = V=\{v_1,\ldots,v_n\}$ be ordered arbitrarily, and direct every edge $\{v_i,v_j\}$ as $(v_i,v_j)$ with $i < j$. Define the \emph{graph polynomial} $f_G$ as $f_G(x_1,\ldots,x_n)=\prod_{(u,v) \in E(G)} (x_u-x_v)$. This polynomial has been studied intensively (cf.~\cite{AlonT97,Loera95,Lovasz82}), for example in the context of the Alon-Tarsi theorem~\cite{AlonT92}. Define $P_G=\sum_{x \in [q]^V}f_G(x)$. Similarly as in Lemma~\ref{lem:zero} we see that if $P_G \neq 0$ then $G$ has a proper $q$-coloring, and if $G$ has a unique $q$-coloring then $P_G \neq 0$ as it is the product of non-zero values. This is useful if the graph is guaranteed to have at most one proper $q$-coloring. To this end, we use a standard technique based on the Isolation Lemma, which we state now.

\begin{definition}
	A function $\omega\colon U \rightarrow \mathbb{Z}$ \emph{isolates} a set family $\mathcal{F} \subseteq 2^U$ if there is a unique $S' \in \mathcal{F}$ with $\omega(S')= \min_{S \in \mathcal{F}}\omega(S)$, where~$\omega(S') := \sum _{v \in S'} \omega(v)$.
\end{definition}

\begin{lemma}[Isolation Lemma, \cite{MulmuleyVV87}]
	\label{lem:iso}
	Let $\mathcal{F} \subseteq 2^U$ be a non-empty set family over universe~$U$.
	For each $u \in U$, choose a weight $\omega(u) \in \{1,2,\ldots,W\}$ 
	uniformly and independently at random.
	Then 
	$\Pr[\omega \textnormal{ isolates } \mathcal{F}] \geq 1 - |U|/W$.
\end{lemma}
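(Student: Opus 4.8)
The plan is to follow the classical ``principle of deferred decisions'' argument of Mulmuley, Vazirani, and Vazirani: reveal the weights $\omega(u)$ one element at a time, and for each element separately bound the probability that it is responsible for a tie among the minimum-weight members of $\mathcal{F}$. A union bound over the~$|U|$ elements then gives the claimed failure probability.

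Concretely, I would first fix an element $u \in U$ and condition on the weights $\omega(u')$ of all other elements $u' \neq u$. With these weights fixed, define
\[
  \alpha_{\mathrm{in}} := \min\{\, \omega(S) - \omega(u) \colon S \in \mathcal{F},\ u \in S \,\}, \qquad
  \alpha_{\mathrm{out}} := \min\{\, \omega(S) \colon S \in \mathcal{F},\ u \notin S \,\},
\]
taking the convention that a minimum over the empty collection is $+\infty$. Crucially, neither value depends on $\omega(u)$, and the minimum weight attained by a set of $\mathcal{F}$ equals $\min(\alpha_{\mathrm{in}} + \omega(u),\ \alpha_{\mathrm{out}})$. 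I would call $u$ \emph{bad} exactly when $\alpha_{\mathrm{in}} + \omega(u) = \alpha_{\mathrm{out}}$: in that situation there is a minimum-weight set containing $u$ and another minimum-weight set avoiding $u$; if instead $\alpha_{\mathrm{in}} + \omega(u) < \alpha_{\mathrm{out}}$ then every minimum-weight set contains $u$, and if $\alpha_{\mathrm{in}} + \omega(u) > \alpha_{\mathrm{out}}$ then none does.

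Next I would bound the probability that $u$ is bad. Since $\omega(u)$ is drawn uniformly from $\{1,\dots,W\}$ independently of the weights of the other elements, and $u$ being bad forces $\omega(u)$ to equal the fixed number $\alpha_{\mathrm{out}} - \alpha_{\mathrm{in}}$, we get $\Pr[u \text{ is bad}] \leq 1/W$ (and $0$ if this number lies outside $\{1,\dots,W\}$ or is infinite). Taking a union bound over all $u \in U$ yields $\Pr[\exists u \in U \colon u \text{ is bad}] \leq |U|/W$.

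Finally I would argue that if no element is bad then the minimum-weight member of $\mathcal{F}$ is unique, which together with $\mathcal{F} \neq \emptyset$ means $\omega$ isolates $\mathcal{F}$; the bound on $\Pr[\omega \text{ isolates } \mathcal{F}]$ then follows. For this, suppose $S_1 \neq S_2$ are both of minimum weight and pick any $u$ in the symmetric difference, say $u \in S_1 \setminus S_2$; then $S_1$ is a minimum-weight set containing $u$ and $S_2$ is a minimum-weight set avoiding $u$, so $u$ is bad, a contradiction. I do not anticipate a genuine obstacle: the one point that needs care is verifying that $\alpha_{\mathrm{in}}$ and $\alpha_{\mathrm{out}}$ are truly functions of the other weights only, since this independence is exactly what makes the per-element probability bound (and hence the union bound) legitimate.
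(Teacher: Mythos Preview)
Your argument is correct and is exactly the standard Mulmuley--Vazirani--Vazirani proof via deferred decisions and a union bound. Note, however, that the paper does not give its own proof of this lemma at all; it simply quotes the Isolation Lemma with a citation to~\cite{MulmuleyVV87}, so there is nothing in the paper to compare against beyond observing that your write-up reproduces the original source's argument faithfully.
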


We will apply Lemma~\ref{lem:iso} to isolate the set of proper colorings of $G$. To this end, we use the set~$V(G) \times [q]$ of vertex/color pairs as our universe~$U$, and consider a weight function~$\omega \colon V(G) \times [q] \to \mathbb{Z}$.

\begin{definition}
A \emph{$q$-coloring} of $G$ is a vector $x \in [q]^n$, and it is proper if $x_i\neq x_j$ for every $(i,j) \in E(G)$. The \emph{weight} of $x$ is $\omega(x)=\sum_{i = 1}^n\omega((i,x_i))$.
\end{definition}

Let $\omega \colon V(G) \times [q] \rightarrow [2nq]$ be a random weight function, i.e.~for every $v \in V(G)$ and $c \in [q]$ we pick an integer from $[2nq]$ uniformly and independently at random. For every integer $z$ we associate a number $P_G(z)$ with $G$, as follows:
\begin{equation}\label{eq:poly}
P_G(z)=\sum_{\substack{x \in [q]^{n} \\ \omega(x)=z}}\prod_{(i,j) \in E(G)}\left( x_i-x_j\right). 
\end{equation}
If $G$ has no proper $q$-coloring, then $P_G(z)=0$ since for every $q$-coloring $x$ there will be an edge $(i,j) \in E$ for which $x_i=x_j$ and therefore the product in~\eqref{eq:poly} vanishes. We claim that if $G$ has a proper $q$-coloring, then with probability at least~$1/2$ there exists $z\leq 2qn$ such that $P_G(z) \neq 0$, which means we get a correct algorithm with high probability by repeating a polynomial in~$n$ number of times. Let $\mathcal{F} = \{ \{(i,x_i)\}_{i \in V} : x \text{ is a proper $q$-coloring of $G$} \} \subseteq 2^U$. As $\mathcal{F}$ is non-empty, we may apply Lemma~\ref{lem:iso} to obtain that $\omega$ isolates $\mathcal{F}$ with probability at least $1/2$. Conditioned on this event, there must exist an integer $w$ such that there is exactly one proper $q$-coloring $x$ of $G$ satisfying $\omega(x)=z$. In this case, $x$ is the only summand in~\eqref{eq:poly} that can have a non-zero contribution. Moreover, as it is a proper coloring, its contribution is a product of non-zero entries and therefore non-zero itself. Thus $P_G(z)$ is non-zero with probability at least~$1/2$.

We now continue by showing how to compute $P_G(z)$ for all $z \leq 2qn$ quickly using dynamic programming. Note that by expanding the product in~\eqref{eq:poly} we have:
\begin{equation}\label{eq:poly2}
	P_G(z) = \sum_{\substack{x \in [q]^{n} \\ \omega(x)=z}}\sum_{W \subseteq E(G)}\left(\prod_{(u,v) \in W}x_u \right) \left(\prod_{(u,v) \in E(G) \setminus W}-x_v \right).
\end{equation}
If $B_i$ is a bag of a path decomposition (Section~\ref{sec:preliminaries}), we need to define table entries $T_i$ containing all information about the graph $(V_i,E_i)$ needed to compute $P_G(z)$. 
Before we describe these table entries we make a small deviation to convey intuition about our approach. Specifically, we may interpret $P_G(z)$ as a polynomial in variables $x_v$ for $v \in B_i$. Now suppose for simplicity that $|B_i|=1$. Then the amount of information about $E_i$ needed to compute $P_G(z)$ may be studied via a simple communication-complexity game that we now outline.

\subparagraph*{A One-way Communication Protocol}
Alice has a univariate polynomial $P_A(x)$ of degree $d_A$, and Bob has a univariate polynomial $P_B(x)$ of degree $d_B$. Both parties know $d_A,d_B$ and an additional integer $q$. Alice needs to send as few bits as possible to Bob after which Bob needs to output the quantity $\sum_{x \in [q]} P_A(x) P_B(x)$, where $q \in \mathbb{N}$ is known to both.

An easy strategy is that Alice sends the $d_A+1$ coefficients of her polynomial to Bob. An alternative strategy for Alice is based on partial evaluations, which is useful when~$d_B < d_A$. By expanding Bob's polynomial in coefficient form we can rewrite $\sum_{x \in [q]} P_A(x) P_B(x)$ into
\[
 \sum_{x \in [q]} P_A(x) (c_0x^0+c_1x^1+\ldots+c_{d_B}x^{d_B}) = c_0 \sum_{x \in [q]} P_A(x) x^0 + \ldots + c_{d_B} \sum_{x \in [q]} P_A(x) x^{d_B},
\]
so as second strategy Alice may send the $d_B+1$ values $\sum_{x \in [q]} P_A(x) x^i$ for $i=0,\ldots,d^B$. So she can always send at most $\min\{d_A,d_B\}+1$ integers.

In our setting for defining table entries $T_i$ for evaluating $P_G(z)$, we think of $d_A(v)$ as the number of edges in $E_i$ incident to $v$ and of $d_B(v)$ as the number of edges incident to $v$ not in $E_i$. Roughly speaking, the running time of Theorem~\ref{thm:qcolpwtech} is obtained by defining table entries storing Alice's message, in which she chooses the best of the two strategies independently for every vertex.

\subparagraph*{Definition of the Table Entries}
An \emph{orientation} $O$ of a subset $X \subseteq E(G)$ of edges is a set of directed pairs such that for every $\{u,v\} \in X$, either $(u,v) \in O$ or $(v,u) \in O$. If $O$ is an orientation of $X$, we also say $O$ \emph{orients} $X$. The number of \emph{reversals} $\mathrm{rev}(O)$ of $O$ is the number of $(v,u) \in O$ such that $u$ is introduced in a bag before the bag in which $v$ is introduced. An orientation is \emph{even} if its number of reversals is even, and it is \emph{odd} otherwise. 

For a fixed path decomposition~$B_1, \ldots, B_\ell$ of the input graph~$G$, let $L_i \subseteq B_i$ consist of all vertices in $B_i$ of which at most half of their incident edges are already introduced in~$B_i$ or a bag before $B_i$, and let $R_i = B_i \setminus L_i$.
Let $l^i$ be the vector indexed by $L_i$ such that for every $v \in L_i$ the value $l^i_v$ denotes the number of edges incident to $v$ already introduced before or at bag $B_i$. Similarly, let $r^i$ be the vector indexed by $R_i$ such that for every $v \in R_i$ the value $r^i_v$ denotes the number of edges incident to $v$ introduced \emph{after} bag~$B_i$. So for every~$i$ we have $d(v)=l^i_v+r^i_v$.

If $b \in \mathbb{N}_{\geq 0}^{I}$ is a vector, we denote $\mathcal{P}(b)$ for the set of vectors $a$ in $\mathbb{N}_{\geq 0}^{I}$ such that $a\preceq b$. Here $a \preceq b$ denotes that $a_v \leq b_v$ for every $v \in I$.  
For $d \in \mathcal{P}(l^i)$ and $e \in \mathcal{P}(r^i)$, define:
\begin{equation}\label{eq:tableentries}
T^z_i[d,e] = \sum_{\substack{x \in [q]^{V_i \setminus L_i} \\ \omega(x)=z}} \sum_{\substack{O \text{ orients } E_i \\ \forall u\in L_i: d^+_O(u)=d_u}} (-1)^{\mathrm{rev}(O)} \left( \prod_{u \in V_i \setminus L_i} x^{d^+_O(u)}_u \right)  \left(\prod_{u \in R_i} x_u^{e_u} \right).
\end{equation}

Intuitively, this could be seen as a partial evaluation of $P_G(z)$. Note we sum over all possible $x_v \in [q]$ for $v \in V_i \setminus L_i$, but let the values $x_v$ for $v \in L_i$ be undetermined and store the coefficient in the obtained polynomial of a certain monomial $\prod_{u \in R_i} x_u^{e_u}$. Indeed, it is easily seen that $P_G(z)$ equals $T^z_\ell[\emptyset,\emptyset]$, where $\emptyset$ is the unique $0$-dimensional vector. By combining the appropriate recurrence for all values $T^z_i[d,e]$ with dynamic programming, the following lemma is proved in Appendix~\ref{sec:rec}.

\begin{lemma}\label{lem:rec}All values $T^z_{i}[d,e]$ can be computed in time $\poly(n) \cdot \sum_{i=1}^\ell T_i$, where
\[
T_i = |\mathcal{P}(l^i)| \cdot |\mathcal{P}(r^i)|=\prod_{v \in B_i}(\min\{d_{E_i}(v),d(v)-d_{E_i}(v) \}+1).
\] 
\end{lemma}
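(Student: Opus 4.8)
The plan is to prove Lemma~\ref{lem:rec} by setting up a dynamic-programming recurrence for the values $T^z_i[d,e]$ that follows the structure of a nice path decomposition, handling the five bag types (first/last, introduce vertex, introduce edge, forget) separately, and then bounding the total work by the stated product $\prod_{i} T_i$.

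\begin{proof}[Proof sketch of Lemma~\ref{lem:rec}]
We compute the values $T^z_i[d,e]$ by dynamic programming over the nice path decomposition $B_1,\ldots,B_\ell$, processing bags from left to right, and for each bag storing all entries for every $z \in \{0,\ldots,2qn\}$, every $d \in \mathcal{P}(l^i)$, and every $e \in \mathcal{P}(r^i)$. For the \textbf{first bag} we have $B_1 = \emptyset$, so $V_1 = \emptyset$, $E_1 = \emptyset$, and the only entry is $T^z_1[\emptyset,\emptyset]$, which is $1$ if $z = 0$ and $0$ otherwise (the empty orientation has no reversals, and the empty coloring has weight $0$). For an \textbf{introduce-vertex bag} that introduces $v$, no new edges are added, so $E_i = E_{i-1}$; since $v$ has no incident edges yet introduced, it starts in $L_i$ with $l^i_v = 0$, so the only admissible value for its coordinate in $d$ is $0$, and $x_v$ does not yet appear in any product; hence $T^z_i[d,e] = T^z_{i-1}[d',e']$ after removing the (forced-zero) $v$-coordinate, where one must track whether any vertex of $B_{i-1}$ crosses the $L/R$ threshold --- but since $E_i = E_{i-1}$ the sets $l^i,r^i$ restricted to $B_{i-1}$ are unchanged, so this is a pure relabeling. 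For a \textbf{forget bag} forgetting $v$: $v$ leaves the bag, and by the properties of a path decomposition all edges incident to $v$ have already been introduced, so $v \in L_i$ must actually have all its edges introduced, meaning $l^{i-1}_v = d(v)$; we sum $T^z_{i-1}[d,e]$ over the $v$-coordinate $d_v \in \{0,\ldots,d(v)\}$, absorbing the coordinate into $x_v$'s contribution exactly as the summand $x_v^{d^+_O(v)}$ with $x_v$ ranging over $[q]$ and weight $\omega((v,x_v))$ added --- i.e.\ $v$ passes from the ``undetermined'' group into the group $V_i \setminus L_i$ that is summed over, which is precisely where the one-way-protocol ``coefficient vs.\ evaluation'' trade-off is realized.

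The crucial case is the \textbf{introduce-edge bag} introducing $\{u,v\}$ with $u,v \in B_i = B_{i-1}$. Here $E_i = E_{i-1} \cup \{\{u,v\}\}$, so the counts $d_{E_i}(u),d_{E_i}(v)$ each increase by one, which may move $u$ or $v$ between $L$ and $R$, and may change whether a vertex is in $L_i$ vs $L_{i-1}$. The recurrence splits the new edge into its two orientations $(u,v)$ and $(v,u)$: orienting it $(u,v)$ contributes a factor accounting for $u$ gaining one unit of out-degree (incrementing its exponent/count) and no reversal or a reversal depending on introduction order of $u,v$, and symmetrically for $(v,u)$. Concretely, for each orientation we shift the relevant coordinate of $d$ (if the endpoint is in $L$) or multiply in an extra power of $x$ (if the endpoint is summed over) or record it in $e$ (if the endpoint is in $R$), and multiply by $(-1)$ if that orientation reverses relative to introduction order. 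The subtlety is that after adding this edge a vertex may flip from $L_{i-1}$ to $R_i$ (it now has more than half its edges introduced): at that moment its ``coordinate in $d$'' must be converted into the product form $x_u^{r^i_u}$-style bookkeeping, i.e.\ from an evaluation-style coordinate to a coefficient-style coordinate; this conversion is exactly the observation in the one-way protocol that Alice can switch strategies, and it is consistent because at the flip point $d_{E_i}(u)$ equals roughly $d(u)/2$ so both descriptions have the same size. I would phrase each of these as an explicit summation identity and verify it by unfolding Definition~\eqref{eq:tableentries} on both sides: the orientations of $E_i$ are in bijection with (orientations of $E_{i-1}$) $\times$ (orientation of the new edge), the reversal parity is additive, and the monomial/weight contributions factor accordingly.

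For the \textbf{running time}, at bag $B_i$ the number of entries is $(2qn+1) \cdot |\mathcal{P}(l^i)| \cdot |\mathcal{P}(r^i)|$, and each of the above recurrences computes one entry from a constant number (or, for the forget and introduce-edge cases, $\mathrm{poly}(n)$ many via a sum over a single coordinate of range at most $d(v)+1 \leq n$) of previously-computed entries; hence the work at bag $i$ is $\mathrm{poly}(n) \cdot |\mathcal{P}(l^i)| \cdot |\mathcal{P}(r^i)|$. Since $|\mathcal{P}(b)| = \prod_{v}(b_v+1)$ for any vector $b$, and $L_i$ collects exactly the vertices with $d_{E_i}(v) \leq d(v)/2$ (so $l^i_v = d_{E_i}(v) = \min\{d_{E_i}(v),d(v)-d_{E_i}(v)\}$) while $R_i$ collects the rest (so $r^i_v = d(v)-d_{E_i}(v) = \min\{d_{E_i}(v),d(v)-d_{E_i}(v)\}$), we get $|\mathcal{P}(l^i)| \cdot |\mathcal{P}(r^i)| = \prod_{v \in B_i}(\min\{d_{E_i}(v),d(v)-d_{E_i}(v)\}+1) = T_i$. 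Summing over the $\ell = \mathrm{poly}(n)$ bags gives the claimed bound $\mathrm{poly}(n) \cdot \sum_{i=1}^\ell T_i$.

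\textbf{Main obstacle.} The routine part is the first/last, introduce-vertex, and forget bags; the genuinely delicate step is the introduce-edge recurrence combined with the $L/R$ reclassification. Getting the bookkeeping exactly right --- tracking which coordinate of $(d,e)$ corresponds to which vertex, correctly converting a $d$-coordinate into an $e$-coordinate (or into a summed-over variable) at the precise moment a vertex's edge-count crosses the $d(v)/2$ threshold, and verifying that the reversal-parity signs $(-1)^{\mathrm{rev}(O)}$ compose correctly under splitting an orientation of $E_i$ into an orientation of $E_{i-1}$ plus a choice for the new edge --- is where the real care is needed, and is presumably why the full proof is deferred to Appendix~\ref{sec:rec}.
\end{proof}
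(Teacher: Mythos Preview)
Your overall approach matches the paper's: dynamic programming over a nice path decomposition, with the delicate step being the $L\to R$ reclassification at an introduce-edge bag. You correctly flag that as the crux.

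However, your forget-bag case is wrong and internally inconsistent with the rest of your sketch. You write that just before $v$ is forgotten it lies in $L$ with $l^{i-1}_v = d(v)$, and that the summation over $x_v$ and over $d_v\in\{0,\ldots,d(v)\}$ is performed at this point. But by definition $L_{i-1}$ contains only vertices with at most half their incident edges introduced; since all of $v$'s edges are introduced before $v$ is forgotten (and $v$ is not isolated), necessarily $v\in R_{i-1}$ with $r^{i-1}_v=0$. The forget step is therefore the trivial relabeling
\[
T^z_i[d,e]\;=\;T^z_{i-1}\bigl[d,\,e\cup\{(v,0)\}\bigr],
\]
with no summation at all. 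The summation over $x_v$ and $d_v$ that you describe is exactly what happens at the introduce-edge step when $v$ flips from $L_{i-1}$ to $R_i$ (the case you already singled out), namely
\[
T^z_{i,1}[d,e]\;=\;\sum_{0\le d_v\le l^{i-1}_v}\;\sum_{x_v\in[q]} T^{\,z-\omega((v,x_v))}_{i-1}\bigl[d\cup\{(v,d_v)\},\,e\setminus\{(v,e_v)\}\bigr]\,x_v^{\,e_v+d_v+1},
\]
not at forget. If you actually deferred this summation to the forget bag, you would have to carry a $d$-coordinate for $v$ of range $d(v)+1$ through all intermediate bags, and the per-bag table size would be $\prod_{v\in B_i}(d_{E_i}(v)+1)$ rather than $T_i$; the $\min$ in the stated bound would be lost.

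A minor terminological point: your ``evaluation-style'' versus ``coefficient-style'' labels for the $L\to R$ transition are reversed relative to the one-way-protocol analogy. Vertices in $L$ are tracked via the exponent $d_v$ (coefficient form); vertices in $R$ have had $x_v$ summed out against the anticipated factor $x_v^{e_v}$ (partial-evaluation form).
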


Thus $P_G(z)$ can be computed in the time stated in Theorem~\ref{thm:qcolpwtech}. As discussed, $P_G(z)=0$ if $G$ has no proper $q$-coloring. Otherwise, $\omega$ isolates the set of proper $q$-colorings of $G$ with probability at least $1/2$. Conditioned on this event we have $P_G(z) \neq 0$, where $z$ is the weight of the unique minimum-weight $q$-coloring. Therefore we output \textsc{yes} if $P_G(z) \neq 0$ for some~$z$ and obtain the claimed probabilistic guarantee. This concludes the proof of Theorem~\ref{thm:qcolpwtech}. 

As special cases of Theorem~\ref{thm:qcolpwtech} we obtain Theorems~\ref{thm:coloring:randomized} and~\ref{thm:alg:degree}.

\begin{proof}[Proof of Theorem~\ref{thm:coloring:randomized}.]
Given a linear layout $v_1,\ldots,v_n$ of cutwidth~$k$, define a nice path decomposition in which vertices are introduced in the order of the layout. After~$v_i$ is introduced, its incident edges to~$v_j$ with~$j < i$ are introduced in arbitrary order. Forget~$v_i$ directly after the series of edge introductions that introduced its last incident edge.
	
As $v_1,\ldots,v_n$ has cutwidth at most $k$, for any bag $B_i$ of this path decomposition the number of edges between $V_i$ and $V \setminus V_i$ is at most $k$.  Together with the edges incident on the most-recently introduced vertex~$v_j$, these $k$ edges are the only edges incident on~$B_i$ that are not in $E_i$. Consider the term~$\prod_{v \in B_i}(\min\{d_{E_i}(v),d(v)-d_{E_i}(v) \}+1)$. Vertex~$v_j$ contributes at most one factor~$n$. For the remaining vertices in~$B_i$, the only incident edges not in~$E_i$ are those in the cut of size at most~$k$. By the AM-GM inequality, their contribution to the product is maximized when they are all incident to distinct vertices, in which case the algorithm of Theorem~\ref{thm:qcolpwtech} runs in time $\Oh^*(2^k)$.
\end{proof}	

\begin{proof}[Proof of Theorem~\ref{thm:alg:degree}]
Follows from Theorem~\ref{thm:qcolpwtech}: $\min\{d_{E_i}(v),d(v)-d_{E_i}(v)\} \leq \lfloor d(v)/2 \rfloor$.
\end{proof}

\section{Lower Bounds for Graph Coloring}\label{sect:lowmain}
In this section we discuss the main ideas behind our lower bounds, whose proofs are deferred to the appendix. We first start with Theorem~\ref{thm:3collow}, which rules out algorithms for solving \textsc{$3$-Coloring} in time~$\Oh^*((2-\varepsilon)^\cutw)$, even on \emph{planar graphs}. (We remark that a companion paper~\cite{GeffenJKM18} was the first to present lower bounds for planar graphs of bounded cutwidth.) The overall approach is based on the framework by Lokshtanov et al.~\cite{LokshtanovMS11}. We prove that an $n$-variable instance of CNF-SAT can be transformed in polynomial time into an equivalent instance of \textsc{$3$-Coloring} on a planar graph~$G$ with a linear layout of cutwidth~$n + \Oh(1)$. Consequently, saving~$\varepsilon$ in the base of the exponent when solving graph coloring would violate SETH. By employing clause-checking gadgets in the form of a path~\cite{JaffkeJ17}, crossover gadgets~\cite{GareyJS76}, and a carefully constructed ordering of the graph, we get the desired reduction.

The second lower bound, Theorem~\ref{thm:degreecollow}, rules out algorithms with running time~$\Oh^*((\lfloor d/2 \rfloor + 1 - \varepsilon)^{\pw})$ for solving~$q$-\textsc{Coloring} for $q := \lfloor d /2 \rfloor + 1$ on graphs of maximum degree~$d$ and pathwidth~$\pw$, for any odd integer~$d \geq 5$. The reduction employs \emph{chains of cliques} to propagate assignments throughout a bounded-pathwidth graph. A $t$-chain of $q$-cliques is the graph obtained from a sequence of~$t$ vertex-disjoint $q$-cliques by selecting a distinguished \emph{terminal} vertex in each clique and connecting it to the $(q-1)$ non-terminals in the previous clique. Any proper $q$-coloring of a chain assigns all terminals the same color, and terminals have~$2(q-1)$ neighbors in the chain. Therefore, we can propagate a choice with~$q$ possibilities throughout a path decomposition. We encode truth assignments to variables of a \textsc{CNF-SAT} instance through colors given to the terminals of such chains. We enforce that the encoded truth assignment satisfies a clause, by enforcing that an assignment that does \emph{not} satisfy the clause, is not the one encoded by the coloring. To check this, we take one terminal from each chain and connect it to a partner on a path gadget that forbids a specific coloring. Hence each vertex on a chain will receive at most one more neighbor, giving a maximum degree of~$d := 2(q-1)+1 = 2q-1$ to represent a \qColoring instance. Then solving this \qColoring instance in~$\Oh^*((\lfloor d/2 \rfloor + 1 - \varepsilon)^\pw) = \Oh^*(((q-1) + 1 - \varepsilon)^\pw)$ time will contradict SETH for the same reason as in the earlier construction~\cite{LokshtanovMS11} showing the impossibility of~$\Oh^*((q-\varepsilon)^\pw)$-time algorithms.

\section{Conclusion}\label{sec:conc}

We showed how graph decompositions using small edge separators can be used to solve \textsc{$q$-Coloring}. The exponential parts of the running times of our algorithms are independent of~$q$, which is a significant difference compared to algorithms for parameterizations based on vertex separators. The deterministic~$\Oh^*(2^{\omega\cdot \cutw})$ algorithm of Theorem~\ref{thm:cw} for the cutwidth parameterization follows cleanly from the bound on the rank of the partial solutions matrix. It may serve as an insightful new illustration of the rank-based approach for dynamic-programming algorithms in the spirit of~\cite{BodlaenderCKN15,CyganKN13,CyganNPPRW11,FominLPS16}.

One of the main take-away messages from this work from a practical viewpoint is the following. Suppose~$H$ is a subgraph of~$G$ connected to the remainder of the graph by~$k$ edges. Then any set of partial colorings~$\mathcal{S}$ of~$H$ can be reduced to a subset~$\mathcal{S'}$ of size~$2^k$, with the guarantee that if some coloring in~$\mathcal{S}$ could be extended to a proper coloring of~$G$, then this still holds for~$\mathcal{S'}$. The reduction can be achieved by an application of Gaussian elimination, which has experimentally been shown to work well for speeding up dynamic programming for other problems~\cite{FafianieBN15}. We therefore believe the table-reduction steps presented here may also be useful when solving graph coloring over tree- or path decompositions, and can be applied whenever processing a separator consisting of few edges.

\bibliography{Cutwidth}

\clearpage

\appendix

\section{Proof of Lemma~\ref{lem:rec}: A Recurrence for Computing Table Entries for the Randomized Algorithm}\label{sec:rec}

In this section we provide a recurrence to compute table entries as defined in~\eqref{eq:tableentries}. We make a case distinction based on the bag type:

\begin{description}
	\item[\textbf{Left-most Bag:}] If $B_i$ is a leaf bag (i.e.~$i=1$ and~$V_i = \emptyset$), then~$V_i \setminus L_i$ has one $q$-coloring of weight $0$ and orientation of the empty set which amounts to one summand which contributes $1$, so
	\[
	T^z_1[\emptyset,\emptyset] =
	\begin{cases}
	1, & \text{if } z=0\\
	0, & \text{otherwise}. 
	\end{cases}
	\]
	where $\emptyset$ denotes the unique $0$-dimensional vector.
	\item[\textbf{Introduce Vertex Bag:}] If $v$ is the vertex introduced in bag $B_{i}$, then~$v \in L_{i}$ as $l^{i}_v=0$ and we see that $T^z_{i}[d \cup \{(v,0)\},e]$ equals
	\begin{align*}
	& \sum_{\substack{x \in [q]^{V_{i-1} \setminus L_{i-1}}\\ \omega(x)=z}} \sum_{\substack{O \text{ orients } E_{i-1} \\ \forall u\in L_{i-1}: d^+_O(u)=d_u \\ d^+_O(v)=0}} (-1)^{\mathrm{rev}(O)} \left( \prod_{u \in V_{i-1} \setminus L_{i-1}} x^{d^+_O(u)}_u \right)  \left(\prod_{u \in R_{i-1}} x_u^{e_u} \right)\\
	&= T^z_{i-1}[d,e].
	\end{align*}
	\item[\textbf{Forget Vertex Bag:}] If $v$ is the vertex forgotten in bag $B_{i}$, then~$v \in R_{i-1}$ as $G$ has no isolated vertices, implying~$L_i = L_{i-1}$ and~$R_i = R_{i-1} \setminus \{v\}$. As~$B_i$ is a forget bag, we have~$V_i = V_{i-1}$ and~$E_i = E_{i-1}$. Since~$x^0_v = 1$ for any~$x_v\in [q]$, we see that~$T^z_{i}[d,e]$ equals
	\begin{align*}
	&\sum_{\substack{x \in [q]^{V_{i} \setminus L_{i}}\\ \omega(x)=z}} \sum_{\substack{O \text{ orients } E_{i} \\ \forall u\in L_{i}: d^+_O(u)=d_u}} (-1)^{\mathrm{rev}(O)} \left( \prod_{u \in V_{i} \setminus L_{i}} x^{d^+_O(u)}_u \right) \left(\prod_{u \in R_{i}} x_u^{e_u} \right) x^0_v\\ 
	&=\sum_{\substack{x \in [q]^{V_{i-1} \setminus L_{i-1}}\\ \omega(x)=z}} \hspace{3em} \sum_{\mathclap{\substack{O \text{ orients } E_{i-1} \\ \forall u\in L_{i-1}: d^+_O(u)=d_u}}} (-1)^{\mathrm{rev}(O)} \left( \prod_{u \in V_{i-1} \setminus L_{i-1}} x^{d^+_O(u)}_u \right) \left(\prod_{u \in R_{i-1}} x_u^{(e \cup \{(v,0)\})_u} \right)\\
	&= T^z_{i-1}[d,e \cup \{(v,0)\}].
	\end{align*}
	
	\item[\textbf{Introduce Edge Bag:}] Suppose the edge $(v,w)$ is introduced in bag $B_{i}$, so that~$V_i = V_{i-1}$. Let $f_1$ denote the edge $(v,w)$ and $f_2$ denote its reversal $(w,v)$. For $j \in \{1,2\}$ define 	
	\[
	T^z_{i,j}[d,e] = \sum_{\substack{x \in [q]^{V_i \setminus L_{i}}\\ \omega(x)=z}} \sum_{\substack{O \text{ orients } E_{i} \\ f_j \in O \\ \forall u\in L_{i}: d^+_O(u)=d_u}} (-1)^{\mathrm{rev}(O)} \left( \prod_{u \in V_i \setminus L_{i}} x^{d^+_O(u)}_u \right)  \left(\prod_{u \in R_{i}} x_u^{e_u} \right).
	\]
	We have $T^z_i[d,e]=T^z_{i,1}[d,e]+T^z_{i,2}[d,e]$, so it remains to compute the latter two expressions. We will focus on how to compute $T^z_{i,1}[d,e]$ as computing $T^z_{i,2}[d,e]$ is done similarly by replacing $v$ with $w$, replacing~$f_1$ by~$f_2$, and multiplying by $-1$ (to account for $\mathrm{rev}(O)$). We distinguish the following cases:
	
	\begin{description}
		\item[\textbf{$v$ is in $L_i$:}] As $f_1$ contributes one to $d^+_O(v)$, we have
		\begin{align*}
		T^z_{i,1}[d \cup \{(v,d_v)\},e] &= T^z_{i-1}[d \cup \{(v,d_v-1)\},e].
		\end{align*}
		\item[\textbf{$v$ is in $R_i \cap R_{i-1}$:}] As $f_1$ contributes one to $d^+_O(v)$ we see that $T^z_{i,1}[d,e] = T^z_{i-1}[d,e]x_v$ (interpreting~$x_v$ as being bound by the quantification in the expression for~$T^z_{i-1}[d,e]$) so that we obtain
		\begin{align*}
		T^z_{i,1}[d,e \cup \{(v,d_v)\}] &= T^z_{i-1}[d,e \cup \{(v,d_v+1)\}].
		\end{align*}
		\item[\textbf{$v$ is in $R_i \cap L_{i-1}$:}] This is the most complicated case, which occurs when the introduction of the edge~$(v,w)$ changes the status of vertex~$v$ from being a vertex in~$L_{i-1}$ of which at most half the incident edges are introduced, to a vertex in~$R_i$ of which more than half its incident edges are introduced. The expression for~$T^z_{i,1}[d,e]$ sums over the colorings of vertices of the set~$V_i \setminus L_i$ which contains~$v$, while the expressions for~$T_{i-1}$ sum over colorings of the set~$V_{i-1} \setminus L_{i-1}$ which does not contain~$v$. Moreover, the expression for~$T^z_{i,1}[d,e]$ sums over all orientations of~$E_i$, regardless of the out-degree of~$v$ in that orientation, and includes a factor~$x_v^{d^+_O(v)}$ in the first product, while the expressions for~$T_{i-1}$ only sum over orientations in which the out-degree of~$v \in L_{i-1}$ is as specified by~$d_v$, and contains no such factor. Finally, the expression for~$T^z_{i,1}[d,e]$ contains a term~$x_v^{e_v}$ in the second product since~$v \in R_i$, but the expressions for~$T_{i-1}$ do not. So to obtain~$T^z_{i,1}[d,e]$ from the values of~$T_{i-1}$, we sum over all relevant choices of out-degree~$d_v$ and color~$x_v$, and multiply the values of the appropriate subproblems by~$x_v^{e_v + d_v + 1}$. The final~$+1$ stems from the fact that~$f_1 \in O$ contributes one to~$d^+_O(v)$. 
		
		With this in mind, observe that~$T^z_{i,1}[d,e]$ equals
		\begin{align*}
\hspace{-4em}&\sum_{\substack{0 \leq d_v \leq l^{i-1}_v \\ x_v \in [q]}}\sum_{\substack{ x \in [q]^{V_{i-1} \setminus L_{i-1}} \\ \omega(x)+\omega((v,x_v))=z}} \hspace{3em} \sum_{\mathclap{\substack{O \text{ orients } E_{i} \\ f_1 \in O \\ \forall u\in L_{i-1}: d^+_O(u)=d_u}}} (-1)^{\mathrm{rev}(O)}\hspace{-0.3em}\left( \prod_{v \in V_{i-1} \setminus L_{i-1}} x^{d^+_O(u)}_u \right)\hspace{-0.5em}\left(\prod_{u \in R_{i-1}} x_u^{e_u} \right)\hspace{-0.3em}x^{e_v+d_v+1}_v\\
		\hspace{-5em}& \hspace{10em}  = \sum_{0 \leq d_v \leq l^{i-1}_v}\sum_{x_v \in [q]} T^{z-\omega((v,x_v))}_{i-1}[d \cup \{(v,d_v)\},e \setminus \{(v,e_v)\}]x^{e_v+d_v+1}_v.		
		\end{align*}
	\end{description}
\end{description}

Recall the range of the weight function~$\omega \colon V(G) \times [q] \rightarrow [2nq]$. Using the above recurrence, we can compute $P_G(z)$ for every $0 \leq z \leq 2nq$ in time $\poly(n) \cdot \sum_{i=1}^\ell T_i$ where 
\[
T_i = |\mathcal{P}(l^i)| \cdot |\mathcal{P}(r^i)|=\prod_{v \in B_i}(\min\{d_{E_i}(v),d(v)-d_{E_i}(v) \}+1).
\]

This concludes the proof of Lemma~\ref{lem:rec}.

\section{Lower bounds for graph coloring}\label{sect:low}
\newcommand{\qdColoring}{\textsc{$q_d$-Coloring}\xspace}
\newcommand{\sSAT}{\textsc{$s$-SAT}\xspace}
\newcommand{\Hvc}{\ensuremath{H_{\mathrm{\textsc{vc}}}}\xspace}
\newcommand{\Hds}{\ensuremath{H_{\mathrm{\textsc{vc}}}}\xspace}
\newcommand{\Hcol}{\ensuremath{H_{\mathrm{\textsc{col}}}}\xspace}

\subsection{Lower bounds for 3-coloring on planar graphs of bounded cutwidth}
In this section we prove that \textsc{$3$-Coloring} on planar graphs of cutwidth~$k$ cannot be solved in~$\Oh^*((2-\varepsilon)^k)$ time for any~$\varepsilon > 0$, unless SETH fails. To lift our hardness result to planar graphs, we will employ ideas from the NP-completeness proof of \textsc{$3$-Coloring on planar graphs} by Garey, Johnson, and Stockmeyer~\cite[Thm. 2.2]{GareyJS76}. It relies on the gadget graph~$\Hcol$ due to Michael Fischer that is shown in Figure~\ref{fig:threecoloring:gadget}.

\begin{observation}
Any proper $3$-coloring~$f \colon V(\Hcol) \to [3]$ of~$\Hcol$ satisfies~$f(u) = f(u')$ and~$f(v) = f(v')$. Conversely, any coloring~$f' \colon \{u,u',v,v'\} \to [3]$ with~$f'(u) = f'(u')$ and~$f'(v) = f'(v')$ can be extended to a proper $3$-coloring of~$\Hcol$.
\end{observation}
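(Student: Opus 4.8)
The plan is to prove both directions by propagating coloring constraints through the triangles that make up~$\Hcol$. Since~$\Hcol$ is a fixed graph on constantly many vertices, the whole statement is in principle decidable by exhaustively enumerating its~$3^{|V(\Hcol)|}$ colorings; but the propagation argument is the more illuminating one, and it is the one I would write up. The only tool needed is the elementary observation that the three vertices of any triangle must receive three pairwise distinct colors, so that the color of the third vertex of a triangle is forced once the colors of the other two are known.

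For the forward direction I would start from an arbitrary proper~$3$-coloring~$f$ of~$\Hcol$ and read off the triangles of~$\Hcol$ (as drawn in Figure~\ref{fig:threecoloring:gadget}) in an order in which each triangle shares at least two vertices with the union of the previously processed triangles. Processing them in this order, the color of every internal vertex becomes determined by the colors of its predecessors, with at most one free choice left (the global relabeling freedom in~$S_3$ together with a single ``seed'' triangle). Carrying this determination all the way out to the terminals forces~$f(u)=f(u')$ and~$f(v)=f(v')$: any~$f$ violating one of these equalities would, when the constraints are traced back, require some triangle of~$\Hcol$ to carry a repeated color, contradicting properness. The only thing that needs care here is the bookkeeping of which triangle to process when; conceptually there is nothing beyond ``triangles use all three colors''.

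For the backward direction, given~$f'$ on~$\{u,u',v,v'\}$ with~$f'(u)=f'(u')$ and~$f'(v)=f'(v')$, I would first cut down the number of cases: applying a permutation of~$[3]$ (which preserves properness) we may assume~$f'(u)=1$, and then there are only two essentially different subcases, namely~$f'(v)=f'(u)$ and~$f'(v)\neq f'(u)$, the values at~$u'$ and~$v'$ being pinned down by the hypothesis. In each of these two cases I would exhibit an explicit assignment of colors to the internal vertices of~$\Hcol$ and check edge by edge that the resulting coloring is proper. Because~$\Hcol$ has a constant number of vertices and edges, this is a finite, routine verification --- indeed it is exactly the check carried out by Garey, Johnson, and Stockmeyer~\cite{GareyJS76} for this gadget --- so the converse follows.

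The ``hard part'' is therefore purely organisational: one must fix a concrete labelling of~$\Hcol$ and track the color propagation through its triangles without slips, and one must not miss a subcase in the converse. There is no genuine mathematical obstacle; the statement is a finite-case lemma whose correctness is inherited from the classical planar~$3$-coloring hardness construction, and for the purposes of the cutwidth reduction all that matters is the two bullet-proof properties asserted --- that~$\Hcol$ acts as an equality gadget on the pairs~$\{u,u'\}$ and~$\{v,v'\}$, and that every admissible terminal pattern is realizable.
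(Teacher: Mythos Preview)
The paper does not supply a proof of this observation at all: it is stated without argument, with the implicit justification being the reference to Garey, Johnson, and Stockmeyer~\cite{GareyJS76}, where the gadget and its properties originate. Your proposal is correct and is essentially the verification one finds in that classical source --- propagate through the triangles for the forward direction, and exhibit explicit colorings (after symmetry reduction) for the converse. There is nothing to compare against in the present paper; your write-up simply fills in what the authors left as a citation.
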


We introduce some terminology for working with planar graphs, as well as drawings of non-planar graphs. A \emph{drawing} of a graph~$G$ is a function~$\psi$ that assigns a unique point~$\psi(v) \in \mathbb{R}^2$ to each vertex~$v \in V(G)$, and a curve~$\psi(e) \subseteq \mathbb{R}^2$ to each edge~$e \in E(G)$, such that the following four conditions hold. (1) For~$e = \{u,v\} \in E(G)$, the endpoints of~$\psi(e)$ are exactly~$\psi(u)$ and~$\psi(v)$. (2) The interior of a curve~$\psi(e)$ does not contain the image of any vertex. (3) No three curves representing edges intersect in a common point, except possibly at their endpoints. (4) The interiors of the curves~$\psi(e), \psi(e')$ for distinct edges intersect in at most one point. If the interiors of all the curves representing edges are pairwise-disjoint, then we have a \emph{planar drawing}. We will combine (non-planar) drawings with crossover gadgets to build planar drawings. A graph is \emph{planar} if it admits a planar drawing. 

The operation of \emph{identifying vertices~$u$ and~$v$} in a graph~$G$ results in the graph~$G'$ that is obtained from~$G$ by replacing the two vertices~$u$ and~$v$ by a new vertex~$w$ with~$N_{G'}(w) = N_G(\{u,v\})$. 

\begin{figure}[t]
\centering
\includegraphics{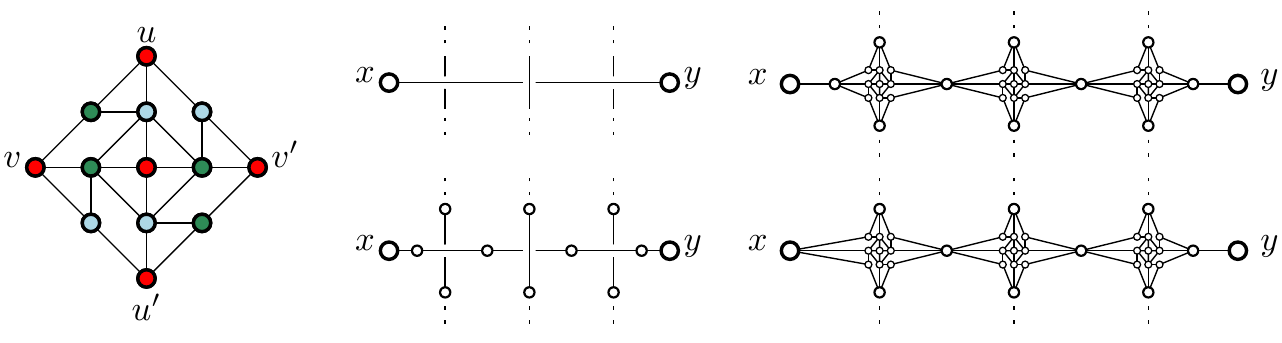}
\caption{Illustration for planarizing \textsc{$3$-Coloring}. Left: the graph~$\Hcol$ with a $3$-coloring in which all terminals are red. The rest of the figure illustrates Theorem~\ref{thm:planarize:threecol} on the crossed edge~$\{x,y\}$ in the top middle. The results of steps~\ref{step:threecol:subdivide},~\ref{step:threecol:insert}, and~\ref{step:threecol:merge} are shown in the middle bottom, top right, and bottom right figures, respectively.}
\label{fig:threecoloring:gadget}
\end{figure}

The following theorem is illustrated in Figure~\ref{fig:threecoloring:gadget}.

\begin{theorem}[{\cite[Thm 2.2]{GareyJS76}}] \label{thm:planarize:threecol}
Let~$G$ be a graph with a drawing~$\psi$, and define~$G'$ as follows:
\begin{enumerate}
	\item For each edge~$e = \{x,y\} \in E(G)$, consider the curve~$\psi(e)$. If the interior of~$\psi(e)$ is crossed, then insert new vertices on the curve~$\psi(e)$ between the endpoints of~$\psi(e)$ and their nearest crossing, and between consecutive crossings along~$\psi(e)$.\label{step:threecol:subdivide}
	\item Replace each crossing in the drawing by a copy of the graph~$\Hcol$, identifying the terminals~$u$ and~$u'$ with the nearest newly inserted vertices on either side along one curve, and identifying~$v$ and~$v'$ with the nearest newly inserted vertices on the other curve.\label{step:threecol:insert}
	\item For each~$\{x,y\} \in E(G)$, choose one endpoint as the \emph{distinguished endpoint} and identify it with the nearest new vertex on the curve~$\psi(\{u,v\})$.\label{step:threecol:merge}
\end{enumerate}
Then the resulting graph~$G'$ is planar, and it has a proper $3$-coloring if and only if~$G$ has one.
\end{theorem}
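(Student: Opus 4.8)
The plan is to verify the three properties claimed of $G'$: it is a planar graph, and $3$-colorability is preserved in both directions. Planarity is essentially built into the construction. After step~\ref{step:threecol:subdivide}, every crossing in the drawing of $G$ occurs between two edge-segments, each bounded by freshly inserted vertices. In step~\ref{step:threecol:insert} each such crossing is replaced by a disjoint copy of $\Hcol$, which we may embed inside a small disk around the crossing point; since $\Hcol$ is itself planar and we only glue it to the four nearest inserted vertices (two on each crossing curve), this locally removes the crossing without creating new ones. After all crossings are processed, the curves representing (subdivided) edges are pairwise interior-disjoint. Step~\ref{step:threecol:merge} only performs vertex identifications between a genuine endpoint and an adjacent degree-$2$ vertex lying on the same curve, which is a planarity-preserving contraction of an edge of the current drawing. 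Hence $G'$ admits a planar drawing; I would describe this by exhibiting the drawing $\psi'$ obtained by these local modifications of $\psi$.

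For the coloring equivalence, the key tool is the Observation about $\Hcol$: in any proper $3$-coloring the two terminals $u,u'$ get the same color, and likewise $v,v'$, and conversely any such ``equal-on-each-pair'' coloring of the four terminals extends to $\Hcol$. First I would handle the forward direction. Given a proper $3$-coloring $g$ of $G$, I want to build one of $G'$. Along each (possibly subdivided) curve $\psi(e)$ for $e=\{x,y\}\in E(G)$, I claim all the inserted vertices and the original endpoints receive consistent colors forced by the $\Hcol$ gadgets: at each crossing gadget on this curve, the pair of terminals attached on this curve's side must be monochromatic, so the color propagates unchanged from one side of the gadget to the other along the curve. Thus effectively the whole curve behaves like a single ``wire'' carrying one color, and by step~\ref{step:threecol:merge} the endpoints $x$ and $y$ sit at opposite ends of this wire. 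I would therefore color every vertex on the curve of $e=\{x,y\}$ with $g(x)$, except the vertices near the $y$-end which are forced to $g(y)$ by the identification in step~\ref{step:threecol:merge}; since $g(x)\neq g(y)$ this is consistent provided the wire is long enough to accommodate the color change at the crossing gadgets (each gadget copy, by the Observation, can realize any pair of terminal-colors on the two curves, so a color change on one curve is absorbable). Finally, extend the coloring inside each copy of $\Hcol$ using the converse part of the Observation. One checks all edges of $G'$ are properly colored: edges inside $\Hcol$ copies by construction, and subdivision edges along a wire because consecutive vertices have the same color except at the single designated transition, which I place at an $\Hcol$ gadget so the adjacency is internal to the gadget rather than a bare subdivision edge.

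For the reverse direction, suppose $G'$ has a proper $3$-coloring $g'$. For each $e=\{x,y\}\in E(G)$, consider the wire associated with $\psi(e)$: using the Observation at each $\Hcol$ copy along it, the color is constant on each maximal subsegment between consecutive gadgets, and — crucially — I need to argue that the colors at the two ends of the wire differ, i.e.\ $g'(x)\neq g'(y)$ as colors of the original vertices. This is where the construction must be set up so that each wire contains an \emph{odd} or otherwise suitably-constrained number of gadget transitions, or, more simply, so that step~\ref{step:threecol:subdivide} inserts enough vertices that the only way adjacency constraints inside $G'$ can be satisfied forces $g'(x)\ne g'(y)$; concretely, the last subdivision edge before the identified endpoint forces a color change, so that the endpoint color differs from its neighbor on the wire, and combined with the Observation-forced equalities this yields $g'(x)\neq g'(y)$. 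Then setting $g(v):=g'(v)$ for $v\in V(G)$ gives a proper $3$-coloring of $G$. I expect the main obstacle to be exactly this bookkeeping: making precise, for an arbitrary curve that may cross many other curves, that the composition of the ``transfer'' constraints imposed by the chain of $\Hcol$ copies together with the endpoint identifications is equivalent to the single constraint $g(x)\neq g(y)$, and in particular that the parity/counting of forced transitions works out (it does, because each $\Hcol$ copy imposes equality of a terminal pair, i.e.\ a transition is never forced by a gadget but only by a subdivision edge, so the net effect is just ``the two endpoints are non-adjacent-along-the-wire''). Everything else — planarity of $\Hcol$, the local nature of the surgery, the extension inside gadgets — is routine given the Observation.
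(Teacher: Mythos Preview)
The paper does not prove this theorem: it is quoted verbatim from Garey, Johnson, and Stockmeyer~\cite{GareyJS76} and used as a black box, so there is no ``paper's own proof'' to compare against. Your outline is the standard argument and is essentially correct, but the details of the coloring along a wire are muddled and in one place wrong.

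The clean picture is this. For a crossed edge~$e=\{x,y\}$ with, say,~$c$ crossings, step~\ref{step:threecol:subdivide} inserts~$c+1$ new vertices~$a_0,a_1,\ldots,a_c$ so that each segment~$a_{i-1}a_i$ contains exactly one crossing while~$xa_0$ and~$a_c y$ are crossing-free. After step~\ref{step:threecol:insert}, each pair~$a_{i-1},a_i$ becomes the~$u,u'$ terminals of one~$\Hcol$ copy, so the Observation forces~$f(a_0)=f(a_1)=\cdots=f(a_c)$ in any proper $3$-coloring. After step~\ref{step:threecol:merge} the distinguished endpoint (say~$x$) is identified with~$a_0$; the only remaining ordinary edge on this wire is~$\{a_c,y\}$. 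Hence in any proper $3$-coloring of~$G'$ we get~$f(x)=f(a_0)=\cdots=f(a_c)\neq f(y)$, and restricting to~$V(G)$ gives a proper $3$-coloring of~$G$. Conversely, given a proper $3$-coloring~$g$ of~$G$, set~$f(a_i):=g(x)$ for all~$i$ and~$f(y):=g(y)$; the edge~$\{a_c,y\}$ is proper since~$g(x)\neq g(y)$, and each~$\Hcol$ copy extends by the second half of the Observation. No parity considerations are needed, and \emph{no} color transition is placed at a gadget: the gadgets enforce equality along each curve, and the single inequality sits at the lone surviving subdivision edge~$\{a_c,y\}$ at the non-distinguished end. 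Your sentence ``I place [the transition] at an~$\Hcol$ gadget so the adjacency is internal to the gadget'' is therefore incorrect, and the worry about ``odd number of gadget transitions'' is unnecessary.
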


Using Theorem~\ref{thm:planarize:threecol} we present our lower bound for solving \textsc{$3$-Coloring} on planar graphs of bounded cutwidth. The following lemma is the main ingredient.

\begin{lemma} \label{lemma:coloring:lb:construction}
There is a polynomial-time algorithm that, given a CNF formula~$\phi$ on~$n$ variables, constructs a planar graph~$G$ together with a linear layout~$\pi$ with~$\cutw_\pi(G) \leq n + \Oh(1)$, such that~$\phi$ is satisfiable if and only if~$G$ is 3-colorable.
\end{lemma}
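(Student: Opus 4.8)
The plan is to give a reduction from \textsc{CNF-SAT} that proceeds in two stages: first build a (non-planar) graph~$G_0$ of cutwidth~$n + \Oh(1)$ whose $3$-colorability is equivalent to the satisfiability of~$\phi$, drawn in a controlled way; then apply Theorem~\ref{thm:planarize:threecol} to planarize it while arguing that the cutwidth only grows by a constant. For the first stage I would follow the approach of Lokshtanov et al.~\cite{LokshtanovMS11} and Jaffke--Jansen~\cite{JaffkeJ17}: use~$n$ ``wire'' vertices that propagate a truth assignment from left to right, by means of a sequence of triangle-based equality gadgets that repeatedly copy the color of each wire. Over these wires we will place, one clause at a time, a \emph{clause-checking path gadget} in the style of~\cite{JaffkeJ17}: a short path attached to the (copies of the) literals of a clause that is $3$-colorable precisely when the clause is not falsified by the encoded assignment. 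Fixing the palette so that one color plays the role of \textsc{false} and the other two of \textsc{true}, and chaining these gadgets left to right so that only the~$n$ wires plus~$\Oh(1)$ gadget vertices ever cross any single cut, yields a layout of cutwidth~$n + \Oh(1)$; correctness is the standard argument that a satisfying assignment extends to a proper $3$-coloring and conversely.

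The second stage is to fix a drawing~$\psi$ of~$G_0$ that respects the linear layout: put all vertices on a horizontal line in the order of~$\pi$ and route each edge as an $x$-monotone curve just above the line. The number of edges crossing a vertical gap is exactly the cutwidth, so at each gap at most~$n + \Oh(1)$ curves are present and hence at most~$\binom{n+\Oh(1)}{2} = \Oh(n^2)$ crossings occur in each ``column''. Now apply Theorem~\ref{thm:planarize:threecol}: subdivide crossed edges, replace each crossing by a copy of Fischer's gadget~$\Hcol$, and perform the distinguished-endpoint identifications. The resulting graph~$G$ is planar and, by the theorem, $3$-colorable iff~$G_0$ is, hence iff~$\phi$ is satisfiable. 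It remains to produce the linear layout of~$G$ and bound its cutwidth.

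The heart of the argument — and the step I expect to be the main obstacle — is constructing a linear layout~$\pi'$ of the planarized graph~$G$ whose cutwidth is still~$n + \Oh(1)$, rather than blowing up by the number of crossings. The key observation is that each original curve~$\psi(e)$ is replaced by a \emph{path}: a subdivided copy of~$e$ threaded through the $\Hcol$-gadgets it crosses, and each such gadget has constant size and is ``local'' to one crossing point. So I would order the vertices of~$G$ by sweeping the drawing from left to right, and whenever a gap of~$\psi$ is crossed by~$t \le n + \Oh(1)$ edges, insert the (constantly many) vertices of all $\Hcol$-gadgets that live in that column in an order that processes one gadget completely before moving to the next. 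Because $\Hcol$ has constant size and only its four terminals connect to the rest of~$G$ (two on each of the two curves passing through that crossing), handling one gadget adds only~$\Oh(1)$ to the number of edges crossing the current position on top of the~$t$ ``through'' edges already accounted for. Making this precise requires a careful accounting: one must check that the subdivision vertices inserted along a curve, together with the at most~$\Oh(1)$ half-edges of the currently-active gadget, never contribute more than a constant beyond the~$t$ curves that the corresponding gap of~$\psi$ carried; and that the distinguished-endpoint identifications of step~\ref{step:threecol:merge} do not create long edges that span many columns. Once this bookkeeping is done, $\cutw_{\pi'}(G) \le n + \Oh(1)$, and the lemma follows; combined with the standard sparsification/SETH argument this will in turn give Theorem~\ref{thm:3collow}.
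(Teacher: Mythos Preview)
Your first stage is close in spirit to what the paper does (it actually routes the construction through \textsc{List-$3$-Coloring} first and then eliminates the lists via three chains of triangles playing the role of a propagated palette; your sketch glosses over how the palette is realized without creating a high-degree vertex, but that is fixable along the same lines).

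The real problem is in the second stage, and it is exactly the step you flagged as the main obstacle. Your claimed accounting ``$t$ through-edges plus $\Oh(1)$ for the current gadget'' is not correct for the planarization obtained from a one-page book embedding. In the output of Theorem~\ref{thm:planarize:threecol}, the single subdivision vertex~$v$ inserted between two consecutive crossings on a curve becomes a \emph{shared terminal} of the two adjacent copies of~$\Hcol$. A terminal of~$\Hcol$ has degree at least two inside the gadget, so between the two gadget disks every vertical line is crossed by at least two edges emanating from that terminal, \emph{per original curve}. In a one-page embedding the~$n$ long wire-propagation edges all interleave with one another and with the clause-gadget edges, so all of them are crossed; hence at a generic position each of the~$t \le n+\Oh(1)$ through-curves contributes at least two edges, giving cutwidth $\ge 2n-\Oh(1)$ rather than~$n+\Oh(1)$. (A four-vertex example already exhibits this: two interleaving edges of cutwidth~$2$ planarize to a graph whose best layout has cutwidth~$4$.) With a factor-two loss you would only rule out bases below~$\sqrt{2}$, not below~$2$.

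The paper sidesteps this by \emph{not} using a book embedding. It lays~$G_2$ out on a two-dimensional grid of ``columns'' (one per clause) and ``cells'' (one per variable within a column), and designs the drawing so that the only edges leaving a column are the~$n$ variable-path edges $\{F_{i,j},T_{i,j+1}\}$ together with~$\Oh(1)$ chain edges, and crucially (Observation~\ref{obs:drawing:matching:uncrossed}) none of these inter-column edges are crossed. Thus after applying Theorem~\ref{thm:planarize:threecol} they remain single edges, every crossing---and hence every inserted copy of~$\Hcol$---lives inside a single cell with~$\Oh(1)$ vertices, and a column-major ordering of the cells gives cutwidth~$n+\Oh(1)$. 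The missing idea in your plan is precisely this: the drawing must be engineered so that the~$n$ edges carrying the assignment across each cut are \emph{uncrossed}; a generic arc diagram does not have this property.
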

\begin{proof}
The construction is split into three steps. (I) In the first step we transform~$\phi$ into a non-planar \textsc{List-$3$-Coloring} instance~$(G_1, L_1)$ of cutwidth~$n + \Oh(1)$. In this list-coloring instance, each vertex~$v \in V(G_1)$ is assigned a list~$L_1(v) \subseteq [3]$ of allowed colors. The question is whether~$G_1$ has a proper $3$-coloring~$f \colon V(G_1) \to [3]$ such that~$f(v) \in L_1(v)$ for each~$v \in V(G_1)$. We will ensure that such a list coloring exists if and only if~$\phi$ is satisfiable. (II) In the second step we transform the \textsc{List-$3$-Coloring} instance~$(G_1, L_1)$ into an equivalent instance~$G_2$ of the plain \textsc{$3$-Coloring} problem, while maintaining a bound of~$n + \Oh(1)$ on the cutwidth of~$G_2$. (III) Finally, we draw~$G_2$ in a particular way and use Theorem~\ref{thm:planarize:threecol} to turn it into a planar graph~$G_3$ without blowing up the cutwidth.

\subparagraph{(I) Constructing a list-coloring instance}
This part is almost identical to a construction in earlier work~\cite[Thm. 2]{JansenK13} giving kernelization lower bounds for structural parameterizations of graph coloring problems. We repeat it here because the remainder of the proof builds on it. Let~$\phi$ be a CNF formula on variables~$x_1, \ldots, x_n$ with clauses~$C_1, \ldots, C_m$. We assume without loss of generality that no clause of~$\phi$ contains repeated literals, since repetitions can be omitted without changing the satisfiability status. We also assume that no clause contains a literal and its negation (such clauses are trivially satisfied). For~$j \in [m]$ we use~$|C_j|$ to denote the number of literals in~$C_j$, each of which is of the form~$x_i$ or~$\neg x_i$ for some~$i \in [n]$. Construct a graph~$G_1$ and a list function~$L_1 \colon V(G_1) \to 2^{[3]}$ as follows (see Figure~\ref{fig:listthreecol:cw:lb}).

\begin{figure}[t]
\centering
\includegraphics{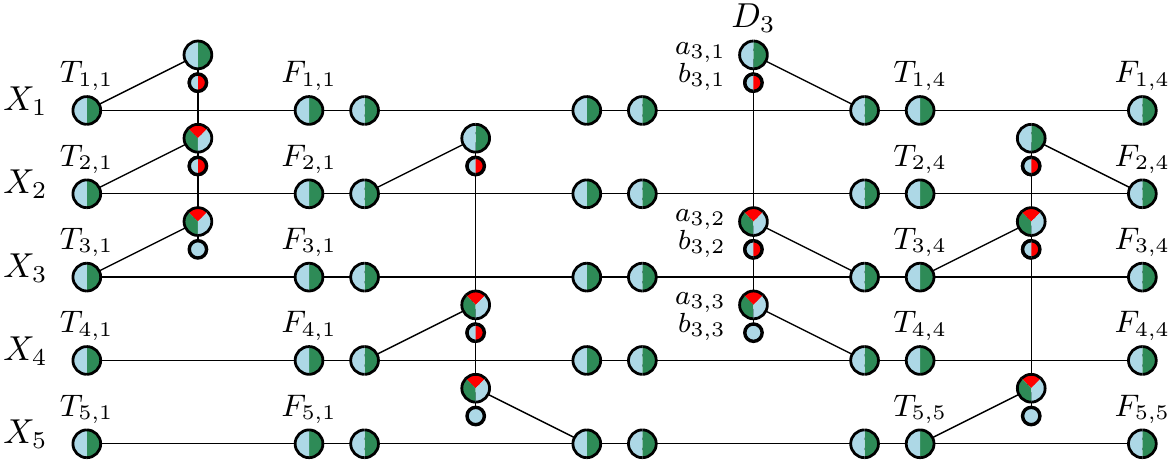}
\caption{Illustration of the construction of Lemma~\ref{lemma:coloring:lb:construction} applied to the formula~$\phi = (x_1 \vee x_2 \vee x_3) \wedge (x_2 \vee x_4 \vee \neg x_5) \wedge (\neg x_1 \vee \neg x_3 \vee \neg x_4) \wedge (\neg x_2 \vee x_3 \vee x_5)$. The \textsc{List-$3$-Coloring} instance~$(G_1,L_1)$ that is constructed for~$\phi$ is drawn. The colors available to a vertex are drawn inside the circle that represents it, using the encoding that~$1$ is red,~$2$ is blue, and~$3$ is green. Observe that a clause path such as~$D_3$ cannot be properly colored without using color~$3$ (green).}
\label{fig:listthreecol:cw:lb}
\end{figure}

\begin{enumerate}
	\item For each~$i \in [n]$, add to~$G_1$ a path~$X_i$ consisting of~$2m$ consecutive vertices~$T_{i,1}, F_{i,1}, T_{i,2}, \linebreak[1] \ldots, \linebreak[1] T_{i,m}, F_{i,m}$ representing \true and \false, respectively. Give each vertex on this path the list consisting of~$\{2,3\}$. We refer to~$X_i$ as the \emph{variable path for~$x_i$}. In a proper list coloring, all vertices~$\{T_{i, j} \mid j \in [m]\}$ have the same color, and all vertices~$\{F_{i,j} \mid j \in [m]\}$ have the same color. Coloring the $T$-vertices~$2$ and the $F$-vertices~$3$ encodes that variable~$x_i$ is \true, while coloring the $T$-vertices~$3$ and the $F$-vertices~$2$ encodes that~$x_i$ is \false. Observe that by this interpretation, the vertices representing literals that evaluate to \true are colored~$2$ and the literals evaluating to \false are colored~$3$.
	\item For each~$j \in [m]$, add to~$G_1$ a path~$D_j$ consisting of~$2|C_j|$ consecutive vertices~$a_{j,1}, b_{j,1}, \ldots, \linebreak[1] a_{j, |C_j|}, b_{j, |C_j|}$. We refer to~$D_j$ as the \emph{clause path} for~$C_j$. Set the lists of its vertices as follows. All $a$-vertices have the full list~$\{1,2,3\}$, except the first one for which~$L_1(a_{j,1}) := \{2,3\}$. All $b$-vertices have the list~$\{1,2\}$, except the last one for which~$L_1(b_{j,|C_j|}) := \{2\}$. Observe that in any proper list coloring of the path~$D_j$, one of the vertices has to receive color~$3$. If no vertex has color~$3$, then the lists enforce that~$a_{j,1}$ and~$b_{j,|C_j|}$ both have color~$2$ while the colors must alternate between~$1$ and~$2$ along the path; this implies that two adjacent vertices receive identical colors since the path contains an even number of vertices.
	\item As the last part of the construction we connect the clause paths to the variable paths to enforce that clauses are satisfied. For each~$j \in [m]$, let~$\ell_{i_1}, \ell_{i_2}, \ldots, \ell_{i_{|C_j|}}$ be the literals occurring in clause~$C_j$, and sort them such that~$i_1 < i_2 < \ldots < i_{|C_j|}$. For each~$k \in [|C_j|]$, do the following. If~$\ell_{i_k}$ is a positive occurrence of variable~$x_{i_k}$, then make~$a_{j,k}$ adjacent to~$T_{i_k, j}$; if~$\ell_{i_k} = \neg x_{i_k}$ is a negative occurrence, make~$a_{j,k}$ adjacent to~$F_{i_k, j}$ instead.
\end{enumerate}

This completes the construction of the list-coloring instance~$(G_1, L_1)$.

\begin{claim} \label{claim:sat:iff:listcolor}
Formula~$\phi$ is satisfiable~$\Leftrightarrow G_1$ has a proper $3$-coloring respecting the lists~$L_1$.
\end{claim}

\begin{claimproof}
($\Rightarrow$) Suppose that~$\phi$ has a satisfying assignment~$v \colon [n] \to \{\true, \false\}$. Construct a coloring~$f$ of~$G_1$ as follows.
\begin{enumerate}
	\item For each~$i \in [n]$ such that~$v(i) = \true$, set~$f(T_{i,j}) = 2$ and~$f(F_{i,j}) = 3$ for all~$j \in [m]$.
	\item For each~$i \in [n]$ such that~$v(i) = \false$, set~$f(T_{i,j}) = 3$ and~$f(F_{i,j}) = 2$ for all~$j \in [m]$.
\end{enumerate}
It remains to extend the coloring to the clause paths. Since there are no edges between clause paths, this can be done independently for all such paths. So let~$j \in [m]$ and consider the clause path~$D_j$. Each $a$-vertex on~$D_j$ is adjacent to exactly one vertex of a variable path, which represents a literal of~$C_j$ and to which we already assigned a color; we call this vertex the \emph{special neighbor} of the $a$-vertex. For each~$k \in [|C_j|]$ for which the special neighbor of~$a_{j,k}$ is colored~$2$, set~$f(a_{j,k}) = 3$. Since the $a$-vertices are pairwise nonadjacent, this does not introduce any conflicts. If the~$k$'th literal of clause~$C_j$ evaluates to \true under~$v$, then the corresponding vertex is colored~$2$ by the process above, and~$a_{j,k}$ will be colored~$3$. Since each clause contains at least one literal that evaluates to \true, we assign at least one $a$-vertex of~$D_j$ color~$3$ in this way. Color the remaining vertices of~$D_j$ as follows. 
\begin{enumerate}
	\item Consider the prefix~$D'_j$ of~$D_j$ starting from~$a_{j,1}$ up to (but not including) the first $a$-vertex that is colored~$3$. (If we assigned~$a_{j,1}$ the color~$3$ then this prefix is empty and we skip this step.) Assign the $a$-vertices in the prefix the color~$2$ and the $b$-vertices the color~$1$.
	\item Assign the remaining uncolored~$a$-vertices the color~$1$, and assign the remaining uncolored~$b$-vertices the color~$2$.
\end{enumerate}
It is straightforward to verify that each vertex on~$D_j$ is assigned a color from its list, and that the colors of adjacent vertices on~$D_j$ are distinct. To see that no vertex of~$D_j$ is assigned the same color as a neighbor on a variable path, it suffices to observe the following. For $a$-vertices whose special neighbor represents a literal evaluating to \true under~$v$, the literal is colored~$2$ and the $a$-vertex is colored~$3$. For $a$-vertices whose adjacent literal evaluates to \false, the literal is colored~$3$ and the $a$-vertex is colored~$2$ (if it belongs to the prefix~$D'_j$) or~$1$ (if it does not). After extending the partial coloring to each clause path independently, we obtain a proper list coloring of~$G_1$.

($\Leftarrow$) For the reverse direction, suppose that~$G_1$ has a proper list coloring~$f \colon V(G_1) \to [3]$. Create a truth assignment~$v \colon [n] \to \{\true, \false\}$ by setting~$v(i) = \true$ if all $a$-vertices on the variable path~$X_i$ are colored~$2$, and setting~$v(i) = \false$ if the $a$-vertices are all colored~$3$. To see that this is a satisfying assignment, consider an arbitrary clause~$C_j$. As remarked during the construction of~$G_1$, the fact that the clause path~$D_j$ is properly list colored implies that at least one vertex of~$D_j$ has color~$3$. Since~$3$ does not appear on the list of the $b$-vertices, some~$a$-vertex~$a_{j,k}$ is colored~$3$. This implies that the neighbor of~$a_{j,k}$ on a variable path is colored~$2$, implying that a vertex representing the $k$'th literal of~$C_j$ was colored~$2$. Hence the $k$'th literal of~$C_j$ evaluates to \true under the constructed assignment, implying that~$v$ satisfies all clauses.
\end{claimproof}

The cutwidth of~$G_1$ is~$n + \Oh(1)$, but we will not prove this fact. Instead, we will bound the cutwidth of the graphs~$G_2$ and~$G_3$ constructed in next phases.

\subparagraph{(II) Reducing to plain $3$-coloring}

Next we reduce~$(G_1, L_1)$ to an equivalent \textsc{$3$-Coloring} instance~$G_2$ while controlling the cutwidth. In most scenarios, reducing from \textsc{List $q$-Coloring} to plain \textsc{$q$-Coloring} is easy: insert a clique of vertices~$\{p_1, \ldots, p_q\}$ into the graph, and for each color~$i$ that does \emph{not} appear on the list of a vertex~$v$, make~$v$ adjacent to~$p_i$. We cannot use this approach here, since it can cause some~$p_i$ to have degree proportional to the size of the graph. As the size of the graph~$G_1$ is linear in the number of clauses~$m$ of the $n$-variable formula~$\phi$, and the cutwidth of a graph with a vertex of degree~$d$ is at least~$\lfloor d/2 \rfloor$, this would not result in a graph of cutwidth~$n + \Oh(1)$. Hence we are forced to take a more involved approach. Rather than having a single vertex~$p_i$ that is used to block color~$i$ from the lists of all vertices on which~$i$ may not be used, we create a chain of triangles in which the color~$i$ repeats. Then we connect each repetition to only constantly many vertices, to avoid creating high-degree vertices. 

For an integer~$t$, by \emph{a chain of~$t$ triangles} we mean the structure consisting of~$t$ vertex-disjoint triangles~$Z_1, \ldots, Z_t$, in which each triangle~$Z_i$ contains a distinguished \emph{terminal vertex~$z_i$}, such that for each~$i \in [t-1]$ the terminal vertex~$z_{i+1}$ is adjacent to the two non-terminal vertices in~$Z_i$. 

\begin{observation} \label{obs:chain:repeats:color}
In a proper $3$-coloring of a chain of triangles, all terminal vertices receive the same color.
\end{observation}

\begin{figure}[t]
\centering
\includegraphics{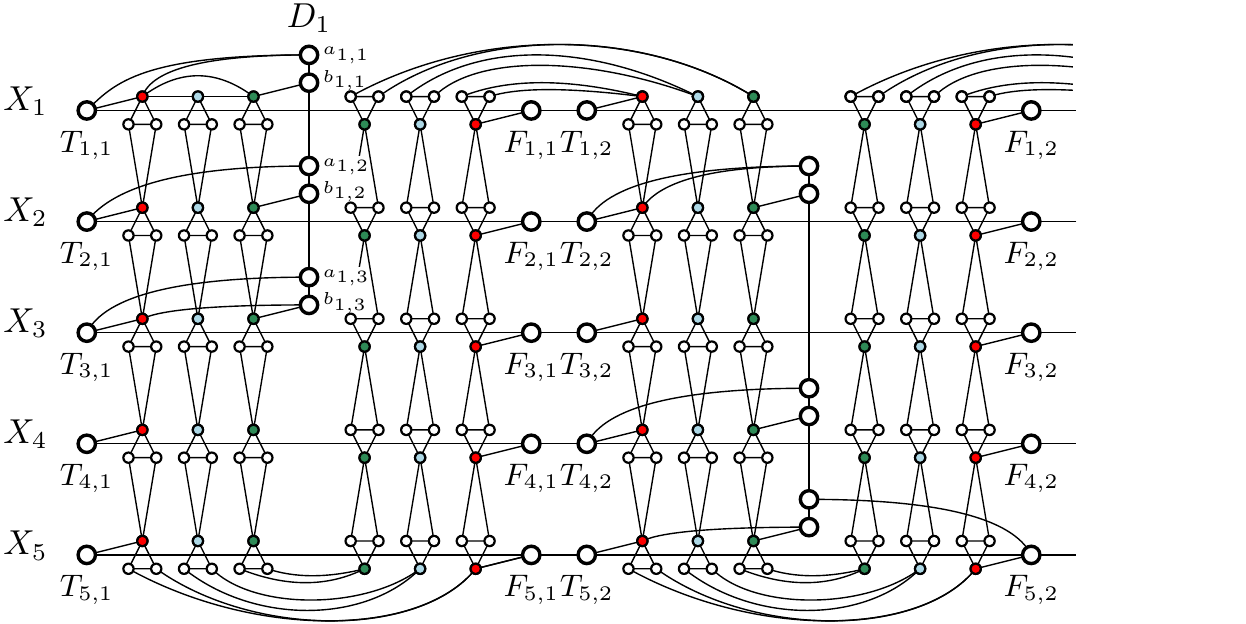}
\caption{Part of the \textsc{$3$-Coloring} instance~$G_2$ that is constructed from the \textsc{List-$3$-Coloring} instance~$(G_1,L_1)$ from Figure~\ref{fig:listthreecol:cw:lb}. The terminals in the chains~$\mathcal{Z}_1, \mathcal{Z}_2, \mathcal{Z}_3$ have been colored red, blue, and green, respectively. The triangles in the chains are numbered from~$1$ to~$2nm$, with the first triangle of each chain shown in the top left. Note that the terminals~$z_{1,1}, z_{2,1}, z_{3,1}$ form a triangle.}
\label{fig:threecol:cw:lb}
\end{figure}

Using this notion we transform~$(G_1,L_1)$ into a plain \textsc{$3$-Coloring} instance~$G_2$, as follows (see Figure~\ref{fig:threecol:cw:lb}).
\begin{enumerate}
	\item Initialize~$G_2$ as a copy of~$G_1$.
	\item For each color~$c \in [3]$, create a chain~$\mathcal{Z}_c$ of~$2nm$ triangles. Refer to the triangles in chain~$\mathcal{Z}_c$ as~$Z_{c,1}, \ldots, Z_{c, 2nm}$ and to the terminal vertices as~$z_{c,1}, \ldots, z_{c,2nm}$. \label{step:make:triangle}
	\item Insert edges to turn~$\{z_{1,1}, z_{2,1}, z_{3,1}\}$ into a triangle. This ensures that among these three vertices, each of the three colors appears exactly once in a proper $3$-coloring. Next, we connect vertices of~$V(G_1)$ to terminal vertices of the chains~$\mathcal{Z}_1, \mathcal{Z}_2, \mathcal{Z}_3$ to enforce the lists. 
	\item For each~$i \in [n]$, for each~$j \in [m]$, for each color~$c \in [3]$, do the following.
	\begin{itemize}
		\item If~$c \notin L_1(T_{i,j})$, then make~$T_{i,j}$ adjacent to~$z_{c, i + (j-1)2n}$.
		\item If~$c \notin L_1(F_{i,j})$, then make~$F_{i,j}$ adjacent to~$z_{c, (2n+1-i) + (j-1)2n}$.
	\end{itemize}
	\item For each~$j \in [m]$, for each~$k \in [|C_j|]$, for each color~$c \in [3]$, do the following.
	\begin{itemize}
		\item If~$c \notin L_1(a_{j,k})$, then make~$a_{j,k}$ adjacent to~$z_{c, i + (j-1)2n}$.
		\item If~$c \notin L_1(b_{j,k})$, then make~$b_{j,k}$ adjacent to~$z_{c, i + (j-1)2n}$.
	\end{itemize}
\end{enumerate}

This concludes the description of the graph~$G_2$. The way in which vertices of~$V(G_1)$ are connected to vertices on the chains will be exploited later, when planarizing the graph.

\begin{claim} \label{claim:listcolor:iff:threecolor}
Graph~$G_2$ has a proper $3$-coloring~$\Leftrightarrow$ graph~$G_1$ has a proper $3$-coloring respecting the lists~$L_1$.
\end{claim}
\begin{claimproof}
($\Rightarrow$) Suppose~$G_2$ has a proper $3$-coloring~$f_2$. Since~$\{z_{1,1}, z_{2,1}, z_{3,1}\}$ forms a triangle by Step~\ref{step:make:triangle}, each of these vertices has a unique color. Permute the color set of~$f_2$ so that~$z_{c,1}$ receives color~$c$, for all~$c \in [3]$. By Observation~\ref{obs:chain:repeats:color} this implies that \emph{all} terminals on chain~$\mathcal{Z}_c$ receive color~$c$. For each vertex~$v \in V(G_1) \cap V(G_2)$, for each color~$c \in [3] \setminus L_1(v)$, we made~$v$ adjacent to a terminal vertex of chain~$\mathcal{Z}_c$, implying that~$v$ does not receive color~$c$ under~$f_2$. Hence each vertex of~$V(G_1)$ receives a color from its list. Since~$G_1$ is a subgraph of~$G_2$, it follows that~$f_2$ restricted to the vertices in~$G_1$ forms a proper list coloring of~$G_1$.

($\Leftarrow$) Suppose~$G_1$ has a proper $3$-coloring~$f_1$ that respects the lists~$L_1$. Create a coloring~$f_2$ of~$G_2$ as follows. For each vertex~$v \in V(G_1) \cap V(G_2)$, set~$f_2(v) := f_1(v)$. For~$c \in [3]$, give all terminals of chain~$\mathcal{Z}_c$ the color~$c$, and give the non-terminal vertices in each triangle distinct colors unequal to~$c$. It is easy to verify that this results in a proper $3$-coloring of~$G_2$.
\end{claimproof}

We construct a drawing~$\psi$ of~$G_2$ as follows; see Figure~\ref{fig:threecol:cw:lb}.
\begin{enumerate}
	\item Draw the variable paths horizontally stacked above one another, such that the indices of the vertices along one variable path~$X_i$ increase from left to right, and such that variable path~$X_{i+1}$ is drawn below~$X_i$ for all~$i \in [n-1]$. Ensure that for all~$j \in [m]$, the vertices~$\{T_{i,j} \mid i \in [n]\}$ are on a vertical line, and the vertices~$\{F_{i,j} \mid i \in [n]\}$ are on a vertical line. For a given~$j \in [m]$, we will refer to the area of the plane enclosed by these two vertical lines as \emph{the $j$'th column}. The horizontal lines for the variable paths divide each column into~$n+1$ \emph{cells}: one cell above each variable path~$X_i$ in the column, and one more cell at the bottom of the column. We refer to the cell just above variable path~$X_i$ in the $j$'th column as~$A_{i,j}$; the bottom cell of the column is~$A_{n+1,j}$.
	\item For each~$j \in [m]$, draw clause path~$D_j$ vertically in the $j$'th column such that the indices of its vertices increase from top to bottom. For~$k \in [|C_j|]$, draw the two vertices~$a_{j,k}, b_{j,k}$ just above the horizontal line for the variable path representing the $k$'th literal of~$C_j$.
	\item For each color~$c \in [3]$, the chain of triangles~$\mathcal{Z}_c$ winds up and down in the~$m$ columns of the drawing. Exactly~$2n$ triangles of the chain are drawn in each column. The first triangle of the chain is drawn in the top-left corner of the first column. Each of the~$2n$ triangles in the $j$'th column crosses a variable path, as visualized in Figure~\ref{fig:threecol:cw:lb}. Note that no two triangle-edges cross in the drawing; the drawing of~$G_2[\mathcal{Z}_1 \cup \mathcal{Z}_2 \cup \mathcal{Z}_3]$ is planar, including the triangle constructed in Step~\ref{step:make:triangle}. The chain~$\mathcal{Z}_1$ is on the left when going down in the column and on the right when going up on the other side. The chain~$\mathcal{Z}_3$ is on the right when going down and on the left when going up, and~$\mathcal{Z}_2$ is in between them.
	\item We draw the edges connecting variable paths to triangle chains. Each vertex on a variable path~$X_i$ is adjacent to a terminal vertex of~$\mathcal{Z}_1$, since~$1 \notin L_1(T_{i,j})$ and~$1 \not\in L_1(F_{i,j})$ for any~$i$ and~$j$. Due to the layout of the chains, the edge from a $T$-vertex or~$F$-vertex to the terminal on the~$\mathcal{Z}_1$ chain to which it is adjacent, can be drawn without crossings, in the same cell as the corresponding terminal vertex on~$\mathcal{Z}_1$. There are no other edges between variable paths and triangle chains.
	\item Finally, we draw the edges that connect a clause path to a triangle chain or variable path. The connections from vertices on the clause paths~$D_j$ to their special neighbor on the variable paths, and to terminal vertices on the triangle chains, may cross other edges. We draw them such that they stay within the cell of the drawing in which the vertex from the clause path is drawn, and only cross the connections between successive triangles in that cell of the drawing.
\end{enumerate}

This concludes the description of the drawing~$\psi$. It is straight-forward (but tedious) to automate this process; there is a polynomial-time algorithm constructing a drawing with the described properties.

\begin{observation} \label{obs:drawing:matching:uncrossed}
For any~$j \in [m-1]$ the edges~$\{ \{F_{i,j}, T_{i,j+1}\} \mid i \in [n]\}$ are not crossed in drawing~$\psi$, and the edges that connect triangles in the $j$'th column to triangles in the $j+1$'th column are not crossed either.
\end{observation}

\begin{claim} \label{claim:cutwidth:gtwo}
A linear layout~$\pi_2$ of~$G_2$ with~$\cutw_{\pi_2}(G_2) \leq n + \Oh(1)$ can be constructed in polynomial time.
\end{claim}
\begin{claimproof}
We describe the linear layout based on the subdivision of the drawing~$\psi$ of~$G_2$ into cells. Consider a cell~$A_{i,j}$ ($i \in [n+1], j \in [m]$) of the drawing of~$G_2$, as illustrated in Figure~\ref{fig:threecol:cw:lb}. Associate to this cell the vertices lying inside the cell, together with the two vertices~$T_{i,j}, F_{i,j}$ on the boundary if~$i \leq n$. Then every vertex of~$G_2$ belongs to exactly one cell, each cell contains constantly many vertices, and the neighbors of a vertex in cell~$A_{i,j}$ lie in the same cell, or one of the four adjacent cells. There is a constant number of edges connecting cell~$A_{i,j}$ to~$A_{i+1,j}$ for any choice of~$i$ and~$j$. Crucially, for~$1 < i \leq n$ there is only a single edge connecting cell~$A_{i,j}$ to~$A_{i,j+1}$: this is the edge~$\{F_{i,j}, T_{i,j+1}\}$. (For~$i=1$ there are six edges on triangle chains connecting~$A_{1,j}$ to~$A_{1,j+1}$, and for~$i=n+1$ there are no edges to~$A_{n+1,j+1}$.)

Consider a linear layout~$\pi_2$ of~$G_2$ which enumerates the cells in column-major order: it enumerates all cells of column~$j$ before moving on to column~$j+1$, and within one column it enumerates all vertices of a cell~$A_{i,j}$ before moving on to cell~$A_{i+1,j}$. The relative order of vertices from the same cell is not important. 

We claim that~$\cutw_{\pi_2}(G_2) \leq n + \Oh(1)$. To see this, consider an arbitrary position~$p$ in the ordering and consider the cut~$\mathcal{C}$ of edges that connect a vertex with index at most~$p$, to a vertex with index greater than~$p$. Let~$i \in [n+1]$ and~$j \in [m]$ such that vertex~$\pi_2(p)$ belongs to cell~$A_{i,j}$. We now bound the number of edges in~$\mathcal{C}$ based on the cells that contain their endpoints. Since~$\pi_2$ enumerated cells in column-major order, and vertices in a cell are only adjacent to vertices in adjacent cells, the edges in cut~$\mathcal{C}$ must be of one of the following types:
\begin{itemize}
	\item Edges that connect two vertices from~$A_{i,j}$ to each other, or that connect~$A_{i,j}$ to~$A_{i,j-1} \cup A_{i,j+1}$. There are~$\Oh(1)$ such edges because a cell contains~$\Oh(1)$ vertices.
	\item Edges that connect a vertex from~$A_{i,j}$ to a vertex in~$A_{i-1,j} \cup A_{i+1,j}$. Again there are~$\Oh(1)$ such edges.
	\item Edges that connect a cell~$A_{i', j}$ for~$i' < i$ to the cell on its right. Since horizontally-neighboring cells are connected by exactly one edge for~$1<i'\leq n$, and by~$\Oh(1)$ edges otherwise, there are~$i + \Oh(1)$ such edges.
	\item Edges that connect a cell~$A_{i', j-1}$ for~$i' > i$ to the cell on their right. There are at most~$n-i$ such edges.
\end{itemize} 
This accounts for all possible edges in~$\mathcal{C}$. In particular, that there are no edges in~$\mathcal{C}$ whose endpoints both lie in cells that come before~$A_{i,j}$ in the column-major order, or both lie in cells coming after~$A_{i,j}$. It follows that cut~$\mathcal{C}$ contains at most~$n + \Oh(1)$ edges, which proves that~$\cutw_{\pi_2}(G_2) \leq n + \Oh(1)$. It is trivial to construct~$\pi_2$ in polynomial time.
\end{claimproof}

\subparagraph{(III) Reducing to a planar graph} The final step of the construction turns~$G_2$ into a planar graph of cutwidth~$n + \Oh(1)$. Let~$G_3$ be obtained by applying Theorem~\ref{thm:planarize:threecol} on the drawing~$\psi$ of~$G_2$ constructed above. The choice of which endpoint to distinguish in Step~\ref{step:threecol:merge} of the transformation can be made arbitrarily. See Figure~\ref{fig:threecol:complete:lb} for an illustration. By Theorem~\ref{thm:planarize:threecol}, we know that~$G_3$ is $3$-colorable if and only if~$G_2$ is; by Claims~\ref{claim:sat:iff:listcolor} and~\ref{claim:listcolor:iff:threecolor}, this happens if and only if~$\phi$ is satisfiable. It is straightforward to implement the transformation in polynomial time. The following claim bounds the cutwidth of~$G_3$.

\begin{figure}[t]
\centering
\includegraphics{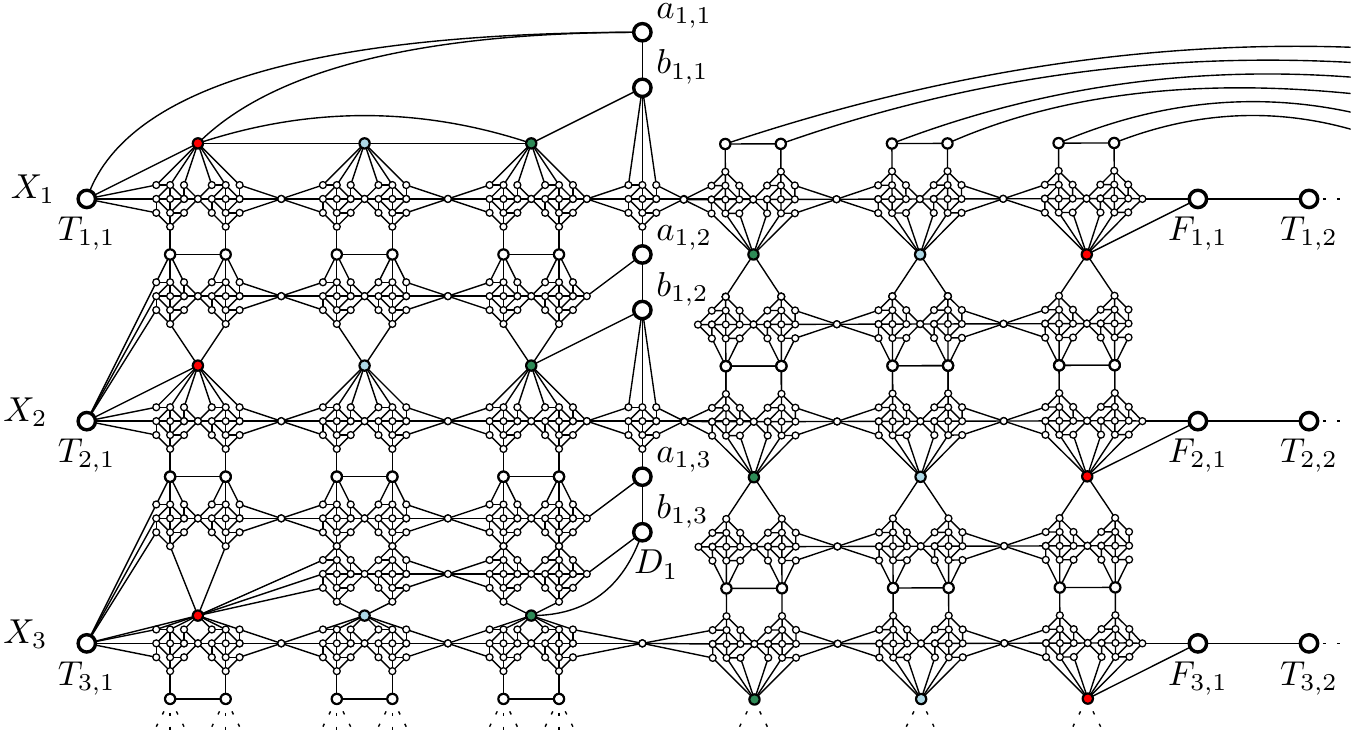}
\caption{Illustration of the planar \textsc{$3$-Coloring} instance~$G_3$ that is constructed from the \textsc{$3$-Coloring} instance~$G_2$ from Figure~\ref{fig:threecol:cw:lb}. Only the top-left part of the instance is drawn.}
\label{fig:threecol:complete:lb}
\end{figure}

\begin{claim}
A linear layout~$\pi_3$ of~$G_3$ with~$\cutw_{\pi_3}(G_3) \leq n + \Oh(1)$ can be constructed in polynomial time.
\end{claim}
\begin{claimproof}
The strategy used in the proof of Claim~\ref{claim:cutwidth:gtwo} can be easily adapted for~$G_3$. The drawing of~$G_3$ that is obtained from Theorem~\ref{thm:planarize:threecol} can be used to partition~$V(G_3)$ into columns and cells, such that ordering the vertices in column-major order achieves the desired cutwidth. Concretely, the planarization procedure inserts copies of~$\Hcol$ into cells of the drawing of~$G_2$. Each such inserted vertex is associated to the cell in which the inserted vertex lies, with ties broken arbitrarily. By Observation~\ref{obs:drawing:matching:uncrossed}, the single edge connecting a cell~$A_{i,j}$ for~$1<i\leq n$ to the cell~$A_{i,j+1}$ in~$G_2$ is not crossed, and therefore remains a single connecting edge in~$G_3$. As the number of vertices in each cell increases by a constant, the argumentation of Claim~\ref{claim:cutwidth:gtwo} goes through unchanged to argue that a column-major ordering~$\pi_3$ of~$V(G_3)$ has cutwidth~$n + \Oh(1)$. It can easily be constructed in polynomial time.
\end{claimproof}

This completes the proof of Lemma~\ref{lemma:coloring:lb:construction}.
\end{proof}

Using Lemma~\ref{lemma:coloring:lb:construction}, it is easy to prove the claimed runtime lower bound for \textsc{$3$-Coloring}.

\begin{proof}[Proof of Theorem~\ref{thm:3collow}]
Suppose \textsc{$3$-Coloring} on a planar graph with a given linear layout of cutwidth~$k$ can be solved in~$\Oh^*((2-\varepsilon)^k)$ time for some~$\varepsilon > 0$, by an algorithm called~$A$. Then \textsc{CNF-SAT} with clauses of arbitrary size can be solved in~$\Oh^*((2-\varepsilon)^k)$ time by turning an input formula~$\phi$ into a planar graph~$G$ with linear layout~$\pi$ of cutwidth~$n + \Oh(1)$ using Lemma~\ref{lemma:coloring:lb:construction}, and then running~$A$ on~$(G,\pi)$. This contradicts SETH.
\end{proof}

\subsection{Lower bounds for coloring on graphs of bounded pathwidth and degree}
In this section we prove that the base of the exponent in the running time of Theorem~\ref{thm:alg:degree} is optimal for every odd~$d \geq 5$. Our proof is inspired by a construction due to Jaffke and Jansen~\cite[Theorem 15]{JaffkeJ17} which gives a lower bound for solving \qColoring parameterized by the vertex-deletion distance to a linear forest. Their result, in turn, extends the original lower bound of Lokshtanov et al.~\cite{LokshtanovMS11} for \qColoring on graphs of bounded pathwidth or feedback vertex number. We shall use the following lemma to construct \textsc{List Coloring} gadgets. (We will eliminate the need for having lists later in the construction.)

\begin{lemma}[{\cite[Lemma 14]{JaffkeJ17}}]\label{lem:coloring:pathgadget}
For each~$q \geq 3$ there is a polynomial-time algorithm that, given $(c_1, \ldots, c_m) \in [q]^m$, outputs a $q$-list-coloring instance~$(P, \Lambda)$ where~$P$ is a path of size~$\Oh(m)$ with distinguished vertices~$(\pi_1, \ldots, \pi_m)$, such that the following holds. For each~$(d_1, \ldots, d_m) \in [q]^m$ there is a proper list-coloring~$\gamma$ of~$P$ in which~$\gamma(\pi_i) \neq d_i$ for all~$i$, if and only if~$(c_1, \ldots, c_m) \neq (d_1, \ldots, d_m)$.
\end{lemma}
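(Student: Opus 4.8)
The plan is to design $P$ so that the set of tuples realizable on $(\pi_1,\dots,\pi_m)$ by a proper list-coloring of $P$ is exactly $[q]^m\setminus\prod_{i=1}^m([q]\setminus\{c_i\})$; that is, $P$ should forbid precisely the ``box'' of tuples $s$ with $s_i\neq c_i$ for \emph{every} $i$, and allow every other tuple. Granting this, the lemma is immediate. If $d=c$, then any realizable tuple $s$ has $s_i=c_i=d_i$ for some $i$, so no proper list-coloring can have $\gamma(\pi_i)\neq d_i$ simultaneously for all $i$. Conversely, if $d\neq c$, pick an index $j$ with $d_j\neq c_j$, set $s_j:=c_j$ and choose $s_i\in[q]\setminus\{d_i\}$ arbitrarily for $i\neq j$ (possible since $q\ge 2$); then $s$ is realizable (it agrees with $c$ at coordinate $j$) and satisfies $s_i\neq d_i$ for all $i$, so the corresponding coloring $\gamma$ witnesses the claim. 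Thus everything reduces to realizing this forbidden box by a path.

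I would realize it by taking $P$ to be a series composition of $m$ blocks $B_1,\dots,B_m$, each on a constant number of vertices, where $B_i$ contains the distinguished vertex $\pi_i$ (to which we give the full list $\Lambda(\pi_i)=[q]$), and consecutive blocks share a \emph{carry vertex}. Every carry vertex is colored with one of two designated values interpreted as a bit $\mathtt{alive}$ or $\mathtt{dead}$, and the internal lists of $B_i$ are chosen so that the block implements the transition: if the incoming carry is $\mathtt{dead}$, the outgoing carry is $\mathtt{dead}$ and $\pi_i$ may take any color in its list; if the incoming carry is $\mathtt{alive}$, then the outgoing carry is $\mathtt{alive}$ exactly when $\gamma(\pi_i)\neq c_i$ and is $\mathtt{dead}$ exactly when $\gamma(\pi_i)=c_i$. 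Singleton lists force the leftmost carry to $\mathtt{alive}$ and the rightmost carry to $\mathtt{dead}$. A left-to-right induction along $P$ then shows that in any proper list-coloring the carry after block $i$ is $\mathtt{alive}$ iff $\gamma(\pi_\ell)\neq c_\ell$ for all $\ell\le i$; together with the forced value $\mathtt{dead}$ at the right end, this says a proper coloring realizing a given tuple $s$ exists iff $s_i=c_i$ for some $i$, which is exactly the box condition. Since $P$ consists of $m$ blocks of constant size and all lists are computed directly from $c$, the construction is polynomial-time and $|V(P)|\in\Oh(m)$.

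The part that needs care --- and the main obstacle --- is implementing the transition of $B_i$ \emph{along a path} using only list constraints, without the carry vertices inadvertently restricting which ``wrong'' colors $\pi_i$ may take: recall $\pi_i$ must be free to take any color other than $c_i$ when the block stays $\mathtt{alive}$, and any color at all when it is $\mathtt{dead}$. Making $\pi_i$ adjacent to a carry vertex directly fails, since properness would forbid $\pi_i$ from repeating that carry's color. The fix is to interpose a constant number of auxiliary vertices between the carry vertices and $\pi_i$ (and, if convenient, to let the two designated carry colors depend on $i$ so that they differ from $c_i$), choosing their lists so that every (incoming-carry color, $\pi_i$-color) pair permitted by the intended transition extends to a proper coloring of that short sub-path and no other pair does. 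This is exactly where the hypothesis $q\ge 3$ is used: a spare, ``neutral'' color is always available to route around coincidences of the carry and $\pi_i$ colors, which is impossible when $q=2$ because a two-colored path is completely rigid. Checking that a constant-size buffer suffices and going through the finitely many cases of the transition table is routine but is the only delicate point of the argument.
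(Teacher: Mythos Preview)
The paper does not give its own proof of this lemma; it is quoted verbatim from~\cite{JaffkeJ17} and used as a black box. So there is no in-paper argument to compare against, only your proposal to assess on its own.

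Your first paragraph is fine: if the tuples realizable on $(\pi_1,\dots,\pi_m)$ were exactly the complement of the box $\prod_i([q]\setminus\{c_i\})$, the lemma follows as you say. The problem is in the implementation you sketch. With $\pi_i$ sitting strictly between the two carry vertices $\ell$ and $r$ on a path, the set of realizable triples $(\ell,\pi_i,r)$ necessarily decomposes as $R_1(\ell,\pi_i)\wedge R_2(\pi_i,r)$, because the buffer vertices on the two sides of $\pi_i$ do not interact except through $\pi_i$. Hence for each fixed color $x$ of $\pi_i$ the admissible pairs $(\ell,r)$ form a \emph{product} $L_x\times R_x\subseteq\{\mathtt{alive},\mathtt{dead}\}^2$. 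Your intended transition asks, for every $x\neq c_i$, that $(\mathtt{alive},\mathtt{alive})$ and $(\mathtt{dead},\mathtt{dead})$ be allowed while $(\mathtt{alive},\mathtt{dead})$ is forbidden; but any product of subsets of $\{\mathtt{alive},\mathtt{dead}\}$ that contains both diagonal pairs is the full square. So the ``routine'' case check you defer would in fact fail: no constant-size buffer on a path, with $\pi_i$ as the unique cut vertex between the two carries, can realize this transition table. The same obstruction kills any two-state carry whose semantics requires ``stay in the same state'' for some value of $\pi_i$.

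The repair is not to add more buffer vertices but to change what the carry encodes so that every $\pi_i$-slice \emph{is} a product. One workable choice (and the one underlying the cited construction, visible already in the $q=3$ clause path used elsewhere in this paper) is to let the carry vertex be adjacent to $\pi_i$ with a list of size $q-1$ excluding $c_i$: then $\pi_i=x\neq c_i$ forces the carry to lie in $[q]\setminus\{c_i,x\}$ on \emph{both} sides (a genuine product), while $\pi_i=c_i$ leaves it free. With appropriate boundary lists this forces some $\pi_i=c_i$ in any proper list-coloring. Note, however, that the realizable set you get this way is a \emph{proper} subset of your box-complement (for $q=3$ already $(c_1,1,2)$ is not realizable), so you also have to redo the sufficiency direction: for $d\neq c$ you must exhibit a specific realizable $s$ with $s_i\neq d_i$ for all $i$, rather than invoke your first-paragraph argument, which relied on the full box-complement being realizable.
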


The lemma gives a way to construct a small path that forbids a specific coloring to be used on a given set of vertices, in a list-coloring instance. If~$(v_1, \ldots, v_m)$ are vertices in a graph under construction for which we want to forbid a certain coloring~$(c_1, \ldots, c_m) \in [q]^m$ (i.e.~we want to ensure that it is not the case that~$v_i$ will be colored~$c_i$ for all~$i \in [m]$), then this can be achieved as follows. We create a gadget~$(P, \Lambda)$, add it to the graph, and connect each vertex~$v_i$ to the corresponding distinguished vertex~$\pi_i$ on the path. Given a coloring on the vertices~$v_1, \ldots, v_m$, if we want to find a proper coloring of the path~$P$, we have to assign each~$\pi_i$ a color different from its partner~$v_i$; hence~$\pi_i$ has to avoid the color~$d_i$ given to~$v_i$. The lemma guarantees that each~$\pi_i$ can avoid its forbidden color~$d_i$, if and only if~$(d_1, \ldots, d_m) \neq (c_1, \ldots, c_m)$; hence a coloring of~$(v_1, \ldots, v_m)$ can be extended to~$P$ if and only if it is not the coloring~$(c_1, \ldots, c_m)$ the gadget is meant to forbid.

The second gadget we need in the construction is a chain of cliques, to propagate a coloring choice throughout a small-pathwidth graph. For integers~$t \in \mathbb{N}$ and~$q \in \mathbb{N}$, define a \emph{$t$-chain of $q$-cliques} as the graph constructed as follows. Start from a disjoint union of~$t$ cliques~$Z_1, \ldots, Z_t$ of size~$q$ each, in which each clique~$Z_i$ contains a distinguished \emph{terminal vertex}~$z_i \in Z_i$. For	each~$i \in [t-1]$, connect the terminal vertex~$z_{i+1} \in Z_{i+1}$ to the~$q-1$ non-terminal vertices in~$Z_i$.

\begin{proposition}
In any proper $q$-coloring of a $t$-chain of~$q$-cliques, all distinguished vertices have the same color.
\end{proposition}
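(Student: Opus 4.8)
The plan is to prove the statement by induction along the chain, showing that for every $i \in [t-1]$ any proper $q$-coloring $f$ satisfies $f(z_i) = f(z_{i+1})$; chaining these equalities then yields that all terminals share a color. The base of the induction is trivial (the claim for $t=1$ is vacuous), so the heart of the matter is the single inductive step, which rests on one elementary counting fact: if a $q$-clique is properly colored with the palette $[q]$, then its $q$ pairwise-adjacent vertices receive $q$ pairwise-distinct colors, and hence the set of colors appearing on the clique is \emph{all} of $[q]$.

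Concretely, I would fix $i \in [t-1]$, write $w_{i,1}, \ldots, w_{i,q-1}$ for the non-terminal vertices of the clique $Z_i$, and apply the counting fact to $Z_i$: the colors $f(z_i), f(w_{i,1}), \ldots, f(w_{i,q-1})$ are exactly the $q$ elements of $[q]$, so $\{f(w_{i,1}), \ldots, f(w_{i,q-1})\} = [q] \setminus \{f(z_i)\}$, a set of size $q-1$. By the construction of the chain, $z_{i+1}$ is adjacent to each of $w_{i,1}, \ldots, w_{i,q-1}$, so $f(z_{i+1})$ avoids every color in $[q] \setminus \{f(z_i)\}$; since $f(z_{i+1}) \in [q]$, the only possibility left is $f(z_{i+1}) = f(z_i)$. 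Iterating from $i=1$ up to $i = t-1$ gives $f(z_1) = f(z_2) = \cdots = f(z_t)$.

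I do not expect any real obstacle: the argument is a direct pigeonhole computation. The only points meriting a moment of care are (i) invoking that a properly $q$-colored $q$-clique uses every color exactly once, and (ii) keeping straight which clique's non-terminals are joined to $z_{i+1}$ — by definition these are the non-terminals of $Z_i$, the \emph{previous} clique, which is precisely what makes the step go through.
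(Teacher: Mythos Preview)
Your proposal is correct and follows essentially the same approach as the paper: both argue that a properly $q$-colored $q$-clique uses every color exactly once, so $z_{i+1}$, being adjacent to the $q-1$ non-terminals of $Z_i$, is forced to take the single remaining color $f(z_i)$, and this propagates along the chain. The paper phrases this slightly more tersely (starting from $Z_1$ and saying ``repeating this argument''), but the content is identical.
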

\begin{proof}
In the first $q$-clique~$Z_1$, all~$q$ colors occur exactly once. Since~$z_2$ is adjacent to all vertices in~$Z_1$ except the terminal, the $(q-1)$-colors of the non-terminals in~$Z_1$ cannot be used on terminal~$z_2$, hence~$z_2$ must receive the same color as~$z_1$. Repeating this argument shows that all terminals have the same color.
\end{proof}

Observe that the maximum degree in a $t$-chain of $q$-cliques is~$2(q-1)$, which is achieved by any terminal vertex~$z_i$ in the interior of the chain. It is adjacent to~$q-1$ vertices in its own clique~$Z_i$, and to $q-1$ non-terminal vertices in~$Z_{i-1}$. 

The main idea of the reduction is that by creating a chain of cliques, we can propagate a choice with~$q$ possibilities (the color of the terminals) throughout a path decomposition, using vertices of degree at most~$2(q-1)$. We will encode truth assignments to variables of a \textsc{CNF-SAT} instance through colors given to such chains. We will enforce that the encoded truth assignment satisfies a clause, by enforcing that an assignment that does \emph{not} satisfy the clause, is not the one encoded by the coloring. To check this, we take one terminal from each chain and connect it to a partner on a path gadget~$(P, \Lambda)$ that forbids a specific coloring. Hence each vertex on a chain will receive at most one more neighbor, giving a maximum degree of~$d := 2(q-1)+1 = 2q-1$ to represent a \qColoring instance. Then solving this \qColoring instance in~$\Oh^*((\lfloor d/2 \rfloor + 1 - \varepsilon)^\pw) = \Oh^*(((q-1) + 1 - \varepsilon)^\pw)$ time will contradict SETH for the same reason as in the earlier construction showing the impossibility of~$\Oh^*((q-\varepsilon)^\pw)$-time algorithms.

We are now ready to prove the main result of this section. Note that Theorem~\ref{thm:degreecollow} is a direct consequence of this result.
\begin{theorem}
	Let~$d \geq 5$ be an odd integer and let~$q_d := \lfloor d /2 \rfloor + 1$. Assuming SETH, there is no~$\varepsilon > 0$ such that \textsc{$q_d$-Coloring} on a graph of maximum degree~$d$ given along with a path decomposition of pathwidth~$k$ can be solved in time~$\Oh^*((\lfloor d/2 \rfloor + 1 - \varepsilon)^k)$.
\end{theorem}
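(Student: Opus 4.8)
The plan is to reduce from \textsc{CNF-SAT} in the style of Lokshtanov, Marx, and Saurabh~\cite{LokshtanovMS11} and Jaffke--Jansen~\cite{JaffkeJ17}, encoding a truth assignment of an $n$-variable formula~$\phi$ into the colors of $t$-chains of $q$-cliques, where $q := q_d = \lfloor d/2 \rfloor + 1$. First I would group the variables of~$\phi$ into blocks of size $\lfloor \log q \rfloor$ (more precisely, I would introduce~$p := \lceil n / \lfloor \log q \rfloor \rceil$ groups so that the~$q$ colors of a chain's terminals can encode all $2^{\lfloor \log q \rfloor} \leq q$ assignments to one block). For each group~$g \in [p]$ I create one long chain of $q$-cliques, long enough to reach across all the clause-checking gadgets; by the Proposition in the excerpt, all terminals of chain~$g$ receive the same color, which we interpret (via a fixed injection from $[q]$ to assignments of the $g$-th block) as the block's partial assignment. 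The degree of an interior terminal in a chain is $2(q-1) = 2\lfloor d/2\rfloor = d-1$, leaving exactly one free slot of degree for a clause-check edge.

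Next I would process the clauses $C_1, \ldots, C_m$ of~$\phi$ one at a time along the path decomposition. For clause~$C_j$, the literals it contains touch at most a constant number of variable-groups, hence a constant number of chains; a falsifying assignment for~$C_j$ fixes the relevant blocks, and therefore fixes the colors of the corresponding chain-terminals. For each of the (constantly many, since clause width is bounded — we first invoke SETH to fix the clause size~$s$) falsifying assignments of~$C_j$, I attach a fresh path gadget~$(P,\Lambda)$ from Lemma~\ref{lem:coloring:pathgadget} with distinguished vertices connected to one terminal from each relevant chain, using up the single remaining free degree slot at each such terminal. The gadget guarantees that the coloring of those terminals can be extended to~$P$ if and only if it is \emph{not} the forbidden falsifying pattern. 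To keep every chain-terminal's degree at most~$d$, I make sure to use a \emph{fresh} terminal of the chain for each clause-check (the chains are made long enough for this), so no terminal gets two extra edges. Finally I eliminate the lists on the path gadgets using the standard trick adapted to bounded degree: rather than one blocking vertex per color, attach short chains/cliques that repeat each forbidden color and connect each repetition to only constantly many gadget vertices, exactly as in the proof of Lemma~\ref{lemma:coloring:lb:construction} (Step~II); this adds only constant degree everywhere, keeping the maximum degree at~$d$.

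For correctness I would argue: $\phi$ is satisfiable $\iff$ the constructed graph~$G$ is $q$-colorable. Forward, a satisfying assignment dictates the color of each chain, and since each clause is satisfied, no clause-gadget forbids the induced terminal-coloring, so every path gadget can be completed; the auxiliary list-elimination chains are colored by the standard pattern. Backward, a proper $q$-coloring forces each chain to be monochromatic on terminals (Proposition), decoding to a block assignment; if some clause~$C_j$ were falsified, the attached gadget for that falsifying pattern would be uncompletable, a contradiction. For the pathwidth bound, I order the construction so that the path decomposition sweeps left to right through the clauses: at any moment the bag contains the~$p$ ``active'' terminals of the chains plus the constantly many vertices of the clause gadget currently being processed plus the list-elimination vertices currently active, giving width $p + \Oh(1) = n/\log q + \Oh(1)$. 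Then a hypothetical $\Oh^*((q - \varepsilon)^{\pw})$-time algorithm, i.e.\ $\Oh^*((q-\varepsilon)^{n/\log q + \Oh(1)})$, solves $s$-SAT in time $\Oh^*((q-\varepsilon)^{n/\log q}) = \Oh^*(2^{n \log(q-\varepsilon)/\log q}) = \Oh^*((2 - \varepsilon')^n)$ for some $\varepsilon' > 0$, contradicting SETH; note $\lfloor d/2\rfloor + 1 - \varepsilon = q - \varepsilon$.

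The main obstacle is the simultaneous control of \emph{maximum degree} and \emph{pathwidth}: the naive list-to-plain reduction and the naive clause-checking both create high-degree vertices, so the delicate part is the bookkeeping that ensures each chain-terminal receives exactly one clause-gadget edge (requiring sufficiently long chains and a careful global schedule of which terminal handles which clause) and that the list-elimination structure only ever adds $\Oh(1)$ to any degree while still fitting inside bags of width $n/\log q + \Oh(1)$. Verifying that the path decomposition can be built with this width — essentially, that at any cut only $p$ chain-terminals plus $\Oh(1)$ gadget vertices are ``live'' — is the calculation I expect to require the most care, though it is morally identical to the corresponding step in~\cite{LokshtanovMS11,JaffkeJ17}.
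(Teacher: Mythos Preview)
Your high-level plan matches the paper's: chains of $q$-cliques to encode variable groups, path gadgets from Lemma~\ref{lem:coloring:pathgadget} to forbid bad colorings clause by clause, and extra ``color chains'' to eliminate lists while keeping degree bounded. The degree bookkeeping and the correctness argument are essentially right.

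There is, however, a genuine arithmetic gap in the final step. You encode each block of~$\lfloor \log q \rfloor$ variables by the color of \emph{one} chain, and then assert that the resulting pathwidth is~$n/\log q + \Oh(1)$. In fact your pathwidth is~$p + \Oh(1)$ with~$p = \lceil n/\lfloor \log q\rfloor\rceil$, and the floor matters. Take~$d=5$, so~$q=3$: then~$\lfloor \log 3\rfloor = 1$, each block is a single variable, you build~$n$ chains, and the pathwidth is~$n + \Oh(1)$. Your hypothetical algorithm then runs in time~$\Oh^*((3-\varepsilon)^n)$, which beats~$2^n$ only when~$\varepsilon > 1$ --- so you get no contradiction with SETH for small~$\varepsilon$. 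More generally, whenever~$q$ is not a power of~$2$ the rounding loss~$\log q / \lfloor \log q\rfloor$ is a fixed constant larger than~$1$ and swallows the~$\varepsilon$.

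The paper's fix (which is the standard one from~\cite{LokshtanovMS11}) is to introduce an auxiliary integer parameter~$p$, chosen large as a function of~$\varepsilon$, and use~$p$ chains \emph{per group} while packing~$\lfloor p\log q\rfloor$ variables into each group. Now one group assignment is encoded by a $p$-tuple of terminal colors, of which there are~$q^p \geq 2^{\lfloor p\log q\rfloor}$. The pathwidth becomes~$pt + \Oh(1)$ with~$t = \lceil n/\lfloor p\log q\rfloor\rceil$, and since~$\lfloor p\log q\rfloor \geq (p-1)\log q$ the rounding loss is only a factor~$p/(p-1)$, which can be made smaller than~$1/\lambda$ for~$\lambda := \log_q(q-\varepsilon) < 1$. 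This is the missing ingredient; once you add it, your construction and the paper's coincide.
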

\begin{proof}
Suppose there is some~$\varepsilon > 0$ such that \qdColoring can be solved in time~$\Oh^*((\lfloor d/2 \rfloor + 1 - \varepsilon)^k) = \Oh^*((q_d - \varepsilon)^k)$ time. We will show that this implies the existence of~$\delta > 0$ such that for each constant~$s \in \mathbb{N}$, the \textsc{CNF-SAT} problem with clauses of size at most~$s$ (\sSAT) can be solved in~$\Oh^*((2 - \delta)^n)$ time for formulas with~$n$ variables. (We refer to this problem as \sSAT instead of \qSAT in this proof, to avoid confusion with the number of colors.)

We choose an integer~$p$ depending on~$\varepsilon$, in a way explained later. Take an $n$-variable input formula~$\phi$ of \sSAT for some constant~$s$. Let~$C_1, \ldots, C_m$ be the clauses of~$\phi$, and assume without loss of generality that no clause contains contradicting literals and that each variable occurs in at least one clause. Split the variables~$x_1, \ldots, x_n$ of~$\phi$ into groups~$F_1, \ldots, F_t$, each of size at most~$\beta := \lfloor \log ((q_d)^p) \rfloor = \lfloor p \log q_d \rfloor$ so that~$t = \lceil n/\beta \rceil$. A truth assignment to the variables in one group~$F_i$ is a \emph{group assignment}. A group assignment for~$F_i$ satisfies a clause~$C_j$ if~$F_i$ contains a variable~$x_\ell$ such that (i)~$x_\ell$ is set to \true by the group assignment and~$x_\ell$ is a literal in~$C_j$, or (ii)~$x_\ell$ is set to \false by the group assignment and~$\neg x_\ell$ is a literal in~$C_j$.

Let~$M := m \cdot (q_d)^{p \cdot s}$. For each group index~$i \in [t]$, we construct $p$ distinct $M$-chains of~$q_d$-cliques called~$\mathcal{Z}_i^1, \ldots, Z_i^p$. For each chain~$\mathcal{Z}_i^j$, let~$z_i^{j, 1}, \ldots, z_i^{j, M}$ be its terminal vertices. For each group index~$i \in [t]$, let~$\mathcal{V}_i := \{z_i^{1,1}, \ldots, z_{i}^{p, 1}\}$ denote the~$p$ first vertices on each chain~$\mathcal{Z}_i^1, \ldots, \mathcal{Z}_i^p$. Since there are~$(q_d)^p$ distinct $q_d$-colorings of~$\mathcal{V}_i$, and~$F_i$ contains at most~$\beta = \lfloor \log ((q_d)^p) \rfloor$ variables, it follows that the number of $q_d$-colorings of~$\mathcal{V}_i$ is at least as large as the number of group assignments for~$F_i$ which is~$2^{|F_i|} \leq 2^\beta \leq (q_d)^p$. For each group~$i \in [t]$, we can therefore find an efficiently computable injection~$g_i \colon 2^{|F_i|} \to [q_d]^p$ that assigns to each group assignment for~$F_i$ a unique $(q_d)$-coloring of~$\mathcal{V}_i$.

The colorings of the chains constructed for the groups~$F_i$ will represent a truth assignment for the variables of~$\phi$. In addition, we create~$q_d$ more chains that will be used to simulate list-coloring behavior of the path gadgets constructed by Lemma~\ref{lem:coloring:pathgadget}. Let~$N \in \Oh((q_d)^s)$ such that when applying Lemma~\ref{lem:coloring:pathgadget} to obtain a path gadget to block a coloring of at most~$(q_{d})^s$ vertices, the resulting path gadget has at most~$N$ vertices. We create a \emph{color chain}~$\mathcal{Y}_i$ for each~$i \in [q_d]$, which is an $NM$-chain of~$q_d$-cliques~$Y_i^1, \ldots, Y_i^{NM}$ with terminal vertices~$y_i^1, \ldots, y_i^{NM}$. We connect the first terminal vertex on each color chain to the first terminal vertex of the other color chains, so that they form a clique of size~$q_d$ called the \emph{palette}. In a proper $q_d$-coloring, each vertex of the palette receives a unique color, and this color repeats on all later terminal vertices on the same chain. To simulate a vertex~$v$ in a list-coloring gadget whose list forbids the use of color~$i$, we will connect~$v$ to a terminal on color chain~$\mathcal{Y}_i$.

The chains constructed so far from the heart of the instance. The remainder of the instance will be formed by path gadgets constructed using Lemma~\ref{lem:coloring:pathgadget}, which will enforce that a proper $q_d$-coloring of the graph encodes a truth assignment that satisfies all clauses. 

For each clause index~$j \in [m]$, we augment the graph to enforce that~$C_j$ is satisfied. The clause~$C_j$ contains~$s' \leq s$ variables $x_{j'_1}, \ldots, x_{j'_{s'}}$, which are spread over~$s'' \leq s'$ different groups $F_{j_1}, \ldots, F_{j_{s''}}$ whose group assignments are encoded by colorings of~$\mathcal{V}_{j_1}, \ldots, \mathcal{V}_{j_{s''}}$. There is exactly one assignment to the variables in~$C_j$ that does \emph{not} satisfy clause~$C_j$. Now do the following. Consider the tuples of colorings~$(f_1 \colon \mathcal{V}_{j_1} \to [q_d]^p, \ldots, f_{s''} \colon \mathcal{V}_{j_{s''}} \to [q_d]^p)$ with the property that (i)~some coloring in this tuple does not correspond to a group assignment, or (ii)~all colorings correspond to group assignments, but none of these group assignments satisfy clause~$C_j$. For each such tuple, construct a $q_d$-list-coloring instance~$(P, \Lambda)$ using Lemma~\ref{lem:coloring:pathgadget} for the forbidden coloring given by (the concatenation of)~$f_1, \ldots, f_{s''}$. To forbid this coloring from appearing in a solution, we could connect the distinguished vertices on~$P$ to the corresponding vertices in the sets~$\mathcal{V}_{j_1}, \ldots, \mathcal{V}_{j_{s''}}$. However, that would blow up the degree of those vertices. Instead, we will use the fact that the chains of cliques force the coloring to repeat, so that instead of connecting the distinguished vertices on~$P$ to~$\mathcal{V}_{j_1}, \ldots, \mathcal{V}_{j_{s''}}$, we may equivalently connect them to later terminal vertices on the same chains. When considering the $r$'th tuple of colorings that fail to satisfy clause~$C_j$, we insert the path~$P$ into the graph and connect its distinguished vertices to the terminals with index~$(j-1) \cdot (q_d)^{p \cdot s} + r$ of the relevant chains. Observe that~$M$ is chosen such that for each clause index, we have fresh terminals to accommodate all of the at most~$(q_d)^{p \cdot s}$ ways in which it can fail to be satisfied. Now we will deal with the fact that~$(P, \Lambda)$ is a list-coloring instance whereas we are embedding it in a normal coloring instance. Let~$p_1, \ldots, p_{|P|}$ be (all) the vertices of the path gadget~$P$ in their natural order along the path, and observe that our choice of~$N$ guarantees that~$|P| \leq N$. For the path gadget~$P$ created for the $r$'th tuple of bad colorings for clause~$C_j$, for each vertex~$p_\ell$ on~$P$, do the following. For each color~$c \in [q_d] \setminus \Lambda(p_\ell)$, i.e.~for each color~$c$ that is not allowed to be used on~$p_\ell$ in the list-coloring instance~$(P, \Lambda)$, connect vertex~$p_\ell$ to the terminal vertex with index~$(j-1) \cdot (q_d)^{p \cdot s} \cdot N + N \cdot (r - 1) + \ell$ on color chain~$\mathcal{Y}_c$. 

This concludes the construction of the graph~$G$.

\begin{claim}
The maximum degree of~$G$ is~$d$.
\end{claim}
\begin{claimproof}
We first consider vertices on chains of $q_d$-cliques. Each such vertex has at most~$2(q_d-1)$ neighbors on the chain. In addition, each vertex on a chain is connected to at most one vertex of a gadget path~$P$ over the course of the entire construction, so its degree is at most~$2(q_d - 1) + 1 = d$.

It remains to bound the degree of vertices on a gadget path~$P$. Such a vertex has at most two neighbors on the path, at most~$q_d - 1$ neighbors on color chains corresponding to at most~$q_d - 1$ colors that are not on its list of allowed colors (no color list is empty), and at most one neighbor on a variable-encoding chain. Hence vertices on gadget paths have degree at most~$q_d + 2 \leq d$.
\end{claimproof}

\begin{claim} \label{claim:deg:lb:sound}
$G$ has a proper $q_d$-coloring if and only if~$\phi$ is satisfiable.
\end{claim}
\begin{claimproof}
($\Rightarrow$) Suppose~$\phi$ is satisfied by some truth assignment~$v \colon [n] \to \{\true,\false\}$. We construct a proper coloring of~$G$ with~$q_d$ colors. For each color chain~$\mathcal{Y}_i$ for~$i \in [q_d]$, give all terminals color~$c$, and for each clique on the chain give the non-terminals in the clique distinct colors in~$[q_d] \setminus \{i\}$. For each group~$F_i$ of variables for~$i \in [t]$, consider the group assignment to~$F_i$ induced by~$v$. The group assignment corresponds to a unique coloring of~$\mathcal{V}_i$ according to the mapping~$g_i$; color the vertices~$\mathcal{V}_i$ accordingly, and repeat this coloring on the remaining terminal vertices. Color the non-terminals of each clique on a variable-encoding chain by distinct colors different from the terminal. It remains to show that for each inserted path gadget~$(P,\Lambda)$, the coloring can be extended to the vertices of~$P$. Since the connections from~$P$ to the color chains match the list requirements of~$\Lambda$, for this it suffices to obtain a list coloring of~$P$ in which no vertex of~$P$ is assigned the same color as a neighbor on a variable-encoding chain. But since the path gadgets were only inserted to block colorings representing group assignments that do \emph{not} satisfy a clause~$C_j$, whereas the group assignments induced by~$f$ do satisfy all clauses, it follows that each such gadget~$(P, \Lambda)$ can be list colored while avoiding the colors of its neighbors on variable paths. Performing this extension separately for all inserted path gadgets yields a proper $q_d$-coloring of~$G$.

($\Leftarrow$) Suppose~$G$ has a proper $q_d$-coloring. Since the palette is a clique of size~$q_d$, it contains each color exactly once. By permuting the color set if needed, we obtain a proper $q_d$-coloring~$f \colon V(G) \to [q_d]$ such that~$y_i$ has color~$i$ for all~$i \in [q_d]$. Now consider the coloring of the sets~$\mathcal{V}_i$ that represent the different variable groups~$F_i$. We claim that the coloring of each~$\mathcal{V}_i$ corresponds to a proper group assignment under~$g_i$. Suppose not; consider a variable~$x_\ell \in F_i$ and a clause~$C_j$ involving~$x_\ell$. Then for all tuples of colorings we considered for the variable groups related to clause~$C_j$ in which the coloring of~$F_i$ matches that of~$f$, we inserted a path gadget~$(P, \Lambda)$ to block the coloring. Since the colors of the palette propagate through the color chains, each path gadget is properly list colored. Since terminal vertices on a variable-encoding chain have the same color as the first vertices~$\mathcal{V}_i$ of the chains, by Lemma~\ref{lem:coloring:pathgadget} it follows that the path gadget cannot be properly list colored while avoiding all colors of its neighbors on variable chains; hence the coloring~$f$ is improper.

It follows that the coloring of each~$\mathcal{V}_i$ corresponds to a proper group assignment. Combining these group assignments into a truth assignment~$v \colon [n] \to \{\true, \false\}$ of the entire formula, we obtain a satisfying assignment for~$\phi$. All clauses are satisfied because any tuple of colorings representing group assignments that fail to satisfy the clause, is blocked by a path gadget.
\end{claimproof}

\begin{claim}
A path decomposition of~$G$ of width~$p t + q_d + \Oh(q_d)^s$ can be constructed in polynomial time.
\end{claim}
\begin{claimproof}
To bound the pathwidth we give a vertex search strategy~\cite{EllisST94,KirousisP86}, which can be used to upper bound the pathwidth. This is a strategy for the following game. The edges of the graph are interpreted as a network of tunnels, which are initially filled with contagious gas. One by one, cleaners can be placed or removed from vertices. An edge is \emph{cleared} when both endpoints are occupied by cleaners. An edge is recontaminated if there is ever a path in the graph from a contaminated to an uncontaminated edge, such that no vertex along this path is occupied by a cleaner. The \emph{vertex search number} of the graph is the smallest number of cleaners needed to clear the entire graph; this number is known to equal the pathwidth of~$G$ plus one~\cite[Theorem 4.1]{KirousisP86}. Hence we provide an upper bound on the pathwidth by giving a cleaning strategy with~$pt + q_d + \Oh(q_d)^s$ cleaners.

The cleaning proceeds in rounds which are indexed by a tuple~$(j, r) \in [m] \times [(q_d)^p]$. The rounds follow each other in increasing lexicographical order. A round~$(j,r)$ starts with cleaners on the terminal vertices with index~$(j-1) \cdot (q_d)^{p \cdot s} + r$ on the~$pt$ variable-encoding chains~$\mathcal{Z}$, and cleaners on the terminal vertices with index~$(j-1) \cdot (q_d)^{p \cdot s} \cdot N + N \cdot (r-1) + 1$ on the~$q_d$ color chains~$\mathcal{Y}$, and no other cleaners. The round proceeds by placing cleaners on all vertices of the path~$P$ that was inserted for the $r$'th bad tuple of colorings of clause~$C_j$, if any. One variable-encoding chain at a time, we then (i)~place~$q_d$ additional searchers on the~$q_d - 1$ non-terminal vertices of the current clique on the chain, and on the next terminal vertex~$v$, and afterward~(ii) remove all cleaners except~$v$ from that chain. Similarly, one color-encoding chain~$\mathcal{Y}_c$ at a time, we clean the next~$N$ cliques on~$\mathcal{Y}_c$ by occupying the non-terminals and next terminal, and then removing cleaners from the previous clique. After having effectively moved the cleaners on the color-chains forward by~$N$ positions, and on variable-encoding chains by one position, we remove the cleaners from~$P$ and are in position to start the next round. It is easy to verify that this constitutes a valid cleaning strategy for~$G$. The bottleneck number of simultaneously active cleaners is formed by~$p t + q_d$ (one cleaner per chain) plus~$q_d$ additional cleaners to clean the next clique on a chain, plus~$N$ cleaners on a path gadget~$P$. Since~$q_d + N \in \Oh(q_{d}^s)$ the number of cleaners, and therefore the pathwidth of~$G$, is indeed bounded as required. It is straight-forward to construct a path decomposition of the corresponding width in polynomial time.
\end{claimproof}

Using these claims we finish the proof similarly as in Lemma 6.4 of the work by Lokshtanov et al~\cite{LokshtanovMS11}. We started from the assumption that \qdColoring can be solved in~$\Oh^*((\lfloor d/2 \rfloor + 1 - \varepsilon)^k) = \Oh^*((q_d - \varepsilon)^k)$ time on graphs of maximum degree~$d$, when given a path decomposition of width~$k$. Let~$\lambda := \log_{q_d}(q_d - \varepsilon) < 1$, such that \qdColoring on graphs of maximum degree~$d$ can be solved in time~$\Oh^*(q_d^{\lambda k})$ for graphs of pathwidth~$k$. Choose a sufficiently large integer~$p$ such that~$\delta' = \lambda \frac{p}{p-1}$ is strictly smaller than one. Claim~\ref{claim:deg:lb:sound} shows that an instance~$\phi$ of \sSAT can be solved by constructing the corresponding graph~$G$ and a path decomposition of width~$pt + q_d + c \cdot (q_d)^s$ for some constant~$c$, and solving \qdColoring on this structure. As we keep~$q_d$ fixed,~$s$ is a constant, and~$p$ depends only on~$\varepsilon$ and~$q_d$, the quantity~$q_d^{p \cdot s}$ is a constant. The graph we construct has size polynomial in~$MN$, which is polynomial in the size of the input formula~$\phi$ since~$(q_d)^{p \cdot s}$ is constant. Hence the construction of~$G$ with its path decomposition, and any terms polynomial in~$|G|$, are polynomial in~$|\phi|$. By our choice of~$p$ we have
\[ \lambda p t = \lambda p \left \lceil \frac{n}{\lfloor p \log q_d \rfloor} \right \rceil \leq \lambda p \frac{n}{\lfloor p \log q_d \rfloor} + \lambda p \leq \lambda p \frac{n}{(p-1) \log q_d} + \lambda p \leq \frac{\delta' n}{\log q_d} +\lambda p, \]
for some~$\delta' < 1$. Hence, running the hypothetical algorithm for \qdColoring on~$G$ takes time
\begin{align*}
\Oh^*((q_d)^{\lambda k}) = 
\Oh^*((q_d)^{\lambda p t + \lambda (q_d + c (q_d)^s)}) = 
\Oh^*((q_d)^{\lambda p t}) = \\
\Oh^*((q_d)^{\frac{\delta' n}{\log q_d} + \lambda p}) = 
\Oh^*((q_d)^{\frac{\delta' n}{\log q_d}}) =
\Oh^*(2^{\delta' n}) = 
\Oh^*((2-\delta)^n)
\end{align*}
for some~$\delta > 0$ that does not depend on~$s$. This shows that for each constant~$s$, the \sSAT problem can be solved in time~$\Oh^*((2-\delta)^n)$, contradicting SETH.
\end{proof}

\end{document}